\definecolor{mygreen}{RGB}{28,172,0} 
\definecolor{mylilas}{RGB}{170,55,241}
\newcommand\blankpage{%
    \null
    \thispagestyle{empty}%
    \addtocounter{page}{-1}%
    \newpage}
\theoremstyle{plain}
\newtheorem{thm}{Theorem}[section]
\newtheorem{lem}[thm]{Lemma}
\theoremstyle{definition}
\newtheorem{defn}{Definition}[section]
\newtheorem{exmp}{Example}[section]
\newtheorem{rem}{Remark}[section]
\date{September 2015}
\begin{document}


\begin{titlepage}

\newcommand{\HRule}{\rule{\linewidth}{0.5mm}} 



\center 


\textsc{\Large Imperial College London}\\[0.5cm] 
\textsc{\large Department of Computing}\\[0.5cm] 


\HRule \\[0.4cm]
{ \huge \bfseries {Distributionally Robust Game Theory}}\\ 
\HRule \\[1.5cm]
 

\begin{minipage}{0.4\textwidth}
\begin{flushleft} \large
\emph{Author:}\\
Nicolas Loizou
\end{flushleft}
\end{minipage}
~
\begin{minipage}{0.4\textwidth}
\begin{flushright} \large
\emph{Supervisors:} \\
Panos Parpas\\
Wolfram Wiesemann
\end{flushright}
\end{minipage}\\[3cm]

\begin{figure}[H]
\centering
\includegraphics[width = 0.4\hsize]{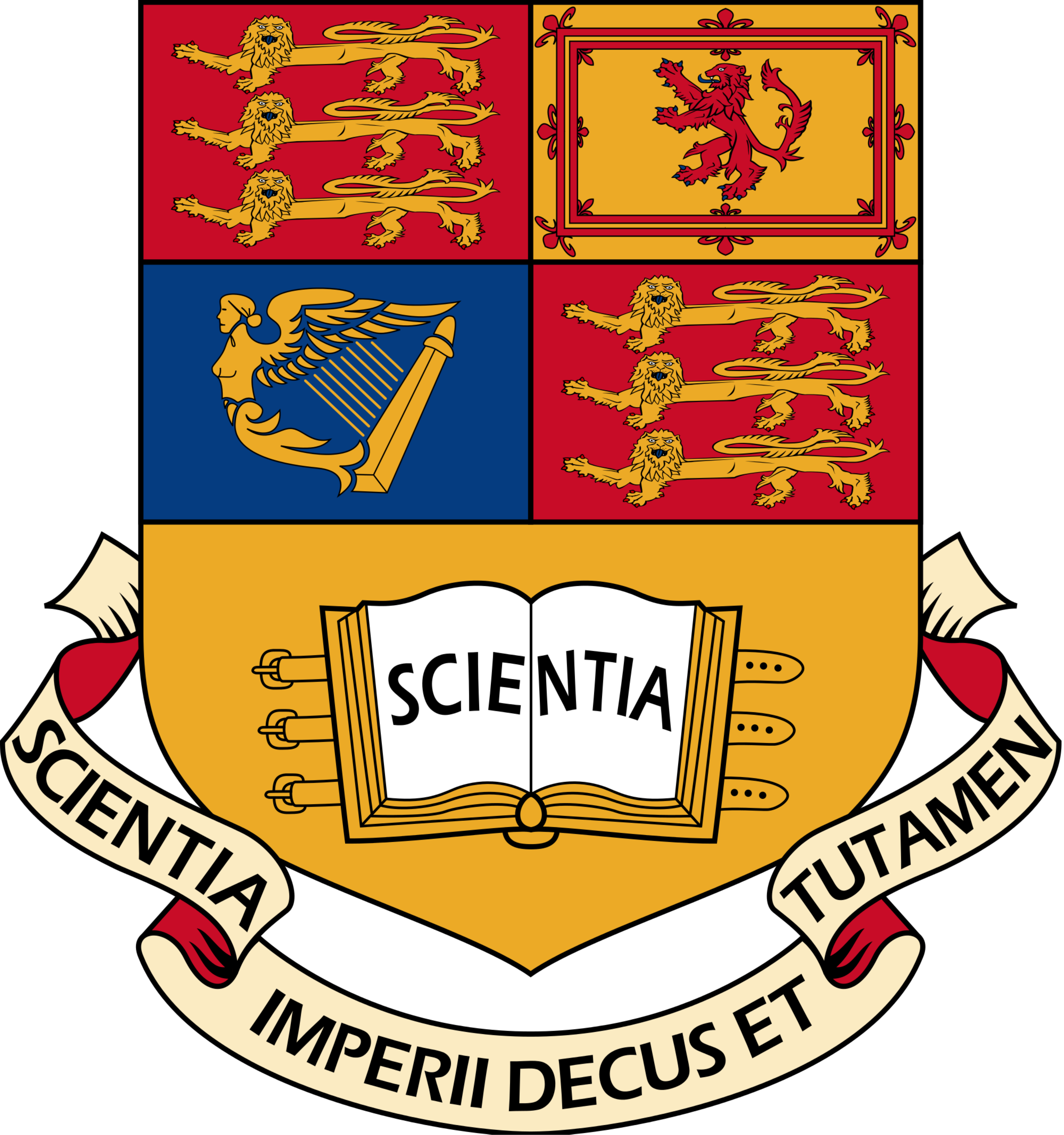}
\label{fig:logo}
\end{figure}
Submitted in partial fulfilment of the requirements for the MSc Degree in Computing(Computational Management Science) of Imperial College London \\ [0.5cm]
\makeatletter
\@date 
\makeatother

\end{titlepage}



  



\afterpage{\blankpage}
\chapter*{Abstract}
The classical, complete-information two-player games assume that the problem data (in particular the payoff matrix) is known exactly by both players. In a now famous result, Nash has shown that any such game has an equilibrium in mixed strategies. This result was later extended to a class of incomplete-information two-player games by Harsanyi, who assumed that the payoff matrix is not known exactly but rather represents a random variable that is governed by a probability distribution known to both players.

In 2006, Bertsimas and Aghassi \cite{aghassi2006robust} proposed a new class of distribution-free two-player games where the payoff matrix is only known to belong to a given uncertainty set. This model relaxes the distributional assumptions of Harsanyi's Bayesian games, and it gives rise to an alternative distribution-free equilibrium concept.

In this thesis we present a new model of incomplete information games without private information in which the players use a distributionally robust optimization approach to cope with the payoff uncertainty. With some specific restrictions, we show that our ``Distributionally Robust Game" constitutes a true generalization of the three aforementioned finite games (Nash games, Bayesian Games and Robust Games). Subsequently, we prove that the set of equilibria of an arbitrary distributionally robust game with specified ambiguity set can be computed as the component-wise projection of the solution set of a multi-linear system of equations and inequalities. Finally, we demonstrate the applicability of our new model of games and highlight its importance.

\afterpage{\blankpage}
\chapter*{Acknowledgements}
First of all, I would like to express my sincere gratitude,to my supervisors Dr. Panos Parpas and Dr. Wolfram Wiesemann for their great guidance and invigorating attitude through out this project. Their continuous advice and creative ideas contributed a great value to this work. \\

Moreover, I would like to thank PhD student, Vladimir Roitch for all the detailed explanations that helped me to deeply understand every aspect of this project.\\

Finally, I would like to thank my parents and my brother George for always believing in me and for their unconditional support throughout the completion of my studies. A special thanks goes also to Katerina for her endless patience and for her continuous encouragement over the past year.

\newpage
\afterpage{\blankpage}
\tableofcontents
\listoffigures
\listoftables

\chapter{Introduction}
Game theory could be considered the field of mathematics and economics that has witnessed the most explosive growth the last century. It is the study of multi-person decision problems, in which the payoff of each person(player) of the game, depends not only on his chosen strategy but also on the strategies of the other players. 

The reason for this explosive growth, is the number of real life applications that could be formulated using game theoretical models. For example, in the area of economics and business, game theory is used for modelling competing behaviours of interacting agents (auctions, bargaining, mergers and acquisitions pricing). Models of game theory could also describe applications in political science in which the players are voters, special interest groups, and politicians. In biology, games have been used for better understanding several phenomena like evolution and animal communication. Furthermore, games play an increasingly important role in logic and computer science (on-line algorithms, algorithmic game theory, algorithmic mechanism design).

The idea of a general theory of games was introduced by John von Neumann and Oskar Morgenstern in their famous book \emph{Theory of Games and Economic Behaviour} (1944,\cite{von2007theory}) in which they presented the first axioms of game theory. Based on this work many type of games have been proposed in the last decades.

The classical, complete-information finite games assume that the problem data (in particular the payoff matrix) is known exactly by all players. In a now famous result(1950, \cite{nash1950equilibrium}, \cite{nash1951non}), Nash has shown that any such game has an equilibrium in mixed strategies. More specifically, in his formulation Nash assumed that all players are rational and that all parameters(including payoff functions) of the game are common knowledge. With these two assumptions, the players can predict the outcome of the game. For this reason each player given the other players strategies is in a position to choose the mixed strategy that gives him the maximum profit. A tuple of these strategies is what we call ``Nash Equilibrium". 

The existence of equilibria in mixed strategies was later extended to a class of incomplete information finite games by Harsanyi (1967-1968), who assumed that the payoff matrix is not known exactly but rather represents a random variable that is governed by a probability distribution known to all players. In particular, Harsanyi assumed that a full prior distributional information for all parameters of the game is available and that all players use this information in order to compute the payoff functions of the game. This computation is made using the Bayes' rule. For this reason these games are called ``Bayesian Games" and their equilibrium ``Bayesian Nash Equilibrium".

In 2006, Bertsimas and Aghassi \cite{aghassi2006robust} proposed a new class of distribution-free finite games where the payoff matrix is only known to belong to a given uncertainty set. This model relaxes the distributional assumptions of Harsanyi's Bayesian games, and it gives rise to an alternative distribution-free equilibrium concept. Furthermore, in this model of games the players use a robust optimization approach to the uncertainty and this assume to be a common knowledge. That is, given the other players strategies each player tries to maximise his worst case expected payoff (worst case is taken with respect to the uncertainty set). The using of the robust optimization approach is the reason for calling these ``Robust Games" and their equilibrium ``Robust Optimization Equilibrium".

In this thesis we present for the first time in the literature a new model of incomplete information games without private information in which the players use a distributionally robust optimization approach to cope with payoff uncertainty. In our model players only have partial information about the probability distribution of the uncertain payoff matrix. This information is expressed through a commonly known ambiguity set of all distributions that are consistent with the known distributional properties. Similar to the robust games framework, players in distributionally robust games adopt a worst case approach. Only now the worst case is computed over all probability distributions within the ambiguity set. More specifically we use a worst case CVaR(Conditional Value at Risk) approach. This allows players to have several risk attitudes which make our model even more coveted since in real life applications players rarely are risk neutral.

We also show that under specific assumptions about the ambiguity set and the values of risk levels, Distributionally Robust Game constitutes a true generalization of the three aforementioned finite games (Nash games, Bayesian Games and Robust Games).
Subsequently, we prove that the set of equilibria of an arbitrary distributionally robust game with specified ambiguity set and without private information can be computed as the component-wise projection of the solution set of a multi-linear system of equations and inequalities. 
Finally, we demonstrate the applicability of Distributionally Robust Games to real life problems and highlight the importance of our model.\\
\section{Structure and Contributions of the Thesis}

The contributions of this thesis are structured as follows: 
\begin{itemize}
\item In Chapter 2 we present the mathematical background that one needs to understand every aspect of this work. We introduce the basic concepts of mathematical programming, risk measures and game theory. 
\item Following that, in Chapter 3 we develop the basic theory of Robust Games and make a small extension of the special class of these games. In the last section of this chapter, three different methods for approximately computing sample equilibria of such games are implemented.
\item The main contributions of our work are developed in Chapter 4. Here, we introduce and analyse the new model of Distributionally Robust Games. We begin with the formulation of the new model and show that any other existing finite game can be expressed as a distributionally robust game with specific ambiguity set and values of the risk levels. In addition we prove the equivalence of the set of equilibria of a distributionally robust game with the component-wise projection of the solution set of multi-linear system of equation and inequalities. For special cases of such games we also show equivalence to complete information finite games (Nash Games) with the same number of players and same action spaces. Finally to concretize the idea of a distributionallly robust game we present two examples.
\item In Chapter 5 we present numerical results for 2-player Distributionally Robust Games. These are the Distributionally robust Free Rider Game and the Distributionally Robust Inspection Game. Our main objective is to study how the players' payments change with small changes of the unknown parameters.
\item This project is concluded in Chapter 6, where we provide a brief summary of what we have done and the possibilities of future research in this area.
\end{itemize}

\newpage
\section{Notation}

The following notations are used in the sequel:\\
\begin{itemize}
\item $\mathbb{R}$ denotes the set of real numbers
\item Boldface upper case letters will denote matrices (e.g., $\bm{P}$)
\item Boldface lower case letters will denote vectors(e.g., $\bm{x}$)
\item $\tilde{\cdot}$ means that the input parameter ($\cdot$), which can be either scalar, vector, or matrix, is subject to uncertainty
\item  $\check{\cdot}$ denotes the nominal counterpart of the uncertain coefficient $\tilde{\cdot}$
\item $[\cdot]^+= \max\{\cdot,0\} $ denotes the maximum value of $\cdot$ and zero. 
\item $vec(\bm{A})$ denotes the column vector obtained by stacking the row vectors of the matrix $\bm{A}$ one on top of the other.
\end{itemize}

\textbf{General Hypothesis:}\\
In order to avoid repeated gender distinctions (his or her) in the following thesis we present all players as male decision makers.

\chapter{Mathematical Background}
In this chapter we provide the necessary mathematical background required for the rest of the
thesis. We first present an overview of unconstrained optimization, continue with optimization under uncertainty, discuss about risk measures and finish with the basic concepts of Game Theory. 

\section{Unconstrained Optimization}
\label{unconOptimization}

In this section we develop the basic theory that we will need for the experimentation part of section  ~\eqref{dsa}. 
More specifically in this thesis we will focus on solving \textbf{ unconstrained optimization} problem of the following form:
 \begin{equation}
 \underset{\bm{x} \in {\mathbb{R}}^n}{\operatorname{min}} f(\bm{x}) 
 \end{equation}
 The most common way to solve this kind of problem is to denote a starting point $\bm{x_0} \in {\mathbb{R}}^n$ and then find a sequence of points $\bm{x_0,x_1,....x_n..}$ where $ f(\bm{x_{k+1}}) \leq f(\bm{x_k})$. The goal in this method is to find a point $\bm{x_k}$ in the sequence that satisfied $\nabla f(\bm{x_k})=0$. This point is the desired solution, the point that minimize the function $f(\bm{x})$.
 In practice find a point with this property is very rare so we assume that our sequence converge to the minimum if one of the following constraints is satisfied:
 \begin{itemize}
  \item $\parallel \nabla f(\bm{x}) \parallel \leq \varepsilon_1 $ 
  \item $\parallel \bm{x_{k+1}-x_k} \parallel \leq \varepsilon_2 $
  \item $\parallel  f(\bm{x_{k+1}})-f(\bm{x_k}) \parallel \leq \varepsilon_3$ 
  \end{itemize}
  where $ \varepsilon_i$ for $i= 1,2,3$ are tolerance parameters.\\
  
  There are two fundamental strategies for moving from the current point $\bm{x_k}$ to the new iterate $\bm{x_{k+1}}$ of the sequence. These are \emph{line search method} and \emph{trust region method}. In this thesis we will develop only different approaches of the first method. An interesting reader can refer to \cite{wright1999numerical}, \cite{chong2013introduction} \cite{bertsekas1999nonlinear} for more details about optimization algorithms.
  
  In the \textbf{line search strategy} the algorithm chooses a direction $d_k$ and searches along this, from the current iterate $x_k$ for a new iterate that give lower value to the function $f(x)$.The distance that algorithm uses to move along $d_k$ it depends on the step size $a_k$ that each algorithm use. 
  
  Consider this, we can understand that each new iteration is described by:
  \begin{equation}
   \bm{x_{k+1}} = \bm{x_k} + a_k \bm{d_k}
\end{equation}  
   
More of the line search algorithms required the direction to be descent, that is to has $$\nabla f(\bm{x_k})^ \tau \bm{d_k} \leq 0$$.

In general the direction $\bm{d_k}$ has the following form:
\begin{equation}
   \bm{d_k} = - {\bm{(A_k)}}^{-1} \nabla f(\bm{x_k})
\end{equation}
  where the matrix $\bm{A_k}$ is symmetric and invertible, and take different values depending on the line search algorithm that we use. In the steepest descent algorithm $\bm{A_k}$ is the identity matrix and in the Newton method is the Hessian matrix of the requested function. Finally in the Quasi Newton method $\bm{A_k}$ is an approximation of the Hessian matrix and is updated in every iteration. \\
  
\subsubsection{Selection of Step size $a_k$ }
There are many ways to calculate the step-size $a_k$ in each iteration. Here we develop the two that we use in this thesis. For more details about the most widely in practice rules for choosing step-size we suggest \cite{bertsekas1999nonlinear}.
\begin{enumerate}
\item Exact Line Search:  
\begin{equation}
\bm{a_k} \in \underset{\bm{a_k}}{\operatorname{argmin}} f(\bm{x_k} + a_k \bm{d_k})
\end{equation}
In this rule $\bm{a_k}$ is chosen to be the value which minimize the $f(\bm{x_{k+1}})$ in each iteration.
\item Armijo Rule: In this rule the step size $s$ is chosen (most of the times $s=1$) and then check if $f(\bm{x_k} + s \bm{d_k}) \leq f(\bm{x_k})$. If this is not the case $s$ is reduced repentantly until the value of $f$ at $\bm{x_{k+1}}$ is less than $f(\bm{x_k})$. The formula for this is:
\begin{equation}
\label{armijo}
f(\bm{x_k}) - f(\bm{x_k} + \beta^m s \bm{d_k}) \geq -\sigma \beta^m s \nabla f(\bm{x_k})' \bm{d_k}
\end{equation}
where $s, \beta, \sigma$ are fixed positive scalars with $\beta \in (0,1)$ and $\sigma \in (0,1)$ and the step size $a_k =  \beta^m s$ where $m$ is the first no-negative integer $(m=0,1,2,3,...)$ which satisfies the equation ~\eqref{armijo}.\\

\end{enumerate}
  
 In the rest of this section the three basic methods of line search strategy that we have already mentioned are developed. 

\subsection{Steepest Descent Method}
One of the basic and more simple in concept algorithm for solving an unconstrained optimization problem is the Steepest Descent.\\
In this algorithm as we understand from the previous analysis the transition to the next point is equal to:
\begin{equation}
 \bm{x_{k+1}} = \bm{x_k} - a_k \nabla f(\bm{x_k})^\tau 
 \end{equation}

where $a_k$ and $d_k = - \nabla f(\bm{x_k})^\tau$ denote the step size and the descent direction respectively. 
For this choice of direction the method is called steepest since $- \nabla f(\bm{x})$is the direction of the greatest decrease.\\

\textbf{Explanation:} We know by definition that the rate of increase f along direction d is:
\begin{equation}
<\nabla f(\bm{x}), \bm{d} > = \nabla f(\bm{x_k})^\tau \bm{d} \quad where \|d\|=1
\end{equation}
By Cauchy-Shwarz inequality,
 \begin{equation}
 <\nabla f(\bm{x}), \bm{d} > \leq \|\nabla f(x)\| \|d\| = \|\nabla f(x)\|
 \end{equation}
 Rate of increase is not possible to be greater than $\|\nabla f(\bm{x})\|$, so if we choose as direction the $d_k=\nabla f(\bm{x})$ we achieve the greatest increase. Therefore when $d_k = - \nabla f(x)$ we obtain the greatest decrease.\\
 
\emph{ \underline{Advantages of Steepest descent:}}
\begin{itemize}
 \item Easy to implement
 \item Only requires first order information
 \end{itemize}
 
\emph{ \underline{Main disadvantage:}}\\
Steepest Descent is one of the \textbf{most slower methods} for solving unconstrained problems. The convergence is very slow.

\subsection{Newton-Raphson Method}
To overcome the slower convergence of steepest descent method a new algorithm which is more effective and faster was created. The Newton-Raphson method used except the first derivative and the second one and indeed perform better of the steepest descent if the starting point $\bm{x_0}$ is close to the minimizer.
The main idea behind the formula that this algorithm used for the transition to the next point is the following:\\

 Compute a quadratic approximation of $f(\bm{x})$.\\
\begin{equation}
f(\bm{x})\approx f(\bm{x_k}) + \nabla f(\bm{x_k})^\tau (\bm{x-x_k}) + \frac{1}{2} (\bm{x-x_k})^\tau \nabla^2 f(\bm{x_k}) (\bm{x-x_k})\triangleq q(\bm{x})
\end{equation}

Applying the First Order Necessary Condition (FONC) to $q(\bm{x})$ yields:\\
\begin{equation}
 0= \nabla q(\bm{x})= \nabla f(\bm{x_k}) + \nabla^2 f(\bm{x_k}) (\bm{x-x_k}) 
\end{equation}

Thus if we assume that the matrix $\nabla^2 f(\bm{x_k})$ is positive definite we obtain the following formula for the transition to the next point of the sequence
\begin{equation}
\label{tran}
 \bm{x_{k+1}} = \bm{x_k} - a_k \nabla^2 f(\bm{x_k})^{-1} \nabla f(\bm{x_k}) 
 \end{equation} \\
By equation ~\eqref{tran} can deduce, that the Newton-Raphson is a descent algorithm with a descent direction given by:\\
\begin{equation}
 \bm{d_k} = - \nabla^2 f(\bm{x_k})^{-1} \nabla f(\bm{x_k}) 
 \end{equation}\\

\emph{\underline{Disadvantages of Newton-Raphson method:}}
\begin{itemize}
\item The method is not guaranteed to converge from any starting point $x_0$(usually only locally convergence can be guaranteed)
\item In regions where the function f is linear the method could break down because the inverse of the Hessian matrix  $\nabla^2 f(x_k)^{-1}$ may fail to exists
\item The computation of the exact value of the Hessian matrix in each iteration could be time consuming.
\end{itemize}

\subsection{Quasi Newton Methods, BFGS}

All these drawbacks of the previous methods were eliminated when a new approach for solving the unconstrained optimization problem were proposed, the Quasi Newton methods. These algorithms are globally convergent and they not use the Hessian matrix but an approximation of this so they are also more computational beneficial compare to the simple newton method. \\
The general formula that all quasi newton methods use for find the next iteration in the sequence is:
\begin{equation}
 \bm{x_{k+1}} = \bm{x_k} - a_k \bm{H_k} \nabla f(\bm{x_k}) 
 \end{equation}
 where $\bm{H_k}$ is the positive definite approximation matrix of the Hessian matrix of the function $f(\bm{x})$.\\
 
 The most popular quasi newton algorithms are the DFP algorithm and the BFGS algorithm. The second one suggested independently in 1970 by Broyden \cite{broyden1970convergence}, Fletcher \cite{fletcher1970new}, Goldfarb \cite{goldfarb1970family} and Shanno \cite{shanno1970conditioning} and this one we used in our implementation in section ~\eqref{dsa}.\\
 
The steps of the BFGS Algorithm are the following:
\begin{enumerate}
\item First we choose the value of the starting point $\bm{x_0}$ and a real symmetric positive definite matrix $\bm{H_0}$ 
\item if $\nabla f(\bm{x_k}) =0 $ stop otherwise $\nabla f(\bm{x_k}) = - \bm{H_k} \nabla f(\bm{x_k})$
\item Calulate the \begin{equation}
a_k = \underset{a \geq 0}{\operatorname{argmax}} f(\bm{x_k} +a \bm{d_k})
\end{equation}
\begin{equation}
\bm{x_{k+1}} = \bm{x_k} + a_k \bm{d_k}
\end{equation}
\item Estimate the \begin{equation}
\Delta \bm{x_k}= \bm{x_{k+1}} - \bm{x_k} 
\end{equation}
\begin{equation}
\Delta (\nabla f(\bm{x_k}))= \nabla f(\bm{x_{k+1}}) - \nabla f(\bm{x_k})
\end{equation}
\item Calulate the \begin{equation}
\begin{split}
\bm{H_{k+1}}= \bm{H_k} + ( 1 + \frac{\Delta (\nabla f(\bm{x_k}))^\tau \bm{H_k} \Delta (\nabla f(\bm{x_k}))}{\Delta (\nabla f(\bm{x_k}))^\tau \Delta \bm{x_k} } ) \frac{\Delta \bm{x_k} \Delta {\bm{x_k}}^\tau}{\Delta {\bm{x_k}}^\tau \Delta (\nabla f(\bm{x_k})) }\\ - \frac{\bm{H_k} \Delta (\nabla f(\bm{x_k}))\Delta {\bm{x_k}}^\tau + (\bm{H_k} \Delta (\nabla f(\bm{x_k}))\Delta {\bm{x_k}}^\tau)^\tau }{\Delta (\nabla f(\bm{x_k}))^\tau \Delta \bm{x_k}}
\end{split}
\end{equation}
\item set k=k+1 and go back to step 2.
\end{enumerate}

\section{Optimization Under Uncertainty}
\label{distniel}
Organisations frequently need to take decisions based on uncertain or incomplete information (e.g., about future customer demands, raw material prices or exchange rates). Failure to take this uncertain parameters into consideration may lead to suboptimal unwanted decisions. In this section we present two of the most widely used approaches which address practical optimization problems affected by uncertainty. These are the Robust Optimization approach and the Distributionally Robust Optimization approach.\\

\subsection{ Robust Optimization}
Not surprisingly, decision-making under uncertainty has a long and distinguished history in Operations Research. Until now in many  research papers decision problems under uncertainty are solved by modelling uncertain data as random variables and then analysing and  discretising the outcomes of these random variables. This way, however, has several disadvantages. Firstly,  the probability distribution governing the random variables is typically unknown and has to be estimated from historical observations. Moreover the aforementioned discretisation implies that the computation times grow exponentially with problem size. The above-mentioned two drawbacks  have been a major impediment to the applicability of quantitative approaches to decision-making under uncertainty.

Nowadays,  both shortcomings have been addressed by a novel methodology termed \textbf{robust optimisation}. With different techniques robust optimisation avoids the aforementioned curse of dimensionality. Hence, this approach seems to be ideally suited for practical decision problems that are large-scale and subject to significant uncertainty.\\

To grasp the main concept of robust optimization approach for solving problems with parameters under uncertainty, let us consider the following mathematical optimization problem:

\begin{equation}
\label{nominal}
\begin{array}{l@{\quad}l@{\qquad}l}
\displaystyle \underset{{\bm{x}} \in \mathbb{R}^n }{\operatorname{minimize}} & f_0(\bm{x},\bm{u}) \\
\displaystyle \text{subject to} & \displaystyle f_i(\bm{x,u})  \leq 0 \quad i \in \{1,2,..m\} \\
& \displaystyle \bm{x} \in \mathcal{X}.
\end{array}
\end{equation}\\

In this problem $\bm{x} \in {\mathbb{R}}^n$ denotes the vector of decision variables. It is the vector that we must compute, with respect the m constraints $f_i(\bm{x,u})$, in order to find the minimum value of the objective function $f_0(\bm{x})$. The functions  $f_0$ and $f_i$ are $f: \mathbb{R}_n\longrightarrow \mathbb{R} $ and vector $\bm{u} \in {\mathbb{R}}^t$ take specific known value (it is fixed). Note that both objective function $f_0$ and the constraints function $f_i$ depend on the fixed vector $\bm{u}$.\\

Now, lets assume that $\bm{\tilde{u}} \in {\mathbb{R}}^t$ is a random vector. With this assumption we  obtain the following robust counterpart of the previous nominal problem.

\begin{equation}
\label{robustcounterpart}
\begin{array}{l@{\quad}l@{\qquad}l}
\displaystyle \underset{{\bm{x}} \in \mathbb{R}^n }{\operatorname{minimize}} & \underset{ u \in U }{\operatorname{max}} \,\, f_0(\bm{x},\bm{u}) \\
\displaystyle \text{subject to} & \displaystyle f_i(\bm{x,u}) \leq 0, \quad \forall \bm{u} \in U \subseteq {\mathbb{R}}^t \\
& \displaystyle \bm{x} \in \mathcal{X}.
\end{array}
\end{equation}\\

Our goal in the robust counterpart, is to compute the vector of decision variables $\bm{x}$ that minimise the objective function $\underset{ u \in U }{\operatorname{max}} f_0(\bm{x,u})$ which represents the worst case scenario. In this case, we must also consider all possible constraints that the disturbance of $\bm{\tilde{u}}$ could create.Under this approach we must be willing to accept a suboptimal solution to our problem in order to make sure that  the solution remains feasible for all nominal problems. \\

At this point, it is worth noting the following two remarks about the robust counterpart:

\begin{rem}
We could assume without loss of generality that the objective function is not affected by the uncertainty of the parameters.That happen because we can always reformulate the optimisation problem in such a way that makes the objective function uncertainty-free. (See Berstimas-Caramanis \cite{bertsimascaramanis2011theory}).
\end{rem}
  
\begin{rem}
It is obvious that the case of constraints without uncertainty is subsumed in the robust counterparts by assuming that the uncertainty set U is singleton.
\end{rem}

\subsubsection{Tractability of Robust Counterpart}

With the introduction of the potential huge number of constraints in the robust counterpart of a nominal problem, one can think that this kind of modelling is intractable. In general, this is true, but in this thesis and in the most papers of the literature about robust optimization we deal with tractable problems\footnote{In computational complexity theory, tractable problem is the problem that can be solved in polynomial time}. We achieved tractability by using specifying classes of $f_i$ and specific uncertainty sets $U_i$.

The main categories of tractable robust optimization problems are robust linear optimization, robust quadratic optimization, robust semi-definite optimization and robust discrete optimization.
In this thesis we focus on robust linear optimization problems and more specifically on the case that the uncertainty sets are bounded polyhedral.\\

When we refer to Robust Linear Optimization we mean the problem which the robust counterpart of linear optimization problem is:

\begin{equation}
\label{robustlinearoptimi}
\begin{array}{l@{\quad}l@{\qquad}l}
\displaystyle \underset{{\bm{x}} \in \mathbb{R}^n }{\operatorname{minimize}} & {\bm{c}}^{\tau}\bm{x} \\
\displaystyle \text{subject to} & \displaystyle \bm{Ax} \leq  \bm{b} \quad \forall  \bm{a_i} \in U \quad i = 1, \ldots, m.
\end{array}
\end{equation}\\

Where $\bm{A} \in \mathbb{R}^{m \times n}$ is the uncertainty coefficient matrix and $\bm{a_i}$ are the $i$ row of the matrix A.\\

Furthermore, when we specify even more to robust linear optimization with bounded polyhedral uncertainty set we assume that the uncertainty set U (following the notation from \cite{bertsimas2004robust}) has the following form:\\

\begin{equation}
\label{polyset}
 U= \{\bm{A} | \bm{F} vec(\bm{A}) \leq \bm{d}\}
\end{equation}

where $\bm{F} \in \mathbb{R}^{1 \times mn}$ and $vec(\bm{A}) \in \mathbb{R}^{mn \times 1}$ is the column vector obtained by stacking the row vectors of the matrix $\bm{A}$ one on top of the other.\\

From the first approach of Soyster \cite{soyster1973technical} in 1973 the robust optimization has been drawing a lot of attention from researchers mainly because of its tractability. For developing a more sophisticated understanding of the robust optimisation approach the reader may refer to Ben-Tal and Nemirovski\cite{bennamiro1998robust}, \cite{bennamiro1999robust} , El Ghaoui \cite{elGhaoui1997robust},\cite{elGhaoui1998robust},Bertsimas and Sim \cite{bertsimas2003robust}, \cite{bertsimas2004price}, Bertsimas Pachamonova and Sim \cite{bertsimas2004robust}. Finally for a thorough discussion, we suggest the \cite{ben2009robust} of Ben-Tal, El Ghaoui and Nemirovski, which gives an excellent and stimulating account of the classical and advanced results in this field.

\subsection{Distrinutionally Robust Optimization}

In robust optimization approach we mentioned that the decision maker is not aware or choose to ignore any probability distribution of the random vector $\bm{\tilde{u}}$. He make a decision based on the worst case scenario, of all possible values of the uncertainty vector $\bm{\tilde{u}}$, that belong in an uncertainty set. \\
The distributionally robust optimization approach is closely related to robust optimization framework but now the decision maker instead of uncertainty set has partial information about the probability distribution Q of the random vector. With this approach one does not know the true probability distribution of the random variable of interest, but instead, attempts to create bounds within which the true distribution should lie in, based on available data. Then, to guard against the uncertainty of where the true distribution lies within these bounds, one makes decisions that are optimal with respect to the worst distribution in this set. \\
Let $\mathcal{F}$ denote the ambiguity set of all distributions Q that satisfy the known distributional properties.
Then the distributionally robust counterpart of the nominal problem ~\eqref{nominal} is the following:

\begin{equation}
\label{distrirobustcounterpart}
\begin{array}{l@{\quad}l@{\qquad}l}
\displaystyle \underset{{\bm{x}} \in \mathbb{R}^n }{\operatorname{minimize}} & \underset{ Q \in \mathcal{F} }{\operatorname{sup}} \,\,\mathbb{E}_Q [f_0(\bm{x},\bm{\tilde{u}})] \\
\displaystyle \text{subject to} & \displaystyle Q[f_i(\bm{x,\tilde{u}}) \leq 0] \geq \varepsilon \quad \forall Q \in \mathcal{F}\\
& \displaystyle \bm{x} \in \mathcal{X}.
\end{array}
\end{equation}\\

The objective function $\underset{ Q \in \mathcal{F} }{\operatorname{sup}} \,\,\mathbb{E}_Q [f_0(\bm{x},\bm{\tilde{u}})]$ of the distributionally robust counterpart represents the worst case scenario of the ambiguity set. In this problem the decision maker try to minimise the expected value of function $[f_0(\bm{x},\bm{\tilde{u}})]$ of the worst case distribution. Furthermore, in the distributionally robust problem the uncertain constraints must be satisfied with high probability. The $\varepsilon$ in the formulation denotes the risk factor of this probability and it is obvious that as the value of $\varepsilon$ decrease then the constraint has to be satisfied with lower probability. ( $\varepsilon \in (0,1)$)\\ 

The distributionally robust optimization, through the freedom of choosing the right dis-utility functions $f_i$ captures the risk attitude of the decision maker. In addition, through the examination of all distributions of the ambiguity set in order to find the worst distribution, distributionally robust approach expresses an aversion towards ambiguity( for details see \cite{wiesemann2014distributionally}). These are the two main advantages that make the distributionallly robust optimization one of the main tools that the operational research analysts use to overcome the noisy and incomplete data. \\ 

For a thorough discussion on distributionally robust optimization the reader may refer to \cite{scarf1958min}, \cite{gilboa1989maxmin}, \cite{bertsimas2010models},\cite{goh2010distributionally} \cite{delage2010distributionally} and \cite{wiesemann2014distributionally} which are considered important milestones of the area.

\section{Risk Measures}
In the area of financial mathematics and financial risk management, risk measures can be divided into two main categories: the moment-based and the quantile based risk measures (see \cite{natarajan2009constructing}). One can understand this separation in these two categories by thinking that the first one is related to classical utility theory and that the second has created as a need after the continuously development theory of stochastic dominance theory. In this thesis we will not use moment-based risk measures. For familiarity with this kind of measures, and more specifically with the Mean-Variance which is the most popular measure of this category we suggest the reading of the pioneering work of Markowitz \cite{markowitz1952portfolio}.\\

In this section we describe the two most widely used quantile risk measures: the Value at Risk(VaR) and the Conditional Value at Risk(CVaR).

\subsection{Value At Risk}
The most widely used quantile based risk measure is the Value at Risk \cite{jorion1997value}. \\
$\text{VaR}_\varepsilon$ is defined as the $(1 - \varepsilon)$-quantile of the loss distribution L (typically $ \varepsilon $ is equal to $1\%$ or $5\%$). That is, $\text{VaR}_\varepsilon$ is the smallest number $x$ for which the probability that loss distribution $L$ exceeds $x$ is not larger than $\varepsilon$:\\

\begin{equation}
\text{VaR}_\varepsilon(L) = \inf\{\bm{x} \in \mathbb{R} : \mathbb{P}(L \geq \bm{x}) \leq \varepsilon\}
\end{equation}

To get more intuitively understanding of what Value at Risk represents and how we can estimate its value lets consider the following example:

\begin{exmp}(from \cite{hull2012risk})\\
Suppose that the outcomes of a project are between a gain of $\$50$ million and a loss of $\$50$ million and all are equally considered. For this reason the distribution of loss is a uniform distribution from $-\$50$ million to $+\$50$ million. We want to find the value for which there is $1\%$ chance to loss greater than $\$49$ million. This is equivalent to the $VaR_1$ from the above definition. Therefore is equivalent to $\$49$ million.
\end{exmp}

\subsection{Conditional Value at Risk}
\label{CVARDEFIni}
Another quantile-based measure of risk is the Conditional Value at Risk which firstly introduce by Rockafellar and Uryasev (see \cite{rockafellar2000optimization} and \cite{rockafellar2002conditional}). It has been gaining popularity due to its attractive  computational properties and to its very desirable property that is a coherent risk measure (see \cite{artzner2002coherent}).\\
$\text{CVaR}_\varepsilon$  of a loss distribution $L$ is the expected value of all losses that exceed $(1 - \varepsilon)$-quantile of the distribution. As we mention before this  $(1 - \varepsilon)$-quantile is the $\text{VaR}_\varepsilon$ of the loss distribution.\\

Conditional Value at Risk can be formalized as:\\
\begin{equation} 
Q\text{-CVaR}_\varepsilon (L) = \min_{\zeta \in \mathbb{R}} \; \zeta + \frac{1}{\varepsilon} \mathbb{E}_Q [ L - \zeta]^+ 
\end{equation}
where $[x]^+ = \max\{x,0\}$.\\

\textbf{More Important Properties of CVar:}
\begin{itemize}
\item $\text{CVaR}_\varepsilon$ of a loss distribution increases as $\varepsilon$ decreases and vice versa.
\item For $\varepsilon = 1$ the Conditional Value at Risk is equivalent to the expected value of the loss distribution.
\item It can be shown that $\text{CVaR}_\varepsilon(L)$ is always greater than the $\text{VaR}_\varepsilon(L)$. For this reason CVaR is often used as a conservative approximation of VaR(see \cite{rockafellar2002conditional}) 
\item $\text{CVaR}_\varepsilon$ is a coherent risk measure \cite{artzner2002coherent}.
\end{itemize} 

The new model of Games that we propose in this thesis is very relevant with the concept of the CVaR. Hence, the CVaR risk measure will be considerably used , throughout this thesis.\\

\section{Game Theory}
\label{gametheory}

Game theory could be considered the field of mathematics and economics that has witnessed the most explosive growth the last century. It is the study of multi-person decision problems, in which the payoff of each person(player) of the game, depends not only on his chosen strategy but also on the strategies of the other players. 

The reason of this explosive growth, is the number of real life applications that could be formulated using game theoretical models. For example, in the area of economics and business, game theory is used for modelling competing behaviours of interacting agents (auctions, bargaining, mergers and acquisitions pricing). Models of game theory could also describe applications in political science in which the players are voters, special interest groups, and politicians. In biology, games have been used for better understanding of several phenomena like evolution and animal communication. Furthermore, games play an increasingly important role in logic and computer science (on-line algorithms, algorithmic game theory, algorithmic mechanism design).

Depending on the problem that we have to solve, several types of games have been proposed. Each of this type can described by specific characteristics. For example if the players can communicate and they could make agreements about their chosen strategies the game called \emph{cooperative}. Contrary, in \emph{non-cooperative games}, this communication is not possible. In particular, in this thesis we focus on non-cooperative, simultaneous-move, one shot, finite games with complete information or incomplete information. ``Simultaneous-move" means that the strategies that adopted by the players are chosen simultaneously. That is, the players are not aware of their opponents' strategies before choose their own strategies. ``One shot" refers to the fact that the game is played only one time. Finally, ``Finite" denotes that the number of players and the number of their  possible actions are finite. 

In this section we present the basic theory of finite game with complete(Nash Games) and finite games with incomplete information(Bayesian Games). For more information about these two categories of games and their many applications the reader may refer to \cite{gibbons1992primer}, \cite{fudenberg1991game} and \cite{osborne1994course}.

\subsection{Finite Games with Complete Information}
``Complete Information" refers to the fact that all parameters of the game including individual players' payoff functions are common knowledge. Thus, all players know exactly the payoffs that the other players will receive for any combinations of strategies.\\
More specifically, in this section we focus in how we can present this type of games (Normal Form Representation) and how we can solve the resulting game-theoretic problem (Dominated Strategies, Nash Equilibrium).\\
 
\subsubsection{Games in Normal Form}

Let us begin with a standard representation of a game, which is known as a normal form representation. This is the most conveniently handled way to represent the complete information games and through this we could later identify the dominated strategies and Nash equilibrium.
To define the normal-form representation of a game, we need to specify the number of players, the available strategies of each player, and the players’ payoffs.\\
  
  \begin{defn}
  \emph{(Normal Form Representation):}\\
  \label{normal}
In static games of complete information, a normal-form representation is a specification of players' strategy spaces and payoff functions. In particular, a game G is considered to be in a normal form representation if it consists of $N={1,2,....,n}$ players, one set $S_i$ for each player $i$ which denote his strategy space ($\bm{s_i}$ denote an arbitrary member of this set) and from one function $u_i(\bm{s_1,s_2,...,s_n})$ which represent the payoff of player $i$ if the players play the strategies $\bm{s_1,s_2,...,s_n} $. We denote this game by $G= \{S_1,S_2,....S_n;u_1,u_2,....u_n\}$.\\
\\
\end{defn}

Normal form games are often represented and are more understandable by their matrix form (a table). 
This can be clarified by the following example.
\\

\begin{exmp}
  \emph{(Normal Form Representation):}
  \label{normal} 

\begin{table}[h]
\centering
\caption{A normal form Game}
\label{bla}
\begin{tikzpicture}[element/.style={minimum width=1.75cm,minimum height=0.85cm}]
\matrix (m) [matrix of nodes,nodes={element},column sep=-\pgflinewidth, row sep=-\pgflinewidth,]{
         & Left  & Right  \\
Left & |[draw]|(5,3) & |[draw]|(4,6) \\
Right & |[draw]|(2,2) & |[draw]|(1,2) \\
};

\node[above=0.25cm] at ($(m-1-2)!0.5!(m-1-3)$){\textbf{Player 2}};
\node[rotate=90] at ($(m-2-1)!0.5!(m-3-1)+(-1.25,0)$){\textbf{Player 1}};
\end{tikzpicture}
\end{table}

This matrix is a normal-form representation of a game in which players move simultaneously (or at least do not observe the other player's move before making their own) and receive the payoffs as specified for the combinations of actions played. For example, if player 1 plays Left and player 2 plays Right, player 1 receives 4 and player 2 receives 6. In each cell, the first number represents the payoff to the row player (in this case player 1), and the second number represents the payoff to the column player (in this case player 2).\\
\end{exmp}

Although, normal form representation of a game may look trivial it is very important for the intuitively understanding of the game and very helpful in the procedure of solving it.
As we mentioned before, there are two ways of solving games with complete information 
\begin{itemize}
\item Iterated elimination of strictly dominated strategies and 
\item Nash Equilibrium
\end{itemize}  
In this thesis we focus on solutions with Nash Equilibrium, hence in this subsection we present only the definition of the Dominated strategies and the disadvantages of the Iterated elimination method. For a comprehensive review (theory and examples) of this method we suggest \cite{gibbons1992primer}. \\

 \begin{defn}
  \emph{(Strictly Dominated Strategy):}\\
  \label{Dominated} 
  In the normal-form game $G = \{S_1,S_2,....S_n;u_1,u_2,....u_n\}$, let
$\bm{s_i'}$ and $\bm{s_i"}$ be feasible strategies for player $i$ (i.e.$\bm{s_i'}$ and $\bm{s_i"}$  are members of $S_i$). Strategy $\bm{s_i'}$ is \emph{ strictly dominated} by strategy $\bm{s_i"}$ if for each feasible combination of the other players' strategies, i's payoff from playing $\bm{s_i'}$ is strictly less than i's payoff from playing $\bm{s_i"}$ :
$$u_i(\bm{s_1,s_2,..s_{i-1},s_i',s_{i+1}..,s_n}) < u_i(\bm{s_1,s_2,..s_{i-1},s_i",s_{i+1}..,s_n}) $$
for each $(\bm{s_1,s_2,..s_{i-1},s_{i+1}..,s_n})$ that can be constructed from the other
players' strategy spaces $ S_1,S_2,....S_n$.\\
 \end{defn}
 
From the above definition ~\eqref{Dominated} and assuming that all players of the game are rational we could perceive that no one of the players will play any strictly dominated strategy.
Using this assumption we can solve a game by a process called \textit{ iterated elimination of strictly dominated strategies}. With this method we simply cross out all the strictly dominated strategies (in the payoof matrix) of each player consecutively. (For example see \cite{gibbons1992primer})\\

\textbf{Drawbacks of the iterated elimination of strictly dominated strategies:}
\begin{enumerate}
 \item  To use this method we can not only assume that players are rational but we also have to make the assumption that this fact is a common knowledge. That is, for the elimination of one strategy each player has to assume that all the other players have rational behaviour and that all of them knows that he is also a rational player. 
  \item The process often produces a very inaccurate prediction about the play of the game. The possibility that non of the nominated strategies can be eliminated it is large so we can have a game which this method does not give us a specific solution.
\end{enumerate}

For these limitations, the concept of Nash Equilibrium is considered a stronger solution concept than the strictly dominated strategies. \\

\begin{defn}
  \emph{(Best Response, (from \cite{aghassi2006robust}) )}\\
  \label{best}
 A player’s strategy is called a \textbf {best response} to the other players’ strategies if, given the latter, he has no incentive to unilaterally deviate from his aforementioned strategy.\\
\end{defn}

\begin{defn}
  \emph{(Nash Equilibrium in pure strategies)}\\
  \label{Nash}
In the n-player normal-form game $G = \{S_1,S_2,....S_n;u_1,u_2,....u_n\}$, the strategies $ \bm{s_1^*,s_2^*,....s_n^*}$ are a Nash equilibrium if, for each player
$i$, $\bm{s_i*}$ is player i's \textbf{best response} to the strategies specified for the $n-1$ other players, $\bm{s_1^*,...s_{i-1}^*,s_{i+1}^*,....,s_n^*}$:

$$u_i(\bm{s_1^*,...s_{i-1}^*,s_i^*,s_{i+1}^*,....,s_n^*}) < u_i(\bm{s_1^*,...s_{i-1}^*,s_i,s_{i+1}^*,....,s_n^*}) $$

for every feasible strategy $\bm{s_i}$in $S_i$; that is, $\bm{s_i*}$ solves
max $u_i(\bm{s_1^*,...s_{i-1}^*,s_i^*,s_{i+1}^*,....,s_n^*})$. \\
\end{defn}

To be more concrete and fully understand the concept of Pure Nash Equilibrium, we present two of the most known examples of the Game Theory literature(The battle of the sexes and the Matching Pennies).

\begin{exmp}
  \emph{(The battle of the sexes):}
  \label{exsexes}
 Game with two Nash Equilibrium in Pure Strategies 

\begin{table}[H]
\centering
\caption{The battle of the Sexes}
\label{battle of the sexes}
\begin{tikzpicture}[element/.style={minimum width=1.75cm,minimum height=0.85cm}]
\matrix (m) [matrix of nodes,nodes={element},column sep=-\pgflinewidth, row sep=-\pgflinewidth,]{
         & Ballet  & Football  \\
Ballet & |[draw]|(2,1) & |[draw]|(0,0) \\
Football & |[draw]|(0,0) & |[draw]|(1,2) \\
};

\node[above=0.25cm] at ($(m-1-2)!0.5!(m-1-3)$){\textbf{Player 2}};
\node[rotate=90] at ($(m-2-1)!0.5!(m-3-1)+(-1.25,0)$){\textbf{Player 1}};
\end{tikzpicture}
\end{table}
\end{exmp}

In the battle of the sexes the scenario of our game is the following. There are two players which are couple. Lets say that player 1 is the man and player two is the woman and they are planning where they will meet for their first date. Both players want to have a date, but man prefers a night at the ballet show and woman prefers the football match (weird preferences). If they choose to meet at the ballet show then man and woman have profits equal to 2 and 1 respectively. If if they choose football match then we have exactly the opposites payoffs. If they do not make an agreement about the place they stay home and get both 0 as a profit.(see Table ~\ref{battle of the sexes})
Using definition ~\eqref{Nash} we can find that in this game we have two Pure Nash Equilibrium, the one that both players choose to attend in ballet and the other one that they will choose to meet to the football match. For example, the combination of strategies (Ballet,Ballet) is a Nash equilibrium because both players respond to their component using their best response(definition ~\eqref{best}). If player 1 knows that player 2 will choose Ballet then the Ballet is the best response to his component strategy because the football give him payoff equal to 0 instead of 2.\\

Although the solution concept of the previous definition of Pure Nash Equilibrium produces tight predictions in a very broad class of games, there are simple games in which we could not find any Nash Equilibrium. For this reason we must extend this definition.\\

\begin{exmp}
  \emph{(Matching Pennies):}
  \label{Match} Game without Nash Equilibrium in Pure Strategies \\
  The following game is played by two players and each one of them has two possible actions, Heads or Tails. Each player has a penny and he through it. If the two coins match(both Heads or both Tails) the player 2 wins one penny and the player 1 loses. If the result is Heads in the one coin and Tails in the other, then the opposite happen. Again by the definition of best response and Nash Equilibrium we realize that this game has no Nash equilibrium in pure strategies.For example the combination of strategies (Heads,Heads) is not Nash Equilibrium because if the player 1 know that player 2 will play Heads then he will prefer to choose Tails in order to improve his payoff from -1 to 1.

\begin{table}[H]
\centering
\caption{Matching Pennies}
\label{bla}
\begin{tikzpicture}[element/.style={minimum width=1.75cm,minimum height=0.85cm}]
\matrix (m) [matrix of nodes,nodes={element},column sep=-\pgflinewidth, row sep=-\pgflinewidth,]{
         & Heads  & Tails  \\
Heads & |[draw]|(-1,1) & |[draw]|(1,-1) \\
Tails & |[draw]|(1,-1) & |[draw]|(-1,1) \\
};

\node[above=0.25cm] at ($(m-1-2)!0.5!(m-1-3)$){\textbf{Player 2}};
\node[rotate=90] at ($(m-2-1)!0.5!(m-3-1)+(-1.25,0)$){\textbf{Player 1}};
\end{tikzpicture}
\end{table}

\end{exmp}

\subsubsection{Mixed Strategies}
As we have showed in the previous example ~\eqref{Match} a game in normal-form representation does not always have a Nash equilibrium in which each player deterministically chooses one of his strategies(pure strategies).In the game of Matching Pennies none of the possible combinations of "pure strategies" is Nash Equilibrium. This happen because for every combination, one of the two players prefers to switch his aforementioned strategy.\\
Like the Gibbons quotes in \cite{gibbons1992primer}:"In any game which each player would like to outguess the other there is no Nash equilibrium (in Pure strategies) because the solution to such game necessarily involves uncertainty about what the players will do". \\
For this reason we have to introduce the notion of \textbf{mixed strategies}. \\
Up to this point, we assume that all players choose a ``pure strategy" which means that they just pick one of their possible actions. With the mixed strategy concept each player $i$ has the right to select a strategy which could be also a probability distribution over his possible pure strategies $\bm{s_i}$.\\

\begin{defn}
  \emph{(Mixed Strategy):}
  \label{Mixed}
In the normal-form game $G = \{S_1,S_2,....S_n;u_1,u_2,....u_n\}$, a strategy of a player $i$ is called \textbf{mixed strategy} if it is a probability distribution $\bm{p_i} = (p_{i1},...,p_{ik})$ over the strategy space $S_i= \{\bm{s_{i1},....,s_{ik}}\}$. Each $p_{ij}$ denotes the probability of player $i$, choose the pure strategy $\bm{s_j}$ of the strategy space $S_i$ where $j \in \{1,2,.....k\}$ , $0 < p_{ij} < 1$ and $ p_{i1}+...+ p_{ik} = 1$.\\
 \end{defn}
 
 In all previous examples the players allowed to use only pure strategies. For this reason, for each tuple of strategies, the amount of the payoff function for each player it was simply be the corresponding value of the payoff matrix. With the introduction of mixed strategies we understand that this is not always the case. The players now are not restricted to choose only one specific action from their strategy space but instead they have the right to select a probability distribution over their possible actions. Thus, we will need to make some calculations in order to compute the payoff of each player instead of simply take it from the payoff matrix.
 
 \subsubsection{Calculations of the payoffs if the players use mixed strategies}
Let $S_i=\{\bm{s_{i1},s_{i2},....,s_{i_{Hi}}}\}$ denotes the strategy space of player $i$ and $Hi$ denotes the number of possible actions of player $i$(the possible pure strategies). The arbitrary pure strategy of player $i$ is $\bm{s_{ik}}$ where $k\in\{1,2,....,Hi\}$.
 If player $i \in \{1,2,....n\}$ believes that all other players $j \in \{1,2,..i-1,i+1,...n\}$ will play the strategies $(\bm{s_{j1},s_{j2},....,s_{j_{Hj}}})$with probabilities $\bm{p_j} = (p_{j1},...,p_{j_{Hj}})$ then players i's expected payoff from playing the pure strategy $\bm{s_{ij}}$ is:
 \begin{equation}
  k = \sum\limits_{{h_1}=1}^{H_1}\sum\limits_{{h_2}=1}^{H_2}... \sum\limits_{{h_{i-1}}=1}^{H_{i-1}}\sum\limits_{{h_{i+1}}=1}^{H_{i+1}}...\sum\limits_{{h_N}=1}^{H_N}  p_{1h_1}p_{2h2}.. p_{(i-1)h_{i-1}}p_{(i+1)h_{i+1}}..p_{Nh_N}    u_i(\bm{s_{1h_1},..,s_{ij},.. s_{Nh_N}})
 \end{equation}
 and players i'expected payoff from playing the mixed strategy $\bm{p_i} = (p_{j1},...,p_{j_{Hj}})$ is:
 \begin{equation}
\begin{split}
&\quad u_i(p_1,p_2,....p_n) = \sum\limits_{h_i=1}^{H_i} p_{ih_i} [ k ] \\ 
&\quad = \sum\limits_{{h_1}=1}^{H_1}\sum\limits_{{h_2}=1}^{H_2}... \sum\limits_{{h_{i-1}}=1}^{H_{i-1}}\sum\limits_{{h_{i+1}}=1}^{H_{i+1}}...\sum\limits_{{h_N}=1}^{H_N}  p_{1h_1}p_{2h2}.. p_{(i-1)h_{i-1}} p_{ih_i}p_{(i+1)h_{i+1}}..p_{Nh_N}    u_i(\bm{s_{1h_1},..,s_{ij},.. s_{Nh_N}}) 
 \end{split}
  \end{equation}
Now that we have shown how we can find the payoff of each player when mixed strategies are allowed we can extend the Nash equilibrium concept to include mixed strategies. The extension is simple. We simply use the same idea as before. We require that \textbf{the mixed strategy of each player must be  the best response to the given mixed strategies of the other players}.\\
This extension of Nash Equilibrium includes the earlier one from definition ~\eqref{Nash} since we can express every pure strategy as a mixed strategy. It is the strategy which has 1 only at one coordinate of the probability distribution(the one corresponds to the playing action) and zeros to all the other coordinates. \\

In $n$ player game a tuple of mixed strategies $(\bm{p_1^*,p_2^*,...p_i^*...,p_n^*})$ is a Nash equilibrium if $ p_i^*$ satisfy $$u_i(\bm{p_1^*,p_2^*,...p_i^*..,p_n^*}) \geq u_i(\bm{p_1^*,p_2^*,..,p_i,...,p_n^*})\quad  \forall i \in \{1,2,....n\} $$ for every probability distribution $\bm{p_i}$ over the strategy space $S_i$. This means that the mixed strategy of each player is best response to other players' strategies. A player, given the other players strategies, has no reason to unilaterally deviate from his aforementioned strategy.\\

To illustrate the above theoretical part of the mixed strategies we extend the example ~\eqref{exsexes} of Battle of the Sexes, and allow players to choose mixed strategies.\\

\begin{exmp}
  \emph{(Battle of the sexes):}
  \label{Batlle} Game with two Nash Equilibrium in Pure Strategies but with another one Equilibrium in Mixed Strategies.\\
  \\As previously discussed the known example of battle of the sexes have two Pure Nash equilibria(equilibria in pure strategies). Now lets assume that the players can also opt mixed strategies.\\
  
  Let $\bm{p_1}=(x,1-x)$ be the mixed strategy in which player 1 plays Ballet with probability $x$ and let $\bm{p_2}=(y,1-y)$ the mixed strategy in which player 2 plays Ballet with probability $y$. By using these two mixed strategies we get the following payoffs.\\
  Payoff of player 1:
   $$ u_1(p_1,p_2)=  2xy+(1-x)(1-y) = (3y-1)x + 1-y$$
   Therefore the payoff of player 1 for $ x \in [0,1]$ corresponds to a line segment which its gradient and its section with the vertical axis depends on the mixed strategy $y\in[0,1]$ of player 2. If the gradient is positive then the maximum payoff of player 1 it will be at $x=1$, if gradient is negative at $x=0$ and if the gradient is equal to zero then the payoff of the first player is become maximum for all $ x \in [0,1] $. Thus, the best response of player 1 to the mixed strategy of player 2 is:
  $$u_1(\bm{p_1,p_2}) = \begin{cases} 0 &\mbox{if } y \lneq  1/3 \\ 
[0,1]& \mbox{if } y=1/3. \\ 
1& \mbox{if } y \gneq 1/3 \end{cases} $$

with similar way we could find that the payoff of player 2:
$$u_2(\bm{p_1,p_2})=  xy+2(1-x)(1-y) = (3x-2)y + 2(1-x)$$
is the best response to the mixed strategy of player 1 and 
$$u_2(\bm{p_1,p_2}) = \begin{cases} 0 &\mbox{if } x \lneq  2/3 \\ 
[0,1]& \mbox{if } x=2/3. \\ 
1& \mbox{if } x \gneq 2/3 \end{cases} $$

From the above,we can easily conclude that one tuple of strategies $(\bm{p_1,p_2})$ is a Nash equilibrium if corresponds to point of intersection of the two payoff functions $u_1(\bm{p_1,p_2}) $ and $u_2(\bm{p_1,p_2})$  and vice versa. 
In our example we have three equilibria. In particular, there are two equilibria in pure strategies: $(x,y)= (0,0)$ and  $(x,y)= (1,1)$ and one equilibrium in mixed strategies: $(x,y)= (2/3,1/3)$.

  \end{exmp}
  
  \begin{exmp}
  \emph{This is how a Graphical Representation of a game looks like:}
 \begin{figure}[H]
  \centering
    \includegraphics[scale = 0.7]{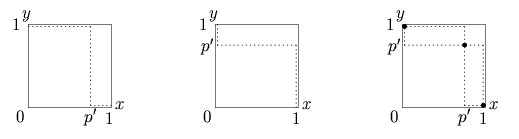}
  \caption{ {\footnotesize Example of a Graphical representation of a two player game. The Nash equilibria are exactly the intersections of the best responses. In  this example there are three points of intersection which are the equilibria. $(x,y)=(0,1),(x,y)=(1,0)$ and $(x,y)=(p',p')$. Image taken from http://commons.wikimedia.org/wiki/File:Reaction-correspondence-hawk-dove.jpg }}
  \label{dsgd}
\end{figure}
\end{exmp}

We conclude this section with the now very famous theorem of Nash who showed that any finite complete information game has an equilibrium if players can play mixed strategies.\\

\begin{thm}
  \emph{Nash(1950):}
  \label{nashtheorem}
In the n-player normal-form game $ G = \{S_i,S_2,...S_n;u_1,u_2,....u_n\}$ if $n$ is finite and $S_i$ is finite for every $i$ and the players allowed to choose mixed strategies then there exists at least one Nash equilibrium.
\end{thm}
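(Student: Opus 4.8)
The plan is to recast the existence question as a fixed-point problem and invoke a topological fixed-point theorem, exploiting precisely the ``best response'' structure emphasised in Definitions~\ref{best} and~\ref{Nash}. First I would fix the arena: writing $H_i = |S_i|$ and $\Delta_i = \{\bm{p_i} \in \mathbb{R}^{H_i} : p_{ij} \geq 0,\ \sum_{j=1}^{H_i} p_{ij} = 1\}$ for the simplex of player $i$'s mixed strategies, the set of mixed-strategy profiles is $\Delta = \Delta_1 \times \cdots \times \Delta_n$. Each $\Delta_i$ is a non-empty, compact, convex subset of a Euclidean space, hence so is $\Delta$. From the expected-payoff formulas derived above, $u_i(\bm{p_1},\dots,\bm{p_n})$ is a polynomial in the entries of the $\bm{p_j}$, hence continuous on $\Delta$, and — the fact that does all the work — it is \emph{affine} in $\bm{p_i}$ once the opponents' strategies $\bm{p_{-i}}$ are held fixed.

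Next I would introduce the best-response correspondence. For $\bm{p} \in \Delta$ set
\[
B_i(\bm{p_{-i}}) = \underset{\bm{q_i} \in \Delta_i}{\operatorname{argmax}}\ u_i(\bm{q_i},\bm{p_{-i}}), \qquad B(\bm{p}) = B_1(\bm{p_{-1}}) \times \cdots \times B_n(\bm{p_{-n}}).
\]
By Definition~\ref{Nash} (in its mixed-strategy form), a profile $\bm{p^*}$ is a Nash equilibrium exactly when $\bm{p^*} \in B(\bm{p^*})$, i.e.\ when it is a fixed point of $B$. It then remains to verify the hypotheses of Kakutani's fixed-point theorem: (i) $B(\bm{p}) \neq \emptyset$ for every $\bm{p}$, which is the Weierstrass extreme-value theorem applied to the continuous function $u_i(\cdot,\bm{p_{-i}})$ on the compact set $\Delta_i$; (ii) $B(\bm{p})$ is convex, because the set of maximisers of an affine function over a convex set is convex (a face of $\Delta_i$); and (iii) $B$ has closed graph, which follows from continuity of the $u_i$ by a limiting argument: if $\bm{p}^{(k)} \to \bm{p}$, $\bm{r}^{(k)} \to \bm{r}$ with $\bm{r}^{(k)} \in B(\bm{p}^{(k)})$, then letting $k \to \infty$ in $u_i(\bm{r}_i^{(k)},\bm{p}_{-i}^{(k)}) \geq u_i(\bm{q_i},\bm{p}_{-i}^{(k)})$ for all $\bm{q_i} \in \Delta_i$ gives $\bm{r} \in B(\bm{p})$. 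Kakutani then produces $\bm{p^*} \in B(\bm{p^*})$, the desired equilibrium.

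I expect the step needing the most care to be the closed-graph property (iii) together with making precise the convex-valuedness in (ii); both rest on the continuous and affine dependence of the expected payoffs on mixed strategies, so I would isolate and prove that affinity claim carefully before anything else. A second, arguably more self-contained route avoids correspondences: following Nash's original argument, for each player $i$ and pure strategy index $j$ define the gain $g_{ij}(\bm{p}) = [\, u_i(\bm{s_{ij}},\bm{p_{-i}}) - u_i(\bm{p}) \,]^+$ and the map $\bm{p} \mapsto \bm{p'}$ by
\[
p'_{ij} = \frac{p_{ij} + g_{ij}(\bm{p})}{1 + \sum_{\ell=1}^{H_i} g_{i\ell}(\bm{p})}.
\]
This is a continuous self-map of the compact convex set $\Delta$, so Brouwer's fixed-point theorem yields a fixed point; a short computation then shows that at a fixed point all gains vanish — if some $g_{i\ell} > 0$, then some $j$ in the support of $\bm{p_i}$ has $u_i(\bm{s_{ij}},\bm{p_{-i}}) \leq u_i(\bm{p})$, hence $g_{ij}=0$, and the fixed-point relation $p_{ij}\sum_\ell g_{i\ell} = g_{ij}$ forces $\sum_\ell g_{i\ell} = 0$, a contradiction. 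Vanishing of all gains means no pure deviation (hence, by affinity, no mixed deviation) improves any player's payoff, i.e.\ $\bm{p}$ is a Nash equilibrium. Either way the theorem follows; I would present the Kakutani version as the main proof since it mirrors the best-response definition directly.
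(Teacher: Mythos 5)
Your proposal is correct and matches exactly the two arguments the paper points to: the paper does not reproduce a proof but simply cites Nash's original Kakutani-based proof and his later Brouwer-based proof, and you have sketched both of these accurately (the best-response correspondence satisfying the hypotheses of Kakutani's theorem, and the gain-function self-map with the fixed-point computation for Brouwer). Nothing further is needed beyond the care you already flag about affinity of $u_i$ in player $i$'s own mixed strategy, which indeed underpins both the convex-valuedness and the ``pure deviations suffice'' step.
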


Proof: Nash gave two proofs for this theorem one in \cite{nash1950equilibrium} based in Kakutamis Fixed Point Theorem \cite{kakutani1941generalization} and one in \cite{nash1951non} based on the Brouwer's Fixed point \cite{brouwer1911abbildung}. \qed \\

\subsection{Finite Games with Incomplete Information - Bayesian Games}

In the previous section we discussed about games with complete information, which means that all the parameters that structure the game are known from all players. Therefore the payoff functions are common knowledge. However in the real world situations, players are often uncertain about some aspects of the game's structure.\\ 
Harsanyi in \cite{harsanyi2004games} managed to model these \textbf{incomplete information games} and with the introduction of the notion of ``type"  he expressed the uncertainty of a player about the other players' payoff functions.\\

\subsubsection{Understanding of ``Types"}
More specifically, Harsanyi introduced a set of types $T_i$ for each player $i$. This set consists from a finite number of possible types $t_i$, which each one of them corresponds to a specific payoff $u_i(\bm{a_1,.....a_n};t_i)$. Therefore, by saying that player $i$ knows his payoff function is equivalent to say that he knows his type.\\

By using types, the need to establish a new, in some way extended(compare to the complete information games), notation is created.\\
With $u_i(\bm{a_1,.....a_n};t_i)$, we express the payoff function of player $i$ when he knows his type $t_i$ and when the player $j\in \{1,2,....,n\}$ choose to play the strategy $\bm{a_j}$.
The types of the others players except player i denoted by: $t_{-i}= (t_1,t_2,....t_{i-1},t_{i+1},.,t_n)$ and with $ p_i(t_{-i}|t_i)$ we express the belief of player i about the types of the other players.\\

The notion of belief, is also due to Harsanyi who suggested that the players of an incomplete information game share a common knowledge prior probability distribution over types. Given this distribution and with the knowledge of his own type, each player could derive his beliefs about the types of all other players using the Bayes rule. Exactly for this application of the Bayes rule these games called, by Harsanyi, "Bayesian Games".\\

\begin{defn}
\emph{(Normal Form representation of Bayesian Games)}\\
\label{Norm Bayes}
In Bayesian Games, a normal-form representation of a game is a specification of players' action spaces $A_1,....A_n$ (where with actions we denote all possible actions including the probability distributions over all possible pure actions), the type spaces $T_1,.....,T_n$, the beliefs of all players $p_1....p_n$ and the payoff functions $u_1....u_n$.In particular, a Bayesian game G is considered to be in normal form representation if it consists of $N={1,2,....,n}$ players, one set $T_i$ for each player $i$ which denote his type space ($t_i$ denote an arbitrary member of this set) and from one function $u_i(\bm{a_1,a_2,...,a_n})$ which represent the payoff of player $i$ if the players play the actions $\bm{a_1,a_2,...,a_n} $. Player i's  belief, ($ p_1(t_{-i}|t_i)$) describes i's uncertainty about the
$n - 1$ other players' possible types,$ t_{-i}$ (given $i$'s own type, $t_i$).We denote this game by $G= \{A_1,....A_n;T_1,.....,T_n;p_1....p_n;u_1....u_n\}$.\\
\end{defn}

When we developed the theory of complete information games we mentioned that the players simultaneously choose their actions and then they receive their expected payoffs. In Bayesian Games this is not the case because we have the concept of \textbf{private information} which means that the players know their own payoff function but they are uncertain for the types of all other players. For this reason the procedure until the receiving of the expected payoffs in Bayesian games is presented  by the following four main stages:

\begin{enumerate}
\item We assume (following Harsanyi's model, \cite{harsanyi2004games}) that nature is in the game. Nature draws a type vector $ t = (t_1,.....,t_n)$,where $t_i$ 
is drawn from the set of possible types $T_i$. There is a probability distribution over types, known to all players with which nature choose this specific type vector.
\item Then, nature reveals $t_i$  to player $i$ but not to any other player(private information) 
\item After that, the players simultaneously choose actions (player $i$ choosing $\bm{a_i}$, from the feasible set $A_i$). 
\item Finally the payoffs $ u_i(\bm{a_1,.....a_n};t_i) $ are received. \\
\end{enumerate}

Like the complete information game, to introduce the concept of Nash equilibrium in Bayesian game we must first present the definition of strategy.\\

\begin{defn}
\emph{(Strategy in Bayesian Games)}
\label{strategybay}\\
 In static Bayesian game $ G=\{A_1,....A_n;T_1,.....,T_n;p_1....p_n;u_1....u_n\}$, strategy of player $i$ is the function  $s_i:T_i\longrightarrow A_i$. For each type $t_i$, in $T_i$ ,$s_i(t_i)$ specifies the \textit{mixed strategy} from the feasible set $A_i$ that type $t_i$, would choose if drawn by nature.
\end{defn}

Again, like any other game the idea behind the Equilibrium concept is the same: A tuple of strategies is said to be equilibrium if each player's strategy is \textbf{best response} to the other player's strategies. \\

\begin{defn}
\emph{(Bayesian Nash Equilibrium)}
\label{Bayeqiul}

In the static Bayesian game $ G=\{A_1,....A_n;T_1,.....,T_n;p_1....p_n;u_1....u_n\}$, the tuple of  \textit{mixed strategies} $(\bm{s_1^*,..., s_n^*})$ is \textbf{Bayesian Nash equilibrium} if for each player $i$ and for each type $t_i \in T_i$  of $i$, $s_i^*(t_i)$ solves: 

 \[
    \max\limits_{a_i \in A_i}\sum\nolimits_{t_{-i} \in T_{-i}} u_i(s_{1}^*(t_{1}),....s_{i-1}^*(t_{i-1}),a_i,s_{i+1}^*(t_{i+1})....,s_1^*(t_n);t )p_i(t_{-i}|t_i)
\]  

With other words, for any player $i \in \{1,2,...N\}$ the $s_i^*(t_i)$ is the mixed strategy that gives him the maximum value in his payoff function. For this reason no one of the players wants to change his strategy even if this changing is only for one action in any type.  \\
\end{defn}

\begin{thm}
  \emph{(Existence of Bayesian Nash Equilibrium):}\\
  \label{HarsanyiTheo}
Every Finite Bayesian Game $ G=\{A_1,....A_n;T_1,.....,T_n;p_1....p_n;u_1....u_n\}$ has a Bayesian Nash Equilibrium.
\end{thm}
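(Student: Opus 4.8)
The plan is to reduce the statement to Nash's theorem (Theorem~\ref{nashtheorem}) by passing to an auxiliary complete-information game whose players are the ``type-agents'' of the original Bayesian game. From $G=\{A_1,\dots,A_n;T_1,\dots,T_n;p_1,\dots,p_n;u_1,\dots,u_n\}$ I would build the finite normal-form game $\hat G$ whose set of players is $\hat N=\{(i,t_i):i\in N,\ t_i\in T_i\}$, in which the pure-strategy set of player $(i,t_i)$ is the (finite) set of pure actions available to player $i$ in $A_i$, and in which the payoff to $(i,t_i)$ at a pure profile $(a_{(j,t_j)})_{(j,t_j)\in\hat N}$ is
\[
\hat u_{(i,t_i)}\big((a_{(j,t_j)})\big)=\sum_{t_{-i}\in T_{-i}} p_i(t_{-i}\mid t_i)\, u_i\big(a_{(1,t_1)},\dots,a_{(i,t_i)},\dots,a_{(n,t_n)};t_i\big).
\]
Since $N$ and each $T_i$ are finite, $\hat N$ is finite, and since each $A_i$ has finitely many pure actions, every player of $\hat G$ has a finite pure-strategy set.

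Next I would invoke Theorem~\ref{nashtheorem}: the finite game $\hat G$, with players allowed to randomize, admits a Nash equilibrium $\hat\sigma^{*}=(\sigma^{*}_{(i,t_i)})_{(i,t_i)\in\hat N}$, where $\sigma^{*}_{(i,t_i)}$ is a mixed strategy over $A_i$. I would then define $s_i^{*}:T_i\to A_i$ by $s_i^{*}(t_i)=\sigma^{*}_{(i,t_i)}$, which is exactly a strategy of player $i$ in the sense of Definition~\ref{strategybay}, and claim that $(s_1^{*},\dots,s_n^{*})$ is a Bayesian Nash equilibrium of $G$ in the sense of Definition~\ref{Bayeqiul}.

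The crux of the argument — and the step I expect to require the only real care — is verifying this last claim, i.e. matching the equilibrium conditions of the two games. The ``no unilateral deviation'' requirement for player $i$ of type $t_i$ in $G$, namely that $s_i^{*}(t_i)$ maximizes $\sum_{t_{-i}\in T_{-i}} p_i(t_{-i}\mid t_i)\, u_i(s_1^{*}(t_1),\dots,a_i,\dots,s_n^{*}(t_n);t_i)$ over $a_i\in A_i$, is literally the statement that $\sigma^{*}_{(i,t_i)}$ is a best response of player $(i,t_i)$ to $\hat\sigma^{*}_{-(i,t_i)}$ in $\hat G$: changing the action prescribed for type $t_i$ of player $i$ in $G$ corresponds precisely to a unilateral deviation of the single agent $(i,t_i)$ in $\hat G$, with all other type-agents (including the other types of player $i$) held fixed, and $\hat u_{(i,t_i)}$ is exactly the conditional expected payoff appearing in Definition~\ref{Bayeqiul}. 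Two minor points to dispatch along the way: mixed strategies of the agents extend $\hat u_{(i,t_i)}$ multilinearly, so passing between best responses in pure and mixed actions is harmless; and if some type $t_i$ has prior probability zero, the belief $p_i(\cdot\mid t_i)$ may be fixed arbitrarily without affecting anything. Once the correspondence ``Nash equilibrium of $\hat G$ $\Longleftrightarrow$ Bayesian Nash equilibrium of $G$'' is established, the existence of $\hat\sigma^{*}$ furnished by Theorem~\ref{nashtheorem} yields the desired equilibrium, completing the proof. \qed
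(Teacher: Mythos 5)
Your proposal is correct, and it is the standard ``type-agent'' reduction: build the finite complete-information game whose players are the pairs $(i,t_i)$, apply Theorem~\ref{nashtheorem}, and check that type-by-type best responses in the agent game are exactly the maximization condition in Definition~\ref{Bayeqiul}. The paper itself gives no argument beyond citing Harsanyi, and Harsanyi's existence proof proceeds by essentially this same reduction of the Bayesian game to an equivalent finite complete-information (Selten/agent normal form) game and an appeal to Nash's theorem, so your route matches the intended one.
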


Proof: in  \cite{harsanyi2004games}.\\
\\

Subsequently to make the concept of Bayesian Games more comprehensive, we point some remarks, that following from one concrete example.\\

\begin{rem}
A very important assumption is that the strategy spaces, the payoff  functions, the possible types and the prior probability distribution over types are all \textbf{common knowledge}. That is, the players know every aspect of the game except the type that nature reveals to each player.\\
\end{rem}

\begin{rem}
By using the prior probability distribution and the knowledge of his type, each player can compute his beliefs $p_i(t_{-i}|t_i)$ about the other player's types. That is, given his type, each player can estimate all elements of the type vector using the Bayes rule:
$$   p_i(t_{-i}|t_i)=\frac{p_i(t_{-i},t_i)}{p_i(t_i)}=
\frac{p_i(t_{-i},t_i)}{\sum\nolimits_{t_{-i} \in T_{-i}}p_i(t_{-i},t_i)}$$\\
\end{rem}

\begin{exmp}
\emph{(Bayesian Nash Equilibrium in the Battle of the Sexes)}
\label{equisexes}\\

The example that we present is taken from the lecture notes of MIT course "Game Theory with Engineering Application" of Dr Asu Ozdaglar.\footnote{http://ocw.mit.edu/courses/electrical-engineering-and-computer-science/6-254-game-theory-with-engineering-applications-spring-2010/}\\

Lets recall the example of the Battle of the Sexes from the previous section. In this game we have found that there are two pure strategy equilibria and one mixed strategy equilibria. Now lets change it, and assume that player 2 has two \textbf{equal probability} choices. He can choose if he want to meet or avoid player 1.  With this change, we make our game Bayesian, because we could think that the two choices of player 2 (wishes to meet or wishes to avoid player 1) are his two possible types(each one of them can be chosen with probability 1/2).  \\

The following two matrices illustrate the Bayesian game in a more understandable way. We can think player 1(row player), as the player who could only have one type and player 2(column player) the player who has 2 possible types. These two types correspond to the two matrices. Therefore, each matrix is chosen with probability 1/2, like the probability of each type of player 2. Thus, after the nature reveals the type vector, its obvious that player 2 knows the exact game but player 1 must compute his beliefs to find the type of player 2.\\

\begin{table}[H]
\parbox{.45\linewidth}{
\centering
\begin{tabular}{ |c|c|c| }
    \hline
      & Ballet & Football \\ \hline
    Ballet & (2,1) & (0,0) \\ \hline
    Football & (0,0) & (1,2) \\
    \hline
  \end{tabular}
\caption{Type 1 of player 2}
}
\hfill
\parbox{.45\linewidth}{
\centering
\begin{tabular}{ |c|c|c| }
    \hline
      & Ballet & Football \\ \hline
    Ballet & (2,0) & (0,2) \\ \hline
    Football & (0,1) & (1,0) \\
    \hline
  \end{tabular}
\caption{Type 2 of player 2}
}
\end{table}

By the definition of this game we can understand that the possible strategies are only the pure ones, since each person could choose only one of his two choices. For example, it is not possible for player 1 to play mixed strategy (1/2,1/2) because there is no way, half of player 1 be located at Ballet and the other half at the football match.\\

In this game the strategy profile (B,(B,F)) is a Bayesian Equilibrium. This notation expresses the strategy that chosen by each player. In particular, it means that player 1 choose action Ballet, and player 2 choose Ballet, if he is in type 1 and choose Football if we is in type 2.
This is Bayesian Equilibrium because given the other player strategy each player's response is the best. B is the best response of player 1 given that the strategy of player 2 is (B,F). This is true because  
$$u_1(B,(B,F)) =1/2 \times2 + 1/2 \times 0 = 1 $$
 
but if player 1 plays F then his payoff is equal to 
 
 $$u_1(B,(B,F)) =1/2 \times0 + 1/2 \times 1 = 1/2 $$

In addition, if player 1 plays B then (B,F) is the best response of player 2 because for type 1 (left table), $1>0$ (1 and 0 are the two possible types of player 2 if player 1 choose Ballet) and for type 2(see right table), $2>0$.\\

With similar procedure we can check if a tuple of strategies is Bayesian Equilibrium for any Bayesian game.\\

\end{exmp}

\chapter{Robust Game Theory}

In 2006 Aghassi and Bertsimas \cite{aghassi2006robust} presented a distribution-free model of incomplete information games in which the players use a robust optimization approach to choose their strategies.This model relaxes the distributional assumptions of Harsanyi's Bayesian games, and it gives rise to an alternative distribution-free equilibrium concept.\\
In this chapter we focus on this recently class of Games. We begin by giving the new notation, that the two authors used in their paper and the one that we will use in the rest of this thesis. Following that,  we develop the basic theory of this new approach and make a small extension of one specific class of this games. At the end of this chapter,  several methods for finding the set of Robust Optimization  Equilibria are implemented. Our finding results are compared with the original paper.\\ 
Aghassi and Bertsimas developed both robust games with private and without private information. In this thesis we focus only in the case of games with \textbf{no private information}

\section{New Notation}

In section ~\eqref{distniel}, we discussed about the robust optimization approach for solving problems with parameters under uncertainty. More specifically we mentioned that in this approach one makes optimal decisions based on his worst case scenario. After that, we developed the basic theory related to games in order to make the reader more familiar with concepts like \emph{Best Response,Pure and Mixed Strategies and Equilibrium}. In both of these sections of Chapter 2 we used the most common notation of the literature.\\
In this section, we formulate again some of the basic notations and definitions that we will need in the rest of this thesis. We also, redefine the notions of Best Response and Equilibrium for the complete and incomplete games without private information.\\

Lets start from the N-player complete information games (Nash Games)in which player $i$ has $a_i$ possible actions. In these games, as we have mentioned in the Mathematical background Chapter the payoff matrix  $\bm{\check{P}} \in \mathbb{R}^{N\times\prod\limits_{i=1}^N a_i} $ is fixed, and as a result all players are able to know the exact payoff functions of all other players.\\
In particular,  ${\bm{\check{P}}}_{(j_1,j_2,....j_N)}^i$ denotes the payoff to player $i$ when player $k \in \{1,2,....N\}$ plays action $j_k \in \{1,2,....,a_k\}$ and 
\begin{equation}
S_{a_i} = \{ \bm{x^i} \in {\mathbb{R}}^{a_i}| \bm{x^i} \geq 0, \sum\limits_{{J_i}=1}^{a_i} {x^i}_{j_i}=1 \}
\end{equation}
expresses the set of all possible mixed strategies of player $i$ over all actions $\{1,2,...a_i\}$. Moreover, let $\pi_i(\bm{P;x^1,x^2,...x^N})$ indicate the expected payoff of player i when the payoff matrix is given by $\bm{P}$ and player $k \in \{1,2,....N\}$ plays mixed strategy $\bm{x^k} \in S_{a_k}$. That is,
\begin{equation}
  \pi_i(\bm{P;x^1,x^2,}...\bm{x^N}) = \sum\limits_{{j_1}=1}^{a_1}...\sum\limits_{{j_i}=1}^{a_i}... \sum\limits_{{j_{N}}=1}^{a_{N}}{\bm{P}}_{(j_1,j_2,....j_N)}^i \prod\limits_{i=1}^N {x^i}_{j_i}
 \end{equation}

Finally, in this thesis, we use exactly like Bertsimas and Aghassi (\cite{aghassi2006robust})  the following shorthands:
$$ \bm{x^{-i}} = (\bm{x^1,x^2,..,x^{i-1},x^{i+1},...x^N})$$
$$ (\bm{x^{-i},u^i}) =(\bm{x^1,x^2,..,x^{i-1},u^i,x^{i+1},...x^N})$$
$$ S= \prod\limits_{i=1}^N S_{a_i}$$
$$ S_{-i}= \prod\limits_{k=1,k\neq i}^N S_{a_k}$$\\

Using the this new notation we develop again the following very important definitions:\\

\textbf{Nash Games:}\\
In the complete information Games with fixed payoff matrix $\bm{\check{P}}$ the \textbf{best response} of player $i$ to the other players' strategies $\bm{x^{-i}} \in S_{-i}$ belongs to:
\begin{equation}
\label{br}
\underset{\bm{u^i}\in S_{a_i}}{\operatorname{argmax}}\,\pi_i(\bm{\check{P};x^{-i},u_i})
\end{equation}
and following the same reasoning of definitions ~\eqref{best} and ~\eqref{Nash} we can derive that the tuple of strategies $(\bm{x^1,x^2,...x^N})\in S$ is \textbf{Nash Equilibrium} if for each player $i\in \{1,2....N\}$:\\

\begin{equation}
\label{lllll}
\bm{x_i} \in \underset{\bm{u^i}\in S_{a_i}}{\operatorname{argmax}}\,\pi_i(\bm{\check{P};x^{-i},u_i})
\end{equation}\\

\textbf{Bayesian Games:}\\
In Bayesian Games the payoff matrix $\bm{\tilde{P}}$ is subject to uncertainty. As we have already mentioned we are interested for \textbf{games with no private information}. Therefore each player has only one possible type and players $i's$ \textbf{best response} to the other players' strategies $\bm{x^{-i}} \in S_{-i}$ belongs to:

\begin{equation}
\label{chicago}
\underset{\bm{u^i}\in S_{a_i}}{\operatorname{argmax}}\,[\underset{\bm{\tilde{P}}}E\pi_i(\bm{\tilde{P};x^{-i},u_i})]
\end{equation}

and similar with definition ~\eqref{Bayeqiul} the $(\bm{x^1,x^2,...x^N})\in S$ is a \textbf{Bayesian Nash Equilibrium} if for each player $i\in \{1,2....N\}$:\\

\begin{equation}
\label{lmljsd}
\bm{x_i} \in \underset{\bm{u^i}\in S_{a_i}}{\operatorname{argmax}}\,[\underset{\tilde{P}}E\pi_i(\bm{\tilde{P};x^{-i},u_i)}]
\end{equation}\\

\begin{rem}
\label{isa}
Notice that by linearity of $\pi_i$ over set $U$ and by linearity of expectation operator $\underset{\bm{\tilde{P}}}E$ the following is true:
\begin{equation}
\label{makalass}
\underset{\bm{\tilde{P}}}E \pi_i(\bm{\tilde{P};x^{-i},u_i}) = \pi_i ( \underset{\bm{\tilde{P}}}E [\bm{\tilde{P}}] ; \bm{x^{-i},u_i})
\end{equation}
where $\underset{\bm{\tilde{P}}}E [\bm{\tilde{P}}]$ is the element-wise expected value of $\bm{\tilde{P}}$.\\ 

This observation has as a result, that a Bayesian Game without private information is equivalent to a Complete Information Game (Nash Game) with payoff matrix equal to $\underset{\bm{\tilde{P}}}E [\bm{\tilde{P}}]$. (This can be shown by combining the equations ~\eqref{chicago}, ~\eqref{makalass} and ~\eqref{br}).
\end{rem}
 
\section{The Distribution-Free Model and the Basic Theory}

\subsection{The Robust Optimization Model}
To classify a game, in a specific category of games it must have some certain characteristics. For example one of the main characteristics of a Bayesian game is that the distribution over types is common knowledge. Following that, we understand that the new class of robust games can be described by specific features.

\subsubsection{What we call Robust Game}
We call robust the game with the following two features:
\begin{enumerate}
\item The players are not informed or choose to ignore any probability distribution over their payoff functions.The only knowledge, which is common, is that all players being aware about an uncertainty set, in which, all values of the uncertain parameters of payoff matrix belong. 
\item All players choose to use a robust optimization approach (see previous chapter) to the uncertainty and this is also common knowledge. Therefore, in this model of games, given the other players' strategies, each player try to maximise his worst case expected payoff.\\
\end{enumerate}
 
 \begin{defn}
  \emph{(Robust optimization Equilibrium):}\\
  \label{robustequi}
In an N-player robust game we called Robust optimization Equilibrium the tuple of strategies $(\bm{x_1,x_2,....x_N})$ iff given the other players' strategies $\bm{x^{-i}}$ each player $i$ plays the strategy that ensures him the maximum payoff under the worst case scenario. \\
\end{defn}

From the above modelling and analogous to Harsanyi's Bayesian Games(equations ~\eqref{chicago}, ~\eqref{lmljsd}), Aghassi and Berstimas defined the notions of best response and Robust Optimization Equilibrium for finite robust games with \textbf{no private information}.\\

Players i's \textbf{best response} to the other players' strategies $x^{-i}$ is the strategy that belongs to:\\

\begin{equation}
\label{robustbestresponse}
\underset{\bm{u^i}\in S_{a_i}}{\operatorname{argmax}}\,[\underset{\bm{\tilde{P}}\in U}\inf\pi_i(\bm{\tilde{P};x^{-i},u_i})]
\end{equation}

and $(\bm{x_1,x_2,....x_N})$ is a \textbf{Robust Optimization Equilibrium} if and only if for each player  $i\in \{1,2....N\}$,

\begin{equation}
\label{loioi}
\bm{x_i} \in \underset{\bm{u^i}\in S_{a_i}}{\operatorname{argmax}}\,[\underset{\bm{\tilde{P}}\in U}\inf\pi_i(\bm{\tilde{P};x^{-i},u_i})]
\end{equation}\\

\textbf{
Important observations about the new model has to be emphasized: }

\begin{rem}
\textbf{Infimum represents the worst case scenario }\\
One can think the infimum of formulas ~\eqref{robustbestresponse} and ~\eqref{loioi} with the following way. In Harsanyi model the players know the distribution over their payoff functions and for this reason each of them tries to maximize his expected payoff function (the average value of his profit) based on this distribution. Now, in case of robust games, players are not aware about any distribution. They only know that the payoff matrix of the game belongs in an uncertainty set U. Therefore, each player tries to maximize the payoff that he can receive under the worst case scenario, in the scenario that the payoff matrix is the one that will give him the minimum (infimum) profit. 
\end{rem}

\begin{rem}
\textbf{Benefit of no private Information assumption}\\
In both, Bayesian and Robust games we assume that the players commonly know all possible tuples of payoff parameters and all possible type vectors as well(in the occasion of private information). In the case now, that robust game has no private information each player has one and only specific type and that is a common knowledge. Consequently, we have only one type vector in each game which mean that all players have the same information about the uncertainty payoff matrix $\bm{\tilde{P}}$. 
\end{rem}

\begin{rem}
\label{infimum}
\textbf{In general not equivalent to a complete information game(Nash Game)}\\
In robust games, unlike Bayesian games(see ~\eqref{makalass}), the worst case expected payoff is greater than the expected worst case payoff(the element-wise worst case of the payoff matrix $\bm{\tilde{P}}$ that denote with the $\underset{P \in U}{inf} [\bm{\tilde{P}}]$) .
\begin{equation}
\label{sadassdd}
\underset{ \bm{\tilde{P}}\in U} {inf} \pi_i(\bm{\tilde{P}};\bm{x^{-i},u_i}) \geq \pi_i ( \underset{ \bm{\tilde{P}}\in U} {inf}[\bm{\tilde{P}}] ; \bm{x^{-i},u_i})
\end{equation}
The two parts of ~\eqref{sadassdd} are equal only in a special case of robust games in witch the uncertainty set U have a specific form.(see section ~\eqref{specialcase})
\end{rem}

\begin{rem}
\textbf{Reasonable combination of Equilibrium and Worst case notions}\\
One could argue that the concepts of equilibrium and worst case scenario are irrelevant. This is not true. By claiming that each player use a worst case approach we mean that given the strategies of all other players, he select a robust view only for the parameters that included in his payoff function and he totally ignores any other parameter of the game. \\
The robust optimization approach of the players is common knowledge. That is, all players know the best response of all other players in any combination of strategies. For this reason the players can predict the outcome of the robust game and like the complete information games they know which tuple of strategies is equilibrium.
\end{rem}

\subsection{Two concrete examples of finite robust games}
\label{finiterobustgames}
To illustrate the robust games, Aghassi and Bertsimas use the robust form of three popular games: The Inspection Game, The Free Rider Game and the Network Routing Game.
In this thesis in all experiments that we will make we focus on the Robust Free Rider Game and the Robust Inspection Game. In this subsection we present these two games. \\

\subsubsection{Robust Free Rider Game}

The \textbf{Free rider} is a two player game in which each player has two possible actions, to contribute or not contribute in the common good. If a player decides to contribute then he bears a cost of amount c (fixed in the complete information game($\check{c}$) but uncertain under the robust approach($\tilde{c}$)). If one of the two player decide to contribute then both of them enjoy a payoff of 1. The players choose strategies simultaneously. The representation of this game summarized in Table ~\ref{rfrider} where $\tilde{c}$ belongs in a the uncertainty set $[\check{c}- \Delta,\check{c} + \Delta]$.

\begin{table}[H]
\centering
\caption{Payoff matrix of Robust Free Rider Game}
\label{rfrider}
\begin{tikzpicture}[element/.style={minimum width=1.75cm,minimum height=0.85cm}]
\matrix (m) [matrix of nodes,nodes={element},column sep=-\pgflinewidth, row sep=-\pgflinewidth,]{
         & Contrib  & NoCon  \\
Contrib & |[draw]|($1-\tilde{c},1-\tilde{c}$) & |[draw]|($1-\tilde{c},1$) \\
NoCon  & |[draw]|($1,\,\,\,\,\,\,\,\,\,1-\tilde{c}$) & |[draw]|(0,0) \\
};

\node[above=0.25cm] at ($(m-1-2)!0.5!(m-1-3)$){\textbf{Player 2}};
\node[rotate=90] at ($(m-2-1)!0.5!(m-3-1)+(-1.25,0)$){\textbf{Player 1}};
\end{tikzpicture}
\end{table}

\subsubsection{Robust Inspection Game}
The Robust inspection Game is a two player game in which the row player is the employee (possible actions:Shirk or Work) and the column player is the employer( possible actions: Inspect or not Inspect). The two players choose their actions simultaneously and then they receive the corresponding to the combination of their strategies payoffs. When the employee works he has cost $\tilde{g}$ and his employer has profit equal to $\tilde{v}$. Each inspection costs to the employer $\tilde{h}$ but if he inspects and find the employee shirking then he does not pay him his wage w. In all other cases employee's wage is paid. All values except the payment w of the employee are uncertain. In the general case: $(\tilde{g},\tilde{v},\tilde{h}) \in [ \underline{g} , \overline{g} ]\times [ \underline{v} , \overline{v} ]\times [ \underline{h} , \overline{h} ]$. \\

\begin{table}[H]
\caption{Payoff Matrix for the Robust Inspection Game}
\centering
\begin{tabular}{ |c|c|c| }
    \hline
      & Inspect & NotInspect \\ \hline
    Shirk & ($0 ,-\tilde{h}$) & (w, -w) \\ \hline
    Work & ($w-\tilde{g},\tilde{v}-w-\tilde{h}$) & ($w-\tilde{g},\tilde{v}-w$) \\
    \hline
  \end{tabular}
\label{robinspectiongame}
  \end{table}

\subsection{Existence of robust optimization equilibria}
In the previous chapter we discussed about the two very important theorems ~\eqref{nashtheorem} and  ~\eqref{HarsanyiTheo} in which Nash and Harsanyi respectively proved the existence of Equilibrium in the complete(Nash Games) and incomplete information games(Bayesian Games).\\
Following that, Aghassi and Bersimas \cite{aghassi2006robust}  except that formalized the robust games they proved the existence of equilibrium in finite games with bounded uncertainty sets.\\
To show this, they first proved that if the uncertainty set U of all possible payoff matrices $\bm{\tilde{P}}$ is bounded then: 
\begin{enumerate}
  \item The worst case expected payoff functions of all players is continuous on $\mathbb{R}^{a_1+a_2+....a_N}$ 
  \item For any player $i\in \{1,2,...N\}$ and for any fixed $\bm{x^{-i}} \in S_{-1}$ the worst case expected payoff is concave in $\bm{x^i} $
\end{enumerate}

Using these two statements and Kakutani's Fixed Point Theorem \cite{kakutani1941generalization}(like Nash did in \cite{nash1950equilibrium}for the complete information games) they proved the following theorem. \\

\begin{thm}
  \emph{(Existence of Equilibria in Robust Finite Games)}\\
  \label{EQuirob}
Any N-person, non cooperative, simultaneous-move, one shot finite robust game in which the uncertainty set of the payoff matrices $U \subseteq \mathbb{R}^{N\times\prod\limits_{i=1}^N a_i}$ is bounded  and there is no private information, has an equilibrium.
\end{thm}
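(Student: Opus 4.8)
The plan is to follow the classical Nash--Kakutani scheme, adapted to the worst-case expected payoff. First I would verify the two auxiliary properties that the excerpt attributes to Aghassi and Bertsimas: that for bounded $U$ the worst-case expected payoff $\phi_i(\bm{x^1},\dots,\bm{x^N}) = \inf_{\bm{\tilde P}\in U}\pi_i(\bm{\tilde P};\bm{x^1},\dots,\bm{x^N})$ is continuous on the product of simplices, and that for fixed $\bm{x^{-i}}\in S_{-i}$ the map $\bm{x^i}\mapsto\phi_i(\bm{x^{-i}},\bm{x^i})$ is concave. Continuity holds because each $\pi_i(\bm{\tilde P};\cdot)$ is multilinear (hence equi-Lipschitz in the strategies, uniformly over the bounded set $U$), so the pointwise infimum is a uniform limit of continuous functions and therefore continuous; boundedness of $U$ is exactly what guarantees the infimum is finite. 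Concavity holds because $\pi_i$ is \emph{linear} in $\bm{x^i}$ for each fixed $\bm{\tilde P}$ and $\bm{x^{-i}}$, and an infimum of a family of affine (hence concave) functions is concave.

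Next I would set up the fixed-point argument. Define the best-response correspondence $B_i:S_{-i}\rightrightarrows S_{a_i}$ by
\[
B_i(\bm{x^{-i}}) = \underset{\bm{u^i}\in S_{a_i}}{\operatorname{argmax}}\;\phi_i(\bm{x^{-i}},\bm{u^i}),
\]
and the joint correspondence $B:S\rightrightarrows S$ by $B(\bm{x^1},\dots,\bm{x^N}) = \prod_{i=1}^N B_i(\bm{x^{-i}})$. A tuple is a robust optimization equilibrium, by \eqref{loioi}, precisely when it is a fixed point of $B$. To invoke Kakutani's fixed point theorem I must check: (i) $S = \prod_i S_{a_i}$ is a nonempty, compact, convex subset of a Euclidean space --- immediate, as each $S_{a_i}$ is a simplex; (ii) $B(\bm{x})$ is nonempty for every $\bm{x}\in S$ --- this follows since $\phi_i(\bm{x^{-i}},\cdot)$ is continuous on the compact set $S_{a_i}$, so the maximum is attained; (iii) $B(\bm{x})$ is convex for every $\bm{x}$ --- this is where concavity of $\phi_i$ in $\bm{x^i}$ is used, since the argmax set of a concave function over a convex set is convex; (iv) $B$ has closed graph --- this follows from continuity of $\phi_i$ jointly in all arguments via the standard maximum-theorem argument (if $\bm{x^{(n)}}\to\bm{x}$, $\bm{u}^{(n)}\in B_i(\bm{x^{-i,(n)}})$ and $\bm{u}^{(n)}\to\bm{u}$, then passing to the limit in $\phi_i(\bm{x^{-i,(n)}},\bm{u}^{(n)})\ge\phi_i(\bm{x^{-i,(n)}},\bm{v})$ for all $\bm{v}$ gives $\bm{u}\in B_i(\bm{x^{-i}})$).

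With (i)--(iv) in hand, Kakutani's theorem yields a fixed point $\bm{x}^*\in B(\bm{x}^*)$, which is the desired equilibrium. The main obstacle is really the first step --- establishing continuity and concavity of the worst-case expected payoff rigorously, and in particular making the uniform-Lipschitz / uniform-convergence argument for continuity precise; once those two structural facts are secured, the remainder is the textbook Nash existence proof with $B_i$ in place of the classical best response, and the role of boundedness of $U$ is confined to guaranteeing the infimum defining $\phi_i$ is finite (and attained on the closure, if one wants the infimum to be a minimum). I would also remark that no convexity or even closedness of $U$ is needed for this argument, only boundedness, which matches the hypothesis of the theorem.
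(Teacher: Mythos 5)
Your proposal is correct and follows essentially the same route as the paper's (cited) proof: establish continuity of the worst-case expected payoff $\inf_{\bm{\tilde P}\in U}\pi_i(\bm{\tilde P};\cdot)$ and its concavity in $\bm{x^i}$ under boundedness of $U$, then apply Kakutani's fixed point theorem to the best-response correspondence exactly as in Nash's argument. The auxiliary facts you verify are precisely the two lemmas the thesis attributes to Aghassi and Bertsimas before invoking Kakutani, so there is nothing further to add.
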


\begin{proof}
\cite{aghassi2006robust}
\end{proof}

\subsection{Computing sample equilibria of Robust Finite Games}
\label{k323}
As we have already discussed in section ~\eqref{gametheory} to examine if one specific tuple of strategies is equilibrium is a trivial procedure in any kind of game. However, to find the set of all equilibria of a game, with complete or incomplete information, is a very difficult task. To overcome this problem many algorithms were developed by the researchers. In most of them the game was transformed, in an equivalent problem like stationary point problems, unconstrained penalty function minimization problem or a system of miltilinear equalities and inequalities.\\ For example for the particular group of two player, non fixed-sum, finite games the most commonly accepted algorithm for finding the equilibria is the Lemke-Howson algorithm \cite{lemke1964equilibrium}.\\

In \cite{aghassi2006robust}, a theorem for approximately finding all equilibria of a \textbf{robust finite game with bounded polyhedral uncertainty set} and no private information was developed. In particular the authors showed that the set of equilibria is a component-wise projection of the solution set of a system of multi linear equalities and inequalities. By component-wise projection we simply mean that if one solution of the multilinear system is $(\bm{x_1,x_2,....x_N}, z_i,\phi_i,\theta_i)$ then the equilibrium of the corresponding game it will be $(\bm{x_1,x_2,....x_N})$. We keep the coordinates of the solution that express the mixed strategies of the players.\\

\emph{The main steps of the computational method are presented as follows:}\\
\begin{enumerate}
\item Check if the uncertainty set of the payoff matrix is bounded polyhedral. With other words,find $\bm{F} \in {\mathbb{R}}^{l \times mn}$ and $ \bm{d} \in {\mathbb{R}}^{l \times 1}$ that satisfy the equation ~\eqref{polyset} for all possible $vec(\bm{\tilde{P}})$.
\item Find the extreme points $\bm{G}\ell), \ell \in \{1,2,...k\}$ of the uncertainty set $U $.
\item Estimate the multilinear system which corresponds to the robust finite game. (following the theorem ~\eqref{equilicomput})
\item Calculate the corresponding to the multi-linear system penalty function $h(\bm{y})$(see ~\eqref{penalty}.
\item Solve the unconstrained minimization problem  $\underset{\bm{y} \in {\mathbb{R}}^v}{\operatorname{min}} h(\bm{y})$
\end{enumerate}

The first two steps of this procedure are trivial for each given game. For example, the uncertainty set of Free Rider Game(see subsection ~\eqref{finiterobustgames}) has only two extreme points( definition ~\eqref{extremepoint}) which are the two vertices $c = \check{c}- \Delta$ and $c = \check{c} + \Delta$.  \\

The most demanding part of the steps that we have to follow, in order to find the equilibria of a robust game is the estimation of the system of equations and inequalities.\\

If we are aware that the uncertainty set U of the game is closed and bounded then we can derive the following:\\

\begin{equation}
\begin{aligned}
(x_1,x_2,....x_N) \text{is equilibrium} 
& \Leftrightarrow \bm{x_i} \in \underset{\bm{u^i}\in S_{a_i}}{\operatorname{argmax}}\,\{\underset{\bm{\tilde{P}}\in U}\inf\pi_i(\bm{\tilde{P}};\bm{x^{-i},u_i})\}\qquad \forall i \enspace in\enspace \{1,2,...N\}\\
& \Leftrightarrow \underset{\bm{\tilde{P}}\in U}\inf\pi_i(\bm{\tilde{P}};\bm{x^{-i},x_i})\geqslant\underset{\bm{\tilde{P}}\in U}\inf\pi_i(\bm{\tilde{P}};\bm{x^{-i},u_i}),\, \forall u_i \enspace in\enspace S_{a_i},\,\, \forall i \enspace in\enspace \{1,2,...N\}\\
& \Leftrightarrow \underset{\bm{\tilde{P}}\in U}\inf\pi_i(\bm{\tilde{P}};\bm{x^{-i},x_i})\geqslant\underset{u^i\in S_{a_i}}{\operatorname{max}}\underset{\bm{\tilde{P}}\in U}\inf\pi_i(\bm{\tilde{P}};\bm{x^{-i},u_i})\qquad \forall i \enspace in\enspace \{1,2,...N\}\\
&\Leftrightarrow \underset{\bm{\tilde{P}}\in U}\min\pi_i(\bm{\tilde{P}};\bm{x^{-i},x_i})\geqslant\underset{u^i\in S_{a_i}}{\operatorname{max}}\underset{\bm{\tilde{P}}\in U}\min\pi_i(\bm{\tilde{P}};\bm{x^{-i},u_i})\qquad \forall i \enspace in\enspace \{1,2,...N\}\\
 \end{aligned}
 \end{equation}
 
The change from infimum to minimum in the last equivalence of the above expression is due to the fact that U is close and bounded uncertainty set.\\
 
From the above statements we understand that $(\bm{x_1,x_2,....x_N})$ is a robust optimization equilibrium of the finite robust game iff $\enspace\forall i= 1,2...N$ :\\
 
$\underset{\bm{\tilde{P}}\in U}\min\pi_i(\bm{\tilde{P}};\bm{x^{-i},x_i})-\underset{u^i\in S_{a_i}}{\operatorname{max}}\underset{\bm{\tilde{P}}\in U}\min\pi_i(\bm{\tilde{P}};\bm{x^{-i},u_i})\geq 0 \enspace $\\

with $e'\bm{x^i}=1 $ and $\bm{x^i} \geq 0$. \\

In the case now that the uncertainty set \textbf{U is bounded polyhedral} the above system can be  converted to a multi-linear system with equalities and inequalities. (see theorem ~\eqref{equilicomput})

\begin{defn}(extreme point)
\label{extremepoint}
An extreme point of set S is a point that belongs to S, which does not lie in any open line segment joining two points of S. 
Intuitively, in our case where our set S is polyhedral the extreme points are its vertices. 
\end{defn}

\begin{thm}
  \emph{(Computation of Equilibria in Robust Finite Game(from \cite{aghassi2006robust}))}
  \label{equilicomput}
Assume that we have N-player robust finite game($N\leq\infty$ with each player has $1 \leq a_i \leq \infty$ posssible  actions)  with no private information and in which the payoff uncertainty set $ U \subseteq {\mathbb{R}}^{N\prod\limits_{i=1}^N a_i}$ is a bounded polyhedral given by ~\eqref{polyset}.
Let $\bm{G}(\ell), \ell \in \{1,2,....k\}$ denote the extreme points of U. Then the following conditions are equivalent.\\

\textbf{Condition 1:}\\
$(\bm{x^1,x^2,....x^N})$ is an equilibrium of the robust game.\\

\textbf{Condition 2:}\\
For all players $i\in \{1,2,....N\}$ there exists $z_i \in \mathbb{R}, \bm{\theta^i} \in {\mathbb{R}}^k,\phi_i \in \mathbb{R} $such that $(\bm{x^1,x^2,....x^N},z_i,\bm{\theta^i},\phi_i)$ satisfies the following constraints:
 \begin{equation}
 \label{kkkkkk}
 \begin{split}
 &\quad z_i= \phi_i \\
 &\quad z_i - \pi_1 (\bm{G}(\ell); \bm{x^1,x^2,....x^N}) \leq 0, \qquad \ell=1,2,...,k \\
 &\quad \bm{e^\top x^i}=1 \\
 &\quad \bm{x^i} \geq 0 \\
 &\quad \bm{e^\top\theta^i} = 1 \\
 &\quad \sum\limits_{\ell=1}^k \ \theta_\ell^i (\bm{G}(\ell); \bm{x^{-i}, e_{j_i}^i)} - \phi_i \leq 0 \qquad j_i= 1,...a_i \\
 &\quad \bm{\theta^i} \geq 0
  \end{split}
 \end{equation}\\
 
\textbf{Condition 3:}\\
For all players $i\in \{1,2,....N\}$ there exists $\bm{\eta_i} \in {\mathbb{R}}^m, \bm{\xi^i} \in {\mathbb{R}}^{N\prod\limits_{i=1}^N a_i} $such that $(\bm{x^1,x^2,....x^N},\bm{\eta_i},\bm{\xi^i})$ satisfies the following constraints:
\begin{equation}
 \label{Condition3}
 \begin{split}
 &\quad \bm{{\xi_i}^\top Y^i(x^{-i}) e_{j_i}^i} \leq d^\top \eta^i \qquad j_i=1,2,...,a_i \\
 &\quad \bm{F^\top\eta_i - Y^i(x^{-i})x^i} = \bm{0}, \\
 &\quad \bm{e'x^i}=1 \\
 &\quad \bm{x^i} \geq \bm{0} \\
 &\quad \bm{\eta^i} \geq \bm{0}\\
 &\quad \bm{F\xi_i} \geq \bm{d}
  \end{split}
 \end{equation}

where $\bm{Y^i(x^{-i})} \in \mathbb{R}^{(N \prod_{i=1}^N a_i)\times a_i}$ denotes the matrix such that 
 \begin{equation}
 \label{ooo} 
 vec(\bm{P})^\top \bm{Y^i(x^{-i})x^i} = \pi_i(\bm{P;x^{-i},x^i})
  \end{equation} 

\end{thm}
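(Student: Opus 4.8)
The plan is to prove the chain of equivalences by collapsing everything onto the single scalar inequality that characterises equilibrium, which was derived immediately before the statement: $(\bm{x^1},\dots,\bm{x^N})$ is an equilibrium of the robust game iff, for every player $i$,
\[
\min_{\bm{\tilde P}\in U}\pi_i(\bm{\tilde P};\bm{x^{-i},x^i})\;\ge\;\max_{\bm{u^i}\in S_{a_i}}\min_{\bm{\tilde P}\in U}\pi_i(\bm{\tilde P};\bm{x^{-i},u^i}).
\]
Write $w_i$ for the left-hand side (player $i$'s realised worst-case payoff) and $W_i$ for the right-hand side (player $i$'s worst-case best-response value); so I only have to show that each of Conditions 2 and 3, projected onto the $\bm{x}$-coordinates, is feasible for player $i$ precisely when $w_i\ge W_i$. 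Since $U$ is a nonempty bounded polyhedron and $\pi_i(\cdot\,;\bm{x^{-i},u^i})=vec(\cdot)^\top\bm{Y^i(x^{-i})u^i}$ is linear in the payoff matrix by~\eqref{ooo}, $w_i$ and the inner minimisation inside $W_i$ are finite linear programs whose optimal values are attained at the extreme points $\bm{G}(\ell)$; this is the only place boundedness of $U$ enters.

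The technical core I would isolate as a lemma is a minimax rewriting of $W_i$. Because $(\bm{P},\bm{u^i})\mapsto\pi_i(\bm{P};\bm{x^{-i},u^i})$ is bilinear and $U$, $S_{a_i}$ are compact and convex, Sion's minimax theorem gives $W_i=\max_{\bm{u^i}\in S_{a_i}}\min_{\bm{\tilde P}\in U}\pi_i=\min_{\bm{\tilde P}\in U}\max_{\bm{u^i}\in S_{a_i}}\pi_i=\min_{\bm{\tilde P}\in U}\max_{j_i}\pi_i(\bm{\tilde P};\bm{x^{-i}},\bm{e_{j_i}^i})$, the last equality because a linear function on a simplex is maximised at a vertex. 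Expressing $\min_{\bm{\tilde P}\in U}$ through convex combinations $\bm{\theta}\in\Delta_k$ of the extreme points yields the form $W_i=\min_{\bm{\theta}\in\Delta_k}\max_{j_i}\sum_{\ell}\theta_\ell\,\pi_i(\bm{G}(\ell);\bm{x^{-i}},\bm{e_{j_i}^i})$, which is exactly what appears in Condition~2, and the equivalent form $W_i=\min_{\bm{\xi}\in U}\max_{j_i}\bm{\xi}^\top\bm{Y^i(x^{-i})e_{j_i}^i}$, which matches Condition~3.

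For Condition 1 $\Leftrightarrow$ Condition 2 I would read off the block~\eqref{kkkkkk} directly. The inequalities $z_i\le\pi_i(\bm{G}(\ell);\bm{x^{-i},x^i})$ for all $\ell$ say precisely $z_i\le w_i$; the inequalities $\sum_\ell\theta_\ell^i\pi_i(\bm{G}(\ell);\bm{x^{-i}},\bm{e_{j_i}^i})\le\phi_i$ for all $j_i$ with $\bm{\theta^i}\in\Delta_k$ say precisely $\phi_i\ge\min_{\bm{\theta}\in\Delta_k}\max_{j_i}(\cdot)=W_i$ by the lemma; and the simplex constraints on $\bm{x^i},\bm{\theta^i}$ are automatic. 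Imposing $z_i=\phi_i$, the system admits a solution $(z_i,\bm{\theta^i},\phi_i)$ iff there is a common value between $W_i$ and $w_i$, i.e.\ iff $w_i\ge W_i$: for the forward direction take $z_i=\phi_i=W_i$ and a minimising $\bm{\theta^i}$; for the reverse direction read $w_i\ge z_i=\phi_i\ge W_i$. Doing this for each $i$ gives the equivalence. The argument for Condition 1 $\Leftrightarrow$ Condition 3 is the same with the half-space description of $U$ in place of its vertices and strong LP duality in place of the vertex reduction: duality applied to $w_i$ shows every multiplier $\bm{\eta_i}\ge\bm{0}$ with $\bm{F^\top\eta_i}=\bm{Y^i(x^{-i})x^i}$ obeys $\bm{d^\top\eta_i}\le w_i$, with equality at the dual optimum, while the lemma plus the constraints $\bm{F\xi_i}\ge\bm{d}$ and $\bm{\xi_i^\top Y^i(x^{-i})e_{j_i}^i}\le\bm{d^\top\eta_i}$ force $W_i\le\max_{j_i}\bm{\xi_i^\top Y^i(x^{-i})e_{j_i}^i}\le\bm{d^\top\eta_i}\le w_i$; conversely, given $w_i\ge W_i$, take $\bm{\eta_i}$ to be the optimal dual solution and $\bm{\xi_i}$ a point of $U$ attaining $\min_{\bm{\xi}\in U}\max_{j_i}(\cdot)=W_i$, and all constraints of~\eqref{Condition3} hold.

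The step I expect to be the genuine obstacle is not the logic above but making the minimax interchange and the LP-duality bookkeeping agree exactly with the signs and dimensions of $\bm{F},\bm{d},\bm{\eta_i},\bm{\xi_i}$ as written — in particular pinning down whether $U=\{\bm{P}:\bm{F}\,vec(\bm{P})\le\bm{d}\}$ or the reverse inequality is intended, and which auxiliary variable in Condition 3 is a primal point of $U$ and which is a dual multiplier. Once the lemma $W_i=\min_{\bm{\tilde P}\in U}\max_{j_i}\pi_i(\bm{\tilde P};\bm{x^{-i}},\bm{e_{j_i}^i})$ is established and strong duality is invoked, everything else is routine verification that each listed constraint encodes one of the inequalities $z_i\le w_i$, $\phi_i\ge W_i$ (resp.\ $\bm{d^\top\eta_i}\le w_i$, $W_i\le\bm{d^\top\eta_i}$) together with membership in $S$.
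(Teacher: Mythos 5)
Your argument is correct, and it is worth noting that the thesis itself never proves this theorem: it is imported verbatim from \cite{aghassi2006robust}, with only the preceding derivation (equilibrium $\Leftrightarrow$ $\min_{\bm{\tilde P}\in U}\pi_i(\bm{\tilde P};\bm{x^{-i},x^i})\ge\max_{\bm{u^i}\in S_{a_i}}\min_{\bm{\tilde P}\in U}\pi_i(\bm{\tilde P};\bm{x^{-i},u^i})$ for every $i$) supplied in the text. Your reconstruction is essentially the original Aghassi--Bertsimas argument in slightly different clothing: where they treat the best-response problem $\max_{\bm{u^i}\in S_{a_i}}\min_{\ell}\pi_i(\bm{G}(\ell);\bm{x^{-i},u^i})$ as a linear program and pass to its dual, you invoke Sion's minimax theorem plus the vertex representation $U=\mathrm{conv}\{\bm{G}(\ell)\}$ to get $W_i=\min_{\bm{\theta}\in\Delta_k}\max_{j_i}\sum_\ell\theta_\ell\,\pi_i(\bm{G}(\ell);\bm{x^{-i}},\bm{e^i_{j_i}})$; for a bilinear payoff over two polytopes these are the same tool, and either yields the certificate reading of Condition 2 ($z_i\le w_i$, $\phi_i\ge W_i$, $z_i=\phi_i$ feasible iff $w_i\ge W_i$) and of Condition 3 ($\bm{\eta^i}$ a dual multiplier certifying $\bm{d}^\top\bm{\eta^i}\le w_i$, $\bm{\xi^i}$ a worst-case matrix certifying $W_i\le\max_{j_i}\bm{\xi^{i\top}Y^i(x^{-i})e^i_{j_i}}$). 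The sign issue you flag is real but is a defect of the statement, not of your proof: \eqref{polyset} defines $U$ with $\bm{F}\,vec(\bm{A})\le\bm{d}$, while the constraints $\bm{F\xi_i}\ge\bm{d}$, $\bm{\eta^i}\ge\bm{0}$ in \eqref{Condition3} are only consistent with the reverse convention $U=\{\bm{P}:\bm{F}\,vec(\bm{P})\ge\bm{d}\}$ used in the source paper (likewise $\pi_1$ should read $\pi_i$ and a $\pi_i$ is missing in the $\bm{\theta^i}$-constraint of \eqref{kkkkkk}); your reading, which restores these, is the intended one, and with it your per-player equivalences, including strong-duality attainment (legitimate since $U$ is nonempty and bounded), go through.
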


\begin{rem}
All computations of this chapter are made based on system of Condition 2. The reason, is that the size (number of variable and constraints) of the multi-linear system of Condition 3 is much larger (see Table 1 in \cite{aghassi2006robust}).\\
However, in chapter ~\ref{chapter4}, when we develop our new distributionally robust approach(see theorem ~\eqref{alldistribequilibria}), the method for compute all equilibria is similar to the system of Condition 3(with no extreme points)\\  
\end{rem}

Up to this point, we have discussed that to approximately find the set of all equilibria of a robust game we must convert the problem to a system of equations and inequalities. 
The only part that still remains unspecified is how we can compute all feasible solutions of this multi-linear system. \\
An answer to this was also given by Bertsimas and Aghassi who used the following \textbf{Penalty Function}:\\

\begin{equation}
\label{penalty}
h(\bm{y})= 1/2 \sum\nolimits_{n \in E} [g_n(\bm{y})]^2 + 1/2 \sum\nolimits_{n \in I} [max(g_n(\bm{y}),0)]^2
\end{equation}

where $\bm{y} \in R^v, \quad g_n(\bm{y})=0 \quad if\quad n \in E \quad  and  \quad g_n(y)\leq 0\quad if \quad n \in I$. $y$ denotes the vector of all uncertain variables of the multi-linear system and each $g_n(\bm{y})$ one constraint (equality if $n \in E$ and inequality if $n \in I$).\\

Finally, by solving the unconstrained minimization problem $\underset{\bm{y} \in R^v}{\operatorname{min}} h(\bm{y})$ we collect the desired feasible solutions.

In the numerical results of \cite{aghassi2006robust} authors used what they called pseudo-Newton method with Armijo rule to solve the unconstrained optimization.\\
In this thesis we develop different methods (section ~\eqref{dsa}) for the estimation of the feasible solutions of the multi-linear system and we compare our results with the original paper (see table ~\eqref{table:nonlin}).

\section{Extension of the Special Class of robust games}
\label{specialcase}

As previously mentioned in remark ~\eqref{infimum}, in the general case of robust finite game the following is true:
\begin{equation}
\label{jimmys}
\underset{ \bm{\tilde{P}}\in U} {inf} \pi_i(\bm{\tilde{P}};\bm{x^{-i},u_i}) \geq \pi_i ( \underset{ \bm{\tilde{P}}\in U} {inf}[\bm{\tilde{P}}] ; \bm{x^{-i},u_i})
\end{equation}

However, there is one special class of games that the inequality in the previous equation become equality. Using this property for this class of games we reduce the necessary computations to the minimum, since instead of estimate the set of equilibria of our initial robust game we can simply compute the equilibria of the complete information game with fixed payoff matrix $\underset{ \bm{P} \in U} {inf}[\bm{P}]$.\\

\begin{thm}
\emph{(Special Class of Robust Games , from \cite{aghassi2006robust}) :}\\
  \label{special}
    Consider the robust game with the uncertainty set:
 \begin{equation}
   U = \{ \bm{P}(f_1,f_2,....f_v)| (f_1,f_2,....f_v) \in U_f\} \}
  \end{equation}
where 
\begin{equation}
\label{ses}
U_f= \{ (f_1,f_2,....f_v) | f_\ell \in [ \underline{f_\ell} , \overline{f_\ell} ], \ell \in \{1,2,....v\}\}.
\end{equation}
Suppose that $\forall i \in \{1,2,...N\}$ and $\forall \ell\in \{1,2,....v\} \quad \exists k(i,\ell)$ such that: 
\begin{equation}
sign( \frac{\partial}{\partial f_\ell} [P_{(j_1,....,j_n)}^i (f_1,f_2,....f_v)]_{(f_1,f_2,....f_v)=(\tilde{f_1},\tilde{f_2},....\tilde{f_v})}) = k(i,\ell)
\end{equation}
Then $(\bm{x^1,x^2,...x^N})$ is robust optimization equilibrium of this game if it is Nash equilibrium of the complete information game with same number of players and same number of actions $a_i$ for each player $i$. The payoff matrix for the complete information game is:
\begin{equation}
{\bm{Q}}_{(j_1,...j_N)}^i = {P}_{(j_1,...j_N)}^i (h_1^i,....h_v^i)
\end{equation}
where \\$$h_\ell^i = \begin{cases} \overline{f_\ell}  &\mbox{if } k(i,\ell) < 0 \\ 
\underline{f_\ell} & \mbox{if } k(i,\ell) \geq 0 \end{cases} $$
\end{thm}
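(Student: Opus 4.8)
The plan is to show that for games in this special class, the worst-case expected payoff coincides with the expected payoff evaluated at a single fixed payoff matrix (namely $\bm{Q}$ with the components $h^i_\ell$ chosen as stated), and then invoke the characterisation of Nash equilibrium for the complete-information game with that fixed matrix. The key structural fact is that, under the sign hypothesis, each entry $P^i_{(j_1,\ldots,j_N)}(f_1,\ldots,f_v)$ is monotone in each coordinate $f_\ell$, with the direction of monotonicity $k(i,\ell)$ depending only on $i$ and $\ell$ (not on the action profile $(j_1,\ldots,j_N)$). Consequently, for player $i$, the same corner point of the box $U_f$ simultaneously minimises \emph{every} entry of $i$'s payoff slice: if $k(i,\ell)<0$ we push $f_\ell$ to its upper bound $\overline{f_\ell}$, and if $k(i,\ell)\ge 0$ we push it to the lower bound $\underline{f_\ell}$. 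This corner is exactly $(h^i_1,\ldots,h^i_v)$, and it does not depend on the strategies $\bm{x^{-i}},\bm{u^i}$.

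First I would fix player $i$ and arbitrary strategies $\bm{x^{-i}}\in S_{-i}$, $\bm{u^i}\in S_{a_i}$, and examine
\[
\pi_i(\bm{P};\bm{x^{-i},u^i}) = \sum_{j_1,\ldots,j_N} P^i_{(j_1,\ldots,j_N)} \prod_{k=1}^N x^k_{j_k},
\]
where all the weights $\prod_k x^k_{j_k}$ are nonnegative and sum to one. Because $\pi_i$ is a nonnegative-weighted sum of the entries $P^i_{(j_1,\ldots,j_N)}(f)$, and each such entry is monotone in $f_\ell$ in the direction $k(i,\ell)$, the map $(f_1,\ldots,f_v)\mapsto \pi_i(\bm{P}(f);\bm{x^{-i},u^i})$ is itself monotone in each $f_\ell$ in that same direction. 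Minimising a coordinate-wise monotone function over the box $U_f$ is achieved at the corner that respects all the monotonicity directions, i.e. at $f=(h^i_1,\ldots,h^i_v)$. Hence
\[
\inf_{\bm{\tilde P}\in U}\pi_i(\bm{\tilde P};\bm{x^{-i},u^i}) = \pi_i(\bm{P}(h^i_1,\ldots,h^i_v);\bm{x^{-i},u^i}) = \pi_i(\bm{Q};\bm{x^{-i},u^i}),
\]
using that $Q^i_{(j_1,\ldots,j_N)} = P^i_{(j_1,\ldots,j_N)}(h^i_1,\ldots,h^i_v)$ by definition. (One should note the $\inf$ is attained since $U_f$ is compact and $\pi_i$ continuous, so it is really a $\min$.)

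With this identity in hand, I would compare the robust best-response condition \eqref{robustbestresponse} with the Nash best-response condition \eqref{br}. Substituting the identity, for each player $i$,
\[
\underset{\bm{u^i}\in S_{a_i}}{\operatorname{argmax}}\Big[\inf_{\bm{\tilde P}\in U}\pi_i(\bm{\tilde P};\bm{x^{-i},u^i})\Big] = \underset{\bm{u^i}\in S_{a_i}}{\operatorname{argmax}}\,\pi_i(\bm{Q};\bm{x^{-i},u^i}),
\]
and since this holds for every $i$, the fixed-point condition \eqref{loioi} defining a robust optimization equilibrium is literally the same system as the fixed-point condition \eqref{lllll} defining a Nash equilibrium of the complete-information game with payoff matrix $\bm{Q}$. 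Therefore every Nash equilibrium of that game is a robust optimization equilibrium of the original robust game (and conversely, though the statement only asserts one direction). The main obstacle is the first step: one must be careful that the direction of monotonicity $k(i,\ell)$ is genuinely independent of the action profile $(j_1,\ldots,j_N)$ — this is exactly what the sign hypothesis in the theorem guarantees — because otherwise different entries of $i$'s payoff slice would be minimised at different corners of $U_f$ and no single fixed matrix $\bm{Q}$ would work. A minor subtlety is the boundary case $k(i,\ell)=0$ (the entry is constant in $f_\ell$), where either endpoint works and the convention $h^i_\ell=\underline{f_\ell}$ is harmless.
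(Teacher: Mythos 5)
Your proposal is correct and follows essentially the same route the paper takes: establish $\min_{\bm{\tilde P}\in U}\pi_i(\bm{\tilde P};\bm{x^{-i},u^i})=\pi_i(\bm{Q};\bm{x^{-i},u^i})$ by exploiting that the sign condition makes a single corner $(h_1^i,\ldots,h_v^i)$ minimize every entry of player $i$'s payoff slice (hence, with nonnegative mixed-strategy weights, the whole expected payoff), and then identify the robust best-response condition with the Nash best-response condition for the fixed matrix $\bm{Q}$. This is exactly the argument the thesis uses when proving its extension (Theorem \ref{ourspecial}), the stated theorem itself being cited from \cite{aghassi2006robust}.
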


\begin{rem}
Therefore, using this theorem, instead of solving the robust game like we developed in the previous section we can solve the complete information game with payoff matrix $ {\bm{Q}}_{(j_1,...j_N)}^i $. $\bm{Q}$ is the matrix that can be obtained if we choose the right values for each parameter $f_\ell \in [ \underline{f_\ell} , \overline{f_\ell} ]$ in order to achieve the minimum payoff for each player $ \forall (j_1,j_2,...,j_n) \in \prod \limits_{i=1}^N \{1,2...,a_i\}.$
\end{rem}

\subsubsection{The extension}  
Aghassi and Bertsimas in their theorem (theorem ~\eqref{special}) assumed that the uncertainty set must be equal to $ U = \{ P(f_1,f_2,....f_v)| (f_1,f_2,....f_v) \in U_f\}$ where $U_f$ is like ~\eqref{ses}.
In our extension each parameter $f_\ell$ its not necessary to belongs in a closed continuous set $[ \underline{f_\ell} , \overline{f_\ell} ]$. \\
The only necessary assumption is that the set where each parameter $f_\ell$ belongs must have maximum($ (maxf_\ell)$) and minimum ($ (minf_\ell) $) values.\\
That is, if $f_\ell \in K_\ell $ then $(maxf_\ell), (minf_\ell) \in K_\ell$ and this means that  $ \forall x\in K_\ell$ the following must be true: $$ (minf_\ell) \leq x \leq (maxf_\ell)$$.\\
In our formulation, set $ K_\ell$, will be always like this,  $\forall \ell\in \{1,2,....v\}$.

\begin{thm}
\emph{(Extension of special class):}
  \label{ourspecial}
    The formulation is exactly the same as the original theorem except that the uncertainty set is: 
  \begin{equation}
   U = \{ P(f_1,f_2,....f_v)| (f_1f_2,....f_v) \in U_f\} \}
  \end{equation}
where 
\begin{equation}
U_f= \{ (f_1,f_2,....f_v) | f_\ell \in K_\ell, \ell \in \{1,2,....v\}\}
\end{equation}
Thus, the parameters in the payoff matrix $ {\bm{Q}}_{(j_1,...j_N)}^i = P_{(j_1,...j_N)}^i (h_1^i,....h_v^i)$
take values:
$$h_\ell^i = \begin{cases} (max {f_\ell})  &\mbox{if } k(i,\ell) < 0 \\ 
(min {f_\ell}) & \mbox{if } k(i,\ell) \geq 0 \end{cases} $$

    \end{thm}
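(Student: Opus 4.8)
The plan is to reduce Theorem~\ref{ourspecial} to the already-established Theorem~\ref{special} by observing that nothing in the proof of the original result actually uses the fact that each parameter ranges over a \emph{closed interval} $[\underline{f_\ell},\overline{f_\ell}]$: all that matters is that the relevant worst-case optimization over $f_\ell$ is attained at one of the two extreme values, and this in turn only requires that $K_\ell$ possess a maximum $(\max f_\ell)$ and a minimum $(\min f_\ell)$, both of which are assumed to lie in $K_\ell$. So first I would recall the key monotonicity hypothesis: for every player $i$ and every parameter index $\ell$, the sign of $\tfrac{\partial}{\partial f_\ell} P^i_{(j_1,\dots,j_N)}(f_1,\dots,f_v)$ is a constant $k(i,\ell)$ independent of the point at which the derivative is evaluated and of the action profile $(j_1,\dots,j_N)$. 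Hence $P^i_{(j_1,\dots,j_N)}$ is a monotone function of each coordinate $f_\ell$ separately (increasing if $k(i,\ell)\ge 0$, decreasing if $k(i,\ell)<0$), and therefore so is the expected payoff $\pi_i(\bm{P}(f_1,\dots,f_v);\bm{x^{-i}},\bm{u^i})$, since $\pi_i$ is a nonnegative-coefficient linear combination (the coefficients being products of mixed-strategy probabilities) of the entries $P^i_{(j_1,\dots,j_N)}$.

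Next I would argue that the infimum of a function that is monotone in each coordinate, taken over a product set $U_f=\prod_\ell K_\ell$ where each $K_\ell$ has a least and a greatest element, is attained at the corner $(h_1^i,\dots,h_v^i)$ where $h_\ell^i=(\max f_\ell)$ if $k(i,\ell)<0$ and $h_\ell^i=(\min f_\ell)$ if $k(i,\ell)\ge 0$. Concretely,
\begin{equation}
\underset{\bm{\tilde P}\in U}{\inf}\,\pi_i(\bm{\tilde P};\bm{x^{-i}},\bm{u^i}) = \pi_i\bigl(\bm{P}(h_1^i,\dots,h_v^i);\bm{x^{-i}},\bm{u^i}\bigr) = \pi_i(\bm{Q};\bm{x^{-i}},\bm{u^i}),
\end{equation}
where the first equality is the coordinatewise-monotonicity argument and the second is just the definition of the matrix $\bm{Q}$ in the statement. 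Note the infimum is actually a minimum here because the corner point belongs to $U_f$ by hypothesis; this is exactly where the assumption ``$(\max f_\ell),(\min f_\ell)\in K_\ell$'' is used, and it is the only place the closed-interval structure of the original theorem is needed, so relaxing it to ``$K_\ell$ has attained max and min'' costs nothing.

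With this identity in hand, the best-response condition~\eqref{robustbestresponse} for player $i$ in the robust game becomes $\bm{x^i}\in\operatorname{argmax}_{\bm{u^i}\in S_{a_i}}\pi_i(\bm{Q};\bm{x^{-i}},\bm{u^i})$, which is precisely the Nash best-response condition~\eqref{br} for the complete-information game with payoff matrix $\bm{Q}$. Applying this for every player $i\in\{1,2,\dots,N\}$ and invoking~\eqref{loioi} and~\eqref{lllll} shows that $(\bm{x^1},\dots,\bm{x^N})$ is a robust optimization equilibrium of the game with uncertainty set $U$ if (and, with the same argument run in reverse, only if) it is a Nash equilibrium of the complete-information game with payoff matrix $\bm{Q}$ and the same players and action spaces, which is the claim. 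I do not anticipate a serious obstacle: the only subtlety is making the monotonicity-plus-extremum argument rigorous without assuming convexity or continuity of $K_\ell$ — one should be slightly careful that the sign condition on the derivative is stated only for points in the interior/where the derivative exists, so the cleanest route may be to phrase the monotonicity directly as an assumption on $P^i$ (as the original proof presumably does) rather than re-deriving it from the derivative each time. Beyond that bookkeeping point, the extension is essentially immediate, since the proof of Theorem~\ref{special} never exploited more than the existence of the extreme values.
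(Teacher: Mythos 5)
Your proposal is correct and follows essentially the same route as the paper's proof: both reduce the claim to the identity $\min_{\bm{\tilde P}\in U}\pi_i(\bm{\tilde P};\bm{x^{-i}},\bm{u^i})=\pi_i(\bm{Q};\bm{x^{-i}},\bm{u^i})$ and then match the robust best-response condition \eqref{loioi} with the Nash best-response condition \eqref{lllll}. The only cosmetic difference is packaging: the paper establishes the identity via two inequalities (the corner matrix lies in $U$, and $\bm{Q}^i_{(j_1,\dots,j_N)}$ entrywise dominates from below all $P^i_{(j_1,\dots,j_N)}(f)$ over $U_f$), whereas you state the same fact directly as corner-attainment of a coordinatewise-monotone function over a product set whose factors have attained maxima and minima — which is exactly where the hypothesis $(\max f_\ell),(\min f_\ell)\in K_\ell$ enters in both arguments.
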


\begin{proof}
 If we show that the 
 \begin{equation}
 \label{hetra}
 \underset{\bm{\tilde{P}}\in U} {min} \pi_i (\bm{\tilde{P}; x^{-1},u^i})= \pi_i(\bm{Q;x^{-1},u^i}) 
 \end{equation}
 then
 \begin{equation}
\label{sdsdd}
 x_i \in \underset{\bm{u^i}\in S_{a_i}}{\operatorname{argmax}}\,\{\underset{\bm{\tilde{P}}\in U}\min\pi_i(\bm{\tilde{P}};\bm{x^{-i},u_i})\}=\underset{\bm{u^i}\in S_{a_i}}{\operatorname{argmax}} \pi_i(\bm{Q;x^{-1},u^i}) 
\end{equation} 
 and by definitions of the Robust Optimization equilibrium ~\eqref{loioi} and Nash Equilibrium ~\eqref{lllll} we obtain the desired result. Proving this we show that $(\bm{x^1,x^2,....,x^n})$ is equilibrium of the robust game if and only if is a Nash Equilibrium in complete information game with $\bm{Q}$ as a fixed payoff matrix.\\

$\bm{Q}$ is payoff matrix so it belongs to the uncertainty set $U$. This implies that: 
  \begin{equation}
  \pi_i (\bm{Q;x^{-1},u^i}) \geq {\underset {\bm{\tilde{P}}\in U} \min \pi_i (\bm{\tilde{P}};\bm{x^{-i},u_i})} \quad \forall i \in \{1,2,..N\} \quad \forall \{\bm{x^{-i},u^i}\}\in S.
  \end{equation}
  
Conversely, by the definition of matrix $\bm{Q}$ and the values of the parameters $(h_1^i,....h_v^i)$ the follow is true $\forall i \in \{1,2...,N\}$ and $\forall \{j_1,j_2,..j_N\} \in \prod_{i=1}^N\{1,2,...a_i\}$:
$${\bm{Q}}_{(j_1,...j_N)}^i \leq P_{(j_1,...j_N)}^i (f_1^i,....f_v^i),\,\, \forall (f_1,f_2,...f_v) \in U_f$$
Hence for every tuple of mixed strategies and for every player:

$${\underset {\bm{\tilde{P}}\in U} \min \pi_i (\bm{\tilde{P}};\bm{x^{-i},u_i})} = {\underset {\bm{\tilde{f}} \in U_f} \min \pi_i (P(\tilde{f_1},\tilde{f_2},..\tilde{f_v});\bm{x^{-i},u_i})}$$

 By the definition of expected payoff in complete information game and from the fact that 
 
$$ \underset {\bm{\tilde{f}} \in U_f} \min  P_{(j_1,...j_N)}^i (\tilde{f_1},\tilde{f_2},..\tilde{f_v}) =  P_{(j_1,...j_N)}^i (h_1,....h_v)$$
 we obtain the following:
\begin{equation}
\pi_i (\bm{Q;x^{-1},u^i}) \leq {\underset {\bm{\tilde{P}}\in U} \min \pi_i (\bm{\tilde{P}};\bm{x^{-i},u_i})} \quad \forall i \in \{1,2,..N\} \quad \forall \{\bm{x^{-i},u^i}\}\in S.  
  \end{equation}
  \end{proof}

 \subsubsection{Importance of the extension:}
 \label{importansce}
Consider the Robust Inspection Game of subsection ~\eqref{finiterobustgames}.
In the general case of this game the uncertainty set is:  $(\tilde{g},\tilde{v},\tilde{h})\in [ \underline{g} , \overline{g} ]\times [ \underline{v} , \overline{v} ]\times [ \underline{h} , \overline{h} ]$ which is like the desired form of equation ~\eqref{ses}.\\
For example if $ \tilde{g}\in [8,12], \tilde{v} \in [16,24], \tilde{h} \in [4,6] and w=15 $ then following the theorem ~\eqref{special} the initial robust inspection game is equivalent with the complete information game with payoff matrix:\\

\begin{table}[H]
\caption{Payoff Matrix for the Robust Inspection Game}
\centering
\begin{tabular}{ |c|c|c| }
    \hline
      & Inspect & NotInspect \\ \hline
    Shirk & (0 ,-6) & (15, -15) \\ \hline
    Work & (3,-5) & (3,1) \\
    \hline
  \end{tabular}
  \label{specificrobinspectiongame}
  \end{table}
  
Now with our extension(theorem ~\eqref{ourspecial}), instead solving the initial robust game we can solve the complete information game of table ~\eqref{specificrobinspectiongame} for more possible values of the uncertain parameters. The only limitation is that the set $K_\ell$ of each parameter $f_\ell$ must have maximum and minimum values.\\
For example the equilibria of a robust game with uncertainty parameters  $ \tilde{g} \in \{8,8.5,9,12\}, \tilde{v} \in [16,18] \cup \{23,24\}, \tilde{h} \in \{4,6\}$ and $ w=15 $ are exactly the same with the complete information game of table ~\eqref{specificrobinspectiongame}.

\section{Experimentation and Implementation of algorithms}
\label{dsa}

As previously discussed in subsection ~\eqref{k323}, to approximately compute the equilibria of a robust game, Aghassi and Bertsimas first estimate a multi-linear system (see theorem ~\eqref{equilicomput} ) and then to find the feasible solutions of this system they minimise the penalty function ~\eqref{penalty}, using Pseudo-Newton Method with Armijo rule.\\

In this section we provide three other methods for approximately computing sample robust optimization equilibria. In particular, using these methods we solve the two games that Berstimas and Aghassi first presented in \cite{aghassi2006robust} and that we have already described in subsection ~\eqref{finiterobustgames}. Then we make a comparison of our results with the original paper.\\

The specific robust games that we deal with are:
\begin{itemize}
\item The Robust Free Rider Game with uncertainty set $\tilde{c} \in [ \underline{c} , \overline{c} ]= [1/4, 5/8]$
\item The Robust Inspection Game with
$(\tilde{g},\tilde{v},\tilde{h})\in [ \underline{g} , \overline{g} ]\times [ \underline{v} , \overline{v} ]\times [ \underline{h} , \overline{h} ]= [8,12]\times [16,24] \times [4,6]$ and $w=15$.
\end{itemize}

For the implementation of the methods we used Matlab R2014b and all numerical evaluations of this section were conducted on a 2.27GHz, Intel Core i5 CPU 430 machine with 4GB of RAM.\\

\subsubsection{The new methods}

The methods that we use for solving the aforementioned games are the following:
\begin{itemize}
\item \emph{BFGS algorithm with Armijo rule}: Like Bertsimas and Aghassi estimate the multi-linear system and then minimise the penalty function ~\eqref{penalty} using BFGS algorithm with Armijo rule.
\item \emph{Steepest Descent method with Armijo rule}: Again estimate the multi-linear system but now for the minimization of the penalty function ~\eqref{penalty} we use Steepest Descent  with Armijo rule.
\item \emph{Modelling method using the Matlab toolbox of YALMIP}.\cite{lofberg2004yalmip}
\end{itemize}

In all methods, for the computation of the desired multi-linear system we used Theorem ~\eqref{equilicomput} and we formulated the set of equilibria using System ~\eqref{kkkkkk}.\\

\underline{\emph{BFGS and Steepest Descent algorithms}}\\
In all runs of these two algorithms we initialized the strategy $\bm{x^i}$ of each player $i \in \{1,2...N\}$ in a way that satisfy the non negativity and normalization constraints. The vector $\bm{\theta^i}$, that corresponds to the number of the extreme points, was initialized in the same way. Furthermore, we initialized the variable $z_i$ of each player to be the maximum value of the upper-bound of the constraint on $z_i$ and the variable $\phi_i$ to be either equal to $z_i$ or equal to the minimum value imposed by the lower-bound constraint on $\phi_i$.\\

To be able to compare the pseudo-Newton Method with the Steepest Descent and the BFGS algorithms that we create, the penalty method ~\eqref{penalty} of the original paper was employed.
For the implementation of these algorithms we used the Optimization Toolbox of Matlab. More specifically we used the \emph{fminunc} function which is able to provide the minimum of unconstrained multi-variable function.

\begin{equation}
[\bm{y},fval,exitflag,output] = fminunc ( h(\bm{y}),  \bm{Y_0}  )\\
\end{equation}

As inputs \emph{fminunc} accepts the $h(\bm{y})$ which is the penalty function~\eqref{penalty} that we want to minimise and $Y_0$ which represents the starting point of the algorithm. In the outputs, $y$ is the desired minimum point of $h(\bm{y})$, fval is the value of the objective function $h(\bm{y})$ at the solution $\bm{y}$, exitflag is a value that describes the exit condition and output is a structure that contains information about the optimization.\\
By using this function is not necessary to calculate the gradient or the Hessian matrix of the penalty method $h(\bm{y})$ that we have to minimize.\\
In particular, when we use fminunc, two line search strategies are used, depending on whether gradient information is readily available or whether it must be calculated using a \textbf{finite difference method}. When gradient information is available, the default is to use a cubic polynomial method. When gradient information is not available which is our case , the default is to use a mixed quadratic and cubic polynomial method.\\

The \textbf{termination criteria} for each run of these two methods are the same with  \cite{aghassi2006robust}.
\begin{itemize}
\item Robust Free Rider Game: We terminated each run of the algorithms either when the objective function fallen below $10^{-10}$ or if the number of iterations exceeded 2000. 
\item Robust Inspection Game: We terminated the each run of the algorithms if the objective function fallen bellow $10^{-8}$.
\end{itemize}

\underline{\emph{Modelling method using YALMIP}}\\
The YALMIP model that we generated is directly computed from the multi-linear system. Therefore, is not necessary to use any penalty function like the other two algorithms. This is great advantage since there is no need for extra calculations. The output of this algorithm is a 3-dimensional matrix whose each 2-dimensional face represents the strategies of the two players that form an equilibria. Each of these 2-dimensional face has the following form:
\begin{equation*}
\begin{pmatrix}
x^1_1 & x^2_1\\
x^1_2 & x^2_2 
\end{pmatrix}
\end{equation*}

\subsubsection{Results and Comparison of the Methods}

Note that all methods produce exactly the same equilibria with the original paper. That is, for the robust free rider game with $\tilde{c} \in  [1/4, 5/8]$ we obtain three equilibria ($x^1_1, x^2_1$): (1,0),(0,1) and (3/8,3/8) and for the robust inspection game with $\tilde{g} \in [8,12], \tilde{v}\in [16,24], \tilde{h} \in [4,6]$ and $w=15$ only one unique equilibrium ($x^1_1, x^2_1$)=(0.4,0.8).\\

Nevertheless, our goal is not to find the equilibria of these games but to compare which one of these methods compile faster.\\
The differences of the four methods that we studied are summarized on Table ~\ref{table:nonlin}.
The main compared features of the algorithms are the \textbf{time} that each algorithm takes to compile and the number of \textbf{average iterations}.  The first row of each game corresponds to the Pseudo-Newton method and the results that Berstimas and Aghassi found when they run their method. The rest three rows show the results of our experimentation for the three other methods with which we choose to solve the two games.\\

To estimate the compilation time for the BFGS and Steepest Descent algorithms we use the \emph{``tic-toc"} command of Matlab and for the YALMIP method we use the \emph{``sol.solvertime"} command, where sol is the solution of our model.\\

For the Robust Inspection game in \cite{aghassi2006robust}, authors run their numerical method only one time since they know that this game has unique equilibrium (0.4,0.8). In our results we run all methods for both games 15 times. This is the reason for the difference between sum and average time in the robust inspection game.

\begin{table}[H]
\caption{Comparison of the 4 algorithms which approximately compute sample robust optimization equilibria (time is presented in seconds)}
\centering
\begin{tabular}{c c c c c c}
\hline
Algorithms & min-time & max-time & sumtime & average-time &average-iterations \\ [0.5ex]
\hline
\textbf{\emph{Robust Free Rider Game}} \\
Pseudo-Newton&-&-&110.748&7.3827&1652.1 \\
BFGS&0.0565& 0.1063&1.2740& 0.0849&52.9333  \\
Steepest-Descent&0.0495&0.1177 &1.3080 &0.0872& 44.9333 \\
YALMIP &  0.2091& 0.2657 &  3.4716 &  0.2314 & -  \\ [1ex]
\hline
\textbf{\emph{Robust Inspection Game}} \\
Pseudo-Newton&-&-&-&0.5&71 \\
BFGS&  0.1179&0.1823&1.9331&0.1289&74.0667 \\
Steepest-Descent&   0.1186&0.1564 & 2.1241&  0.1416&70.6 \\
YALMIP & 0.2688 & 1.8690 &  9.0447& 0.6030& - \\ [1ex]
\hline
\end{tabular}
\label{table:nonlin}
\end{table}

\textbf{Observations and Remarks:}
\begin{itemize}
\item  As we expect the BFGS algorithm is faster than the Steepest Descent (see section ~\eqref{unconOptimization}). In fact the fastest method in both problems is the BFGS and the slowest is the modelling with YALMIP. \\
Even though YALMIP is the slowest method, we choose to use it because there is no need for extra pre-calculations that are prone to human error. (YALMIP model is directly formulated from the multi-linear system so the using of penalty function is unnecessary).

\item Note, that the average computational time of all methods that we present is preferred compare to the original Pseudo-Newton Method of \cite{aghassi2006robust}. \\
We consider that this difference is attributable to the use of different computers.
\end{itemize}

\chapter{The New Model - Distributionally Robust Game Theory} 
\label{chapter4}
In this chapter we present for the first time in the literature a new model of incomplete information games without private information in which the players use a distributionally robust optimization approach to cope with payoff uncertainty. 

In the first part of this chapter we propose the new model of games and show that under specific assumptions about the ambiguity set and the values of risk levels, Distributionally Robust Game constitutes a true generalization of the three finite games that we developed in previous chapters (Nash games, Bayesian Games and Robust Games).

In the next section, we prove that the set of equilibria of an arbitrary distributionally robust game with ambiguity set like the one of equation ~\eqref{adas} and without private information can be computed as the component-wise projection of the solution set of a multi-linear system of equations and inequalities. 

Subsequently, for special cases of such games we show equivalence to complete information finite games (Nash Games) with the same number of players and same action spaces. Thus, when our game falls within these special cases one can simply solve the corresponding Nash Game.

Finally to concretize the idea of a distributionallly robust game we present two examples, the Distributionally Robust Free Rider Game and the Distributionally Robust Inspection Game.\\

$$\hbox{\rule{2cm}{1pt}}$$

\begin{equation}
 \label{adas}
\mathcal{F} = \{ Q : Q[\bm{W} \cdot vec(\bm{\tilde{P}})\leq \bm{h}] = 1 , \;\; \mathbb{E}_Q [vec \bm{\tilde{P}}] = \bm{m}, \;\; \mathbb{E}_Q [\left \lVert vec (\bm{\tilde{P}}) - \bm{m} \right \rVert_1] \leq s \}
\end{equation}

We require that any probability distribution that belongs in this ambiguity set must satisfy three properties:
\begin{enumerate}
\item  All payoff matrices $\bm{\tilde{P}}$ of the probability distributions must belong to an uncertainty bounded polyhedral set $\bm{U}= [\bm{P} : \bm{W} \cdot vec(\bm{P})\leq \bm{h}] $.
\item All distributions of $\bm{\tilde{P}}$ that belong in this ambiguity set must necessarily have fixed expected value and equal to vector $\bm{m}$.
\item We require that any payoff matrix of any distribution Q must be close to its expected value. The maximum possible distance is denoted with variable s. The 1-norm is preferable since it leads to linear constraints which are easier to handle numerically.\\
\end{enumerate}

\section{Formulation of the New Model}
\label{mlml}

In previous chapters we discussed about three very important classifications of Games. First in Chapter 2 we developed the Static Games of Complete Information(Nash Games) and the Static Games of Incomplete Information(Bayesian Games). Then in the Chapter 3 we focused on the recently proposed class of distribution-free model of incomplete-information games (Robust Games).\\
The theory for every aforementioned game was developed based on two very important concepts of Game Theory area: \textbf{the Best Response and the Equilibrium}.\\
Following this, we find appropriate to introduce the new model of Distributionally Robust Games by giving these two definitions and explain them later in details.

\begin{defn}
\label{bestreD}
\emph{(Best Response in Distributionally Robust Games)}\\
In the distributionally robust model, for the case without private information, players i's best response to the other players strategies $\bm{x^{-i}} \in S{-i}$ must belong to:\\

\begin{equation} 
\underset{u^i\in S_{a_i}}{\operatorname{argmin}}\,\sup_{Q \in \mathcal{F}} \; Q\text{-CVaR}_{\varepsilon_i} [-\pi_i(\bm{\tilde{P};x^{-i},u^i})]  
\end{equation}\\

\end{defn}

\begin{defn}
\label{DROE}
\emph{ (Distributionally Robust Optimization Equilibrium)}\\
$(\bm{x^1,x^2,}...\bm{x^N}) \in S$ is said to be a Distributionally Robust Optimization Equilibrium of the corresponding game with incomplete information iff $\forall i \in \{1,2,..N\},$\\

\begin{equation} 
\bm{x^i} \in \underset{u^i\in S_{a_i}}{\operatorname{argmin}}\,\sup_{Q \in \mathcal{F}} \; Q\text{-CVaR}_{\varepsilon_i} [-\pi_i(\bm{\tilde{P};x^{-i},u^i})]  
\end{equation}\\
\end{defn}

As we developed in Section ~\eqref{distniel} Distributionally Robust Optimization is closely related to both stochastic programming and robust optimization. Based on this, our new model of games can be considered a concept closely related to Harsanyi's Bayesian Games and Robust Games that we previously discussed. \\
In more detail, in Bayesian Games we assume that all players of the game know the exact distribution of the payoff matrix. Now, in the Distributionally Robust approach the players do not know the exact distribution. Instead, they only be aware of a common known ambiguity set $\mathcal{F}$ of all possible probability distributions $Q$ that satisfy some specific properties. These distributions have no restriction in their form. That is, the ambiguity set may consists by both, discrete and continuous distributions of the payoff matrix.\\
In addition, similar to the robust games framework, we assume that each player  adopts a worst case approach to the uncertainty. Only now the worst case is computed over all probability distributions within the set  $\mathcal{F}$. For this formulation that is similar with distributionally robust optimization concept we named these games \textbf{Distributionally Robust Games} and we refer to their equilibrium as \textbf{Distributionally Robust Optimization Equilibrium}.\\

\subsubsection{Explanation of the CVaR in the formula}

As we have mentioned in subsection ~\eqref{CVARDEFIni} Conditional Value at Risk(CVaR) is one of the most popular quantile-based risk measures because of its desirable computational properties. \\
Exactly for these properties we chose to introduce CVaR in the formulas of best response and distributionally robust optimization equilibrium of the new model.\\ 
Using $Q\text{-CVaR}_{\varepsilon_i}$, we allowed the players to have several risk attitudes which is a major difference compared to all other games that we thoroughly analysed in this thesis and in which the players are always risk neutral. Important hypothesis is that risk attitude is a fixed characteristic of each player and it can not be changed depending the game. It is not a notion like the mixed strategy that a player can choose in order to achieve his best response and minimise his loss. More specifically, the parameter $\varepsilon_i \in (0,1)$ determines the risk-aversion of each decision-maker. In detail, if player i has risk level $\varepsilon_i = 1$ this means that he is risk neutral since the Conditional Value at Risk is equal to the expected value of his loss function ($Q\text{-CVaR}_{\varepsilon_i} = \mathbb{E}_Q $). On the other hand if $\varepsilon_i \leq 1$ the player is risk averse and as $\varepsilon_i \longrightarrow 0$ the risk aversion of the player become larger. \\
Conclusively, as parameter $\varepsilon_i$ decreases the value of  $Q\text{-CVaR}_{\varepsilon_i}$ increases and the risk aversion of the player becomes larger and vice versa. \\
With the introduction of a risk measure in our model we take into account not only that player wish to maximize his gain (minimize loss) but and how much is willing to risk to achieve this maximum value(minimum value).\\

\textbf{$1^{st}$ assumption:}\\
\emph{The first assumption that we have to make for the new model is that the risk attitude of each player is assumed to be common knowledge. That is, each player knows how much risk averse are the other players and that all other players know that he knows.} \\

In general, CVaR can be calculated from either the probability distribution of gains or the probability distribution of losses. In this thesis we decide to follow the original formulation of Rockafellar and Uryasev (see \cite{rockafellar2002conditional} and \cite{rockafellar2000optimization}) and calculate it from the distribution of losses.\\
For this reason, in definitions of best responce (definition ~\eqref{bestreD}) and distributionally robust optimization equilibrium  (definition ~\eqref{DROE}) we use the expected loss function of player $i$, $-\pi_i(\bm{P;x^{-i},x^i})$.
Loss distributions are also responsible for the using of $ \underset{\bm{u^i}\in S_{a_i}}{\operatorname{argmin}}\,\sup_{Q \in \mathcal{F}} \;$ instead of $\underset{\bm{u^i}\in S_{a_i}}{\operatorname{argmax}}\,\inf_{Q \in \mathcal{F}} \;$ that we use in robust games. 

\begin{rem}
In distrinutionally robust game the `performance' of player's mixed strategy is measured by his expected loss. Given the other player's strategies each player seeks to minimize his worst case CVaR. The worst case CvaR is taken with respect to the ambiguity set, and the expected loss is taken over the mixed strategies of the players.
\end{rem}

Finally, in the formulation of the distributionally robust game we have to make another two assumptions.\\

\textbf{$2^{nd}$ assumption:}\\
\emph{The players commonly know the ambiguity set of all possible distributions (discrete and continuous) of the payoff matrix.} \\

\textbf{$3^{rd}$ assumption:}\\
\emph{Each player adopts, like the Robust games, a worst case approach to the uncertainty, only now the worst case is computed over all probability distributions within the set  $\mathcal{F}$. In particular we assume that all players use a worst case CVaR approach.}

\subsection{A generalization of all other finite Games}

From the formulation of the distributionally robust games (definitions of best response and equilibrium) we can easily understand that only two are the parameters that are amenable to change. These are, risk level  $\varepsilon_i$ of each player i and the ambiguity set $\mathcal{F}$.
In particular, we assume that parameter $\varepsilon_i$ which shows the risk level of each player can take any value in the interval $(0,1)$ and we make no assumptions for the ambiguity set. That is, depending on the game that one faces he can choose the more suitable properties that the distributions of the ambiguity set must satisfy.\\

Hence, if we assume some extra constraints for parameter $\varepsilon_i$ and set $\mathcal{F}$, the previous general formulation can become very specific.\\

Let's assume that all players $i \in \{1,2,...N\}$ have the same risk level, $\varepsilon_i = 1$. Then from the definition of CVaR we obtain the following $ \forall i \in \{1,2,....,N\}$:

\begin{equation}
\label{elnet3}
\begin{aligned}
Q\text{-CVaR}_{\varepsilon_i} [-\pi_i(\bm{\tilde{P};x^{-i},u^i})] 
& = Q\text{-CVaR}_{1} [-\pi_i(\bm{\tilde{P};x^{-i},u^i})] \\
& = \mathbb{E}_Q [ -\pi_i(\bm{\tilde{P};x^{-i},u^i})]  
\end{aligned}
\end{equation}

Therefore the definition of best response ~\eqref{bestreD} become:
\begin{equation}
\label{elnet4}
\begin{aligned}
\underset{u^i\in S_{a_i}}{\operatorname{argmin}}\,\sup_{Q \in \mathcal{F}} \; Q\text{-CVaR}_{\varepsilon_i} [-\pi_i(\bm{\tilde{P};x^{-i},u^i})] 
& = \underset{u^i\in S_{a_i}} {\operatorname{argmin}} \,\sup_{Q \in \mathcal{F}} \; \mathbb{E}_Q [-\pi_i(\bm{\tilde{P};x^{-i},u^i})] \\
& = \underset{u^i\in S_{a_i}} {\operatorname{argmin}} \,\sup_{Q \in \mathcal{F}} \; [-\pi_i(\bm{\mathbb{E}_Q [\tilde{P}];x^{-i},u^i})]\\
& =  \underset{u^i\in S_{a_i}} {\operatorname{argmax}} \,\inf_{Q \in \mathcal{F}} \; [\pi_i(\bm{\mathbb{E}_Q [\tilde{P}];x^{-i},u^i})] 
\end{aligned}
\end{equation}\\

The equality in the second line in the above expression follows from the linearity of expectation operator and the linearity of $\pi_i$. See remark ~\eqref{isa} that $\mathbb{E}_Q [\pi_i(\bm{\tilde{P};x^{-i},u^i})]  = [\pi_i(\bm{\mathbb{E}_Q [\tilde{P}];x^{-i},u^i})]$ where $\bm{\mathbb{E}_Q [\tilde{P}]} $is the component-wise expected value of $\bm{\tilde{P}}$.
The equality in the third line is due to the following properties of the linear functions:
$$\max f(x) = - \min [-f(x)]$$ and
$$z\in \underset{x \in S_{a_i}} {\operatorname{argmin}} f(x) = z\in \underset{x\in S_{a_i}} {\operatorname{argmax}} -f(x).$$\\

Now, using the assumption that $\varepsilon_i = 1 \quad \forall i \in \{1,2,...N\}$ and by choosing the right properties for the probability distributions of the payoff matrix we can specify our formulation and create the desire games.\\

\begin{thm}
\emph{(Distributionally Robust Games = Generalization of all other games)}\\
  \label{allequilibria}
  In the distributionally robust games players i's best response to the other players strategies $\bm{x^{-i}} \in S{-i}$ must belong to:\\
\begin{equation} 
\underset{u^i\in S_{a_i}}{\operatorname{argmin}}\,\sup_{Q \in \mathcal{F}} \; Q\text{-CVaR}_{\varepsilon_i} [-\pi_i(\bm{\tilde{P};x^{-i},u^i})]  
\end{equation}\\

If we assume that $\varepsilon_i = 1 \quad \forall i \in \{1,2,...N\}$\\
 Then:
\begin{enumerate}
\item If $ \mathcal{F} = \{ Q : \mathbb{E}_Q [\bm{\tilde{P}}] = \bm{\Psi} \}$ the set of equilibria of the distributionally robust game is equivalent to that of a classical Nash Game.
\item If the ambiguity set is singleton, that is, $ \mathcal{F} = \{ Q \}$, the distributionally robust game game have the same equilibria with a related finite Bayesian Game.
\item If $ \mathcal{F} = \{ Q : Q[\bm{W} \cdot vec(\mathbb{E}_Q [\bm{\tilde{P}}])\leq \bm{h}] = 1$ the set of the distributionally robust optimization equilibria is equivalent to that of the classical Robust game model of Chapter 3.\\
\end{enumerate}
\end{thm}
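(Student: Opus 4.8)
The plan is to start from the reduction already carried out in equation~\eqref{elnet4}. Once every player fixes $\varepsilon_i = 1$, the definition of CVaR together with linearity of $\pi_i$ and Remark~\eqref{isa} collapse the best-response correspondence of the distributionally robust game to
$$\underset{u^i\in S_{a_i}}{\operatorname{argmax}}\,\inf_{Q \in \mathcal{F}} \; \pi_i(\mathbb{E}_Q[\bm{\tilde{P}}];\bm{x^{-i},u^i}).$$
Since an equilibrium in each of the four models is exactly a strategy profile in which every player's component lies in his own best-response set (compare ~\eqref{lllll}, ~\eqref{lmljsd}, ~\eqref{loioi}), it suffices to show that under each of the three hypotheses on $\mathcal{F}$ this correspondence coincides, player by player and profile by profile, with the best-response correspondence of the claimed game; the equality of equilibrium sets then follows mechanically. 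I would treat the three cases in increasing order of difficulty.

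For Case 1, where $\mathcal{F} = \{ Q : \mathbb{E}_Q [\bm{\tilde{P}}] = \bm{\Psi} \}$, every admissible $Q$ yields the same expectation $\bm{\Psi}$, so the inner infimum is vacuous and the correspondence reduces to $\underset{u^i\in S_{a_i}}{\operatorname{argmax}}\,\pi_i(\bm{\Psi};\bm{x^{-i},u^i})$, which is precisely the Nash best response ~\eqref{br} for the complete-information game with fixed payoff matrix $\bm{\Psi}$; comparing ~\eqref{lllll} with the equilibrium condition of the distributionally robust game gives the claim. For Case 2, where $\mathcal{F} = \{Q\}$ is a singleton, the sup over $\mathcal{F}$ is again vacuous and the correspondence is $\underset{u^i\in S_{a_i}}{\operatorname{argmax}}\,\pi_i(\mathbb{E}_Q[\bm{\tilde{P}}];\bm{x^{-i},u^i}) = \underset{u^i\in S_{a_i}}{\operatorname{argmax}}\,\mathbb{E}_Q[\pi_i(\bm{\tilde{P}};\bm{x^{-i},u^i})]$ by Remark~\eqref{isa}, i.e.\ exactly the Bayesian no-private-information best response ~\eqref{chicago}; hence the two equilibrium sets coincide (and, as a by-product, both equal those of the Nash game with matrix $\mathbb{E}_Q[\bm{\tilde{P}}]$, again by Remark~\eqref{isa}).

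Case 3 is the one needing a genuine argument and is where I expect the main obstacle. Here $\mathcal{F}$ is the set of all distributions supported on the bounded polyhedron $\bm{U} = \{\bm{P} : \bm{W}\cdot vec(\bm{P}) \le \bm{h}\}$, so the work is to show $\inf_{Q \in \mathcal{F}} \pi_i(\mathbb{E}_Q[\bm{\tilde{P}}];\bm{x^{-i},u^i}) = \inf_{\bm{\tilde{P}}\in \bm{U}} \pi_i(\bm{\tilde{P}};\bm{x^{-i},u^i})$. I would prove this by two inclusions: ``$\le$'' holds because for each $\bm{P}\in\bm{U}$ the Dirac mass $\delta_{\bm{P}}$ lies in $\mathcal{F}$ and satisfies $\mathbb{E}_{\delta_{\bm{P}}}[\bm{\tilde{P}}] = \bm{P}$; ``$\ge$'' holds because $\bm{U}$ is convex and bounded, so $\mathbb{E}_Q[\bm{\tilde{P}}]$ is well defined and lies in $\bm{U}$ for every $Q\in\mathcal{F}$, whence every value attained on the left is already attained on the right. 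Substituting, the correspondence becomes $\underset{u^i\in S_{a_i}}{\operatorname{argmax}}\,\inf_{\bm{\tilde{P}}\in\bm{U}} \pi_i(\bm{\tilde{P}};\bm{x^{-i},u^i})$, which is the robust best response ~\eqref{robustbestresponse}, and comparing with ~\eqref{loioi} finishes the proof. The delicate points are entirely in this case: checking that $\mathcal{F}$ is nonempty, that expectations are finite (from boundedness of $\bm{U}$), and that convexity and closedness of the polyhedron really do yield $\{\mathbb{E}_Q[\bm{\tilde{P}}] : Q\in\mathcal{F}\} = \bm{U}$; once these are settled the rest is bookkeeping against the earlier definitions.
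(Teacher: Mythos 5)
Your proposal is correct and follows essentially the same route as the paper: the $\varepsilon_i=1$ reduction of \eqref{elnet4}, followed by a case-by-case identification of the resulting best-response correspondence with \eqref{br}, \eqref{chicago} and \eqref{loioi}. The only divergence is in Case 3, where you read the ambiguity set as all distributions supported on the polyhedron $\mathcal{U}$ and therefore invoke Dirac measures together with convexity and boundedness of $\mathcal{U}$ to get $\{\mathbb{E}_Q[\bm{\tilde{P}}]:Q\in\mathcal{F}\}=\mathcal{U}$, whereas the theorem as literally stated constrains $\mathbb{E}_Q[\bm{\tilde{P}}]$ itself to lie in $\mathcal{U}$, so the paper simply substitutes $\bm{\tilde{S}}=\mathbb{E}_Q[\bm{\tilde{P}}]$; your argument is sound under either reading and in fact supplies the justification the paper leaves implicit.
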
  
  
  \begin{proof}
In ~\eqref{elnet4} we have shown that when $\varepsilon_i = 1 \quad \forall i \in \{1,2,...N\}$ the next equality is true:
\begin{equation}
 \underset{u^i\in S_{a_i}}{\operatorname{argmin}}\,\sup_{Q \in \mathcal{F}} \; Q\text{-CVaR}_{\varepsilon_i} [-\pi_i(\bm{\tilde{P};x^{-i},u^i})] =  \underset{u^i\in S_{a_i}} {\operatorname{argmax}} \,\inf_{Q \in \mathcal{F}} \; [\pi_i(\bm{\mathbb{E}_Q [\tilde{P}];x^{-i},u^i})]
 \end{equation}

 Using this property we obtain the following results:\\
 
\textbf{ For Nash Games:}\\

  If $ \mathcal{F} = \{ Q : \mathbb{E}_Q [\bm{\tilde{P}}] = \bm{\Psi} \}$ 
  
\begin{equation}
\begin{aligned}
\underset{u^i\in S_{a_i}} {\operatorname{argmax}} \,\inf_{Q \in \mathcal{F}} \; [\pi_i(\bm{\mathbb{E}_Q [\tilde{P}];x^{-i},u^i})] 
& = \underset{u^i\in S_{a_i}} {\operatorname{argmax}} \,\inf_{Q \in \mathcal{F}} \; [\pi_i(\bm{\Psi;x^{-i},u^i})]\\
&= \underset{u^i\in S_{a_i}} {\operatorname{argmax}} \;  [\pi_i(\bm{\Psi;x^{-i},u^i})]
\end{aligned}
\end{equation}

which is equivalent to the formulation of best response in the Nash Games (see equation ~\eqref{br}).\\

\textbf{Bayesian Games}:\\

If the ambiguity set is singleton. That is that $ \mathcal{F} = \{ Q \}$.

\begin{equation}
   \underset{u^i\in S_{a_i}} {\operatorname{argmax}} \,\inf_{Q \in \mathcal{F}} \; \mathbb{E}_Q [\pi_i(\bm{\tilde{P};x^{-i},u^i})]  = \underset{u^i\in S_{a_i}} {\operatorname{argmax}} \,\mathbb{E}_Q [\pi_i(\bm{\tilde{P};x^{-i},u^i})]
   \end{equation}\\
   
which is equivalent to the formulation of best response in the Bayesian Games (see equation ~\eqref{chicago}).\\  

\textbf{For Robust Game:}\\

If  $ \mathcal{F} = \{ Q : Q[\mathbb{E}_Q [\bm{\tilde{P}}] \in \mathcal{U}] = 1\} $ where $\mathcal{U} = \{ \bm{P} :\bm{W} \cdot vec(\bm{P})\leq \bm{h} \}$\\

Therefore
\begin{equation}
\underset{u^i\in S_{a_i}} {\operatorname{argmax}} \,\inf_{Q \in \mathcal{F}} \; [\pi_i(\bm{\mathbb{E}_Q [\tilde{P}];x^{-i},u^i})] = \underset{u^i\in S_{a_i}} {\operatorname{argmax}} \,\inf_{\bm{\tilde{S}} \in \mathcal{U}} \; [\pi_i(\bm{\tilde{S};x^{-i},u^i})] 
\end{equation}
where $\bm{\tilde{S}} =  \bm{\mathbb{E}_Q [\tilde{P}]};$

which is equivalent to the formulation of best response in the Robust Games (see equation ~\eqref{loioi}).\\

\end{proof}

\textbf{Conclusions:}\\

We named \textbf{Distributionally Robust Game} the incomplete information game in which:
\begin{itemize}
\item The players commonly know the ambiguity set $\mathcal{F}$ of all possible distributions $Q$ of the payoff matrix $\bm{\tilde{P}}$.
\item Each player adopts, like the Robust games, a worst case approach to the uncertainty, only now the worst case is computed over all probability distributions within the set  $\mathcal{F}$. In particular we assume that all players use a worst case CVaR approach.
\item The risk attitude of each player is assumed to be common knowledge. That is, each player knows the risk levels of all other players and that all other players know that he knows.
\end{itemize}

\section{Computing sample equilibria of Distributionally Robust Games}
In section ~\eqref{k323}, we presented the general method that Bertsimas and Aghassi \cite{aghassi2006robust} used for computing sample robust optimization equilibria of a robust finite game. We also discussed about the difficulty of finding all equilibria of a game and how nowadays the researchers developed various approaches for the solution of this challenge. For example, one of the most widely used numerical technique to overcome this problem is the utilization of a multi linear system.(see \cite{govindan2003global} and \cite{sturmfels2002solving}).\\

In this section, we present theorem ~\eqref{alldistribequilibria}, in which we show that for any finite distributionally robust games with ambiguity set like the one defined in ~\eqref{adas}, and with no private information the set of equilibria is a projection of the solution set of multilinear system of equalities and inequalities. The projection is like in \cite{aghassi2006robust}, the component wise one into a lower dimension space.\\

\begin{thm}
\emph{(Computation of Equilibria in Distributionally Robust Finite Games)}\\
  \label{alldistribequilibria}
  Consider the N-player distributionally robust game in which $i \in \{1,2,....N\}$ has action set $\{1,2,....a_i\}, \, 1 < a_i < \infty, $ in which the ambiguity set is:
  \begin{equation}
  \label{knkn}
\mathcal{F} = \{ Q : Q[\bm{\tilde{P}} \in \mathcal{U} ] = 1 , \;\; \mathbb{E}_Q [vec \bm{\tilde{P}}] = \bm{m}, \;\; \mathbb{E}_Q [\left \lVert vec (\bm{\tilde{P}}) - \bm{m} \right \rVert_1] \leq s \}
\end{equation}
where $\mathcal{U} = \{ \bm{P} : \bm{W} \cdot vec(\bm{P})\leq \bm{h}\}$ is bounded and polyhedral set,\\
and in which there is no private information. The following two conditions are equivalent.\\

\textbf{Condition 1)} $(\bm{x^1,x^2,....x^N})$ is an equilibrium of the distributionally robust game.\\
\textbf{Condition 2)} For all $i \in \{1,2,....N\}$ there exists $ \; \alpha_i, \zeta_i, \rho_i \in \mathbb{R}, \; \gamma_i \in \mathbb{R_+}, \; \bm{\beta^i,\lambda^i, \kappa^i,\delta^i, \nu^i, \tau^i, f^i, \phi^i, g^i} \in \mathbb{R}^{N \prod_{i=1}^N a_1}, \;  \bm{\xi^i, \theta^i} \in \mathbb{R}^m$ such that$ (\bm{x^1,x^2,....x^N},\alpha_i, \zeta_i, \rho_i, \gamma_i, \bm{\beta^i,\lambda^i, \kappa^i,\delta^i, \nu^i, \tau^i, f^i, \phi^i, g^i, \xi^i, \theta^i})$ satisfies:\\
 
\begin{equation}
 \label{gsdf}
 \begin{array}{l@{\qquad}l}
 \quad \zeta_i +  \frac{1}{\varepsilon_i}\alpha_i + \frac{1}{\varepsilon_i}\bm{m}^\top \bm{\beta^i} + \frac{1}{\varepsilon_i}s\gamma = \rho_i, & \bm{e}^\top \bm{x^i}=1,\\
 \quad \alpha_i - \bm{m^\top \lambda^i} +\bm{m}^\top \bm{\kappa^i} + \bm{h^\top \xi^i} \geq 0, & -\bm{\lambda^i} +\bm{\kappa^i} + \bm{W}^\top \bm{\xi^i}- \bm{\beta^i}= 0,\\
 \quad \bm{\lambda^i} +\bm{\kappa^i} - \gamma_i\bm{e} \leq \bm{0}, & \bm{\delta^i} +\bm{\nu^i} - \gamma_i\bm{e} \leq \bm{0}\\
 \quad \alpha_i - \bm{m^\top \delta^i} +\bm{m}^\top \bm{\nu^i} + \bm{h^\top \theta^i}+\zeta_i \geq 0 \\
 \quad -\bm{\delta^i} +\bm{\nu^i} + \bm{W^\top \theta^i}- \bm{\beta^i}- \bm{Y^i(x^{-i})x^i}= 0\\
 \quad -\bm{e}^\top \bm{g^i}-\bm{e}^\top \bm{\phi^i}\leq \frac{1}{\varepsilon_i}s, & -\bm{\tau^i - f^i} = \frac{1}{\varepsilon_i} \bm{m}\\
 \quad  -\bm{\tau^i + \phi^i} \leq \sigma_i \bm{m}, & \bm{\tau^i + \phi^i} \leq - \sigma_i \bm{m}\\
 \quad \bm{W\tau^i} \geq - \sigma_i \bm{h}, & \bm{Wf^i} \geq - \bm{h} \\
 \quad -\bm{f^i +g^i} \leq \bm{m}, & \bm{f^i +g^i} \leq - \bm{m} \\
 \quad \rho_i\bm{e}^\top \leq \bm{f^\top Y^i(x^{-i})} \\
 \quad \bm{\lambda^i,\kappa^i, \delta^i, \nu^i, x^i} \geq \bm{0}, & \bm{\theta^i, \xi^i, \phi^i, g^i} \leq \bm{0} \\
 \quad \gamma \geq 0, & \\
  \end{array}
 \end{equation}

where $\bm{Y^i(x^{-i})} \in \mathbb{R}^{(N \prod_{i=1}^N a_i)\times a_i}$ denotes the matrix such that 
 \begin{equation}
 \label{ooo} 
 vec(\bm{P})^\top \bm{Y^i(x^{-i})x^i} = \pi_i(\bm{P;x^{-i},x^i})
  \end{equation} 
  $\sigma_i$ is a fixed number $\forall i \in \{1,2,....N\}$
  \begin{equation} 
  \sigma_i = \frac{1-\varepsilon_i}{\varepsilon_i}
   \end{equation}
   and parameter $\varepsilon_i $ denotes the risk level of player i.\\
\end{thm}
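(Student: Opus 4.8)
The plan is to fix a strategy profile $(\bm x^1,\dots,\bm x^N)$ and, for each player $i$, translate the best-response condition of Definition \ref{DROE} into a finite block of (multi-)linear constraints through three successive dualizations: one that removes the Rockafellar--Uryasev minimum defining CVaR, one that evaluates the worst-case expectation over the ambiguity set $\mathcal F$ of \eqref{knkn}, and one that handles the minimization over the deviation $\bm u^i\in S_{a_i}$. By Definition \ref{DROE}, $(\bm x^1,\dots,\bm x^N)$ is an equilibrium iff for every $i$ the convex function $V_i(\bm u^i):=\sup_{Q\in\mathcal F}Q\text{-CVaR}_{\varepsilon_i}[-\pi_i(\bm{\tilde P};\bm x^{-i},\bm u^i)]$ attains its minimum over $S_{a_i}$ at $\bm u^i=\bm x^i$ (convexity in $\bm u^i$ follows since $-\pi_i$ is affine in $\bm u^i$ and $Q\text{-CVaR}$ is a convex functional, and sup over $Q$ preserves convexity). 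Introducing a scalar $\rho_i$, this is equivalent to (a) $V_i(\bm x^i)\le\rho_i$ together with (b) $\rho_i\le V_i(\bm u^i)$ for all $\bm u^i\in S_{a_i}$; at an equilibrium one takes $\rho_i:=V_i(\bm x^i)=\min_{\bm u^i}V_i(\bm u^i)$, while conversely (a)+(b) force $\bm x^i$ to minimize $V_i$.

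For the CVaR, I would use $Q\text{-CVaR}_{\varepsilon_i}[L]=\min_{\zeta}\{\zeta+\tfrac1{\varepsilon_i}\mathbb E_Q[L-\zeta]^+\}$ together with the identity $\zeta+\tfrac1{\varepsilon_i}[L-\zeta]^+=\max\{\tfrac1{\varepsilon_i}L-\sigma_i\zeta,\ \zeta\}$ where $\sigma_i=(1-\varepsilon_i)/\varepsilon_i$, so that both $V_i(\bm x^i)$ and $V_i(\bm u^i)$ become worst-case expectations of a maximum of two affine functions of $vec(\bm{\tilde P})$. A Sion-type minimax theorem then lets me interchange $\sup_{Q}$ and $\min_{\zeta}$ — legitimate because the integrand is convex in $\zeta$, linear in $Q$, the optimal $\zeta$ lie in a bounded interval (the support $\mathcal U$ is bounded), and $\mathcal F$ is convex and weakly compact. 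This produces the outer variable $\zeta_i$ in part (a) and an inner $\zeta$ coupled to $\bm u^i$ in part (b), and explains why $\sigma_i$ surfaces only in the second block of \eqref{gsdf}.

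Next, for fixed coefficients, $\sup_{Q\in\mathcal F}\mathbb E_Q[\max\{a^\top vec(\bm{\tilde P})+b,\ c^\top vec(\bm{\tilde P})+d\}]$ is a generalized moment problem over measures supported on the bounded polyhedron $\mathcal U=\{\bm P:\bm W\,vec(\bm P)\le\bm h\}$ with prescribed mean $\bm m$ and the mean-absolute-deviation bound $\mathbb E_Q[\lVert vec(\bm{\tilde P})-\bm m\rVert_1]\le s$. Lifting the $\ell_1$-term via an epigraph vector $\bm y\ge|vec(\bm{\tilde P})-\bm m|$ turns $\mathcal F$ into an ambiguity set with polyhedral support and only linear moment equalities/inequalities, whose conic dual is a finite linear program; under a Slater condition (assumed: $\bm m$ in the relative interior of the mean range allowed by $\mathcal U$, $s$ large enough to admit a strictly feasible $Q$, $\mathcal U$ bounded so moments exist) strong duality and dual attainment hold. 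The dual multipliers are exactly the scalar $\alpha_i$ (constant term), $\bm\beta^i$ (mean), $\gamma_i\ge0$ (the $s$-bound), the pairs $\bm\lambda^i,\bm\kappa^i$ resp. $\bm\delta^i,\bm\nu^i$ (the two sides of the $\ell_1$-lift), and $\bm\xi^i$ resp. $\bm\theta^i$ for the support inequality $\bm W\,vec(\bm{\tilde P})\le\bm h$; the two affine pieces of the maximum give the two epigraph-type rows per block, and matching coefficients of $vec(\bm{\tilde P})$ while substituting $vec(\bm P)^\top\bm Y^i(\bm x^{-i})\bm x^i=\pi_i(\bm P;\bm x^{-i},\bm x^i)$ from \eqref{ooo} yields the equality rows of \eqref{gsdf} containing $\bm Y^i(\bm x^{-i})\bm x^i$. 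Applied to part (a) this gives the first block of \eqref{gsdf} (the line defining $\rho_i$ and the rows in $\alpha_i,\bm\lambda^i,\bm\kappa^i,\bm\xi^i$).

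Finally, for part (b) the worst-case CVaR is rewritten as a minimization over $\zeta$ and the moment-dual variables $\bm\delta^i,\bm\nu^i,\bm\theta^i$ (with coefficients now carrying the factor $\sigma_i$); requiring $\rho_i\le\min_{\bm u^i\in S_{a_i}}(\cdots)$ means lower-bounding a linear program in $(\bm u^i,\zeta,\bm\delta^i,\bm\nu^i,\bm\theta^i,\dots)$, so I dualize once more — the simplex constraints $\bm e^\top\bm u^i=1$, $\bm u^i\ge0$ producing the inequalities $\rho_i\bm e^\top\le\bm f^\top\bm Y^i(\bm x^{-i})$, one per pure deviation $\bm e^i_{j_i}$ via \eqref{ooo} — which generates the remaining multipliers $\bm\tau^i,\bm f^i,\bm\phi^i,\bm g^i$ and the last rows of \eqref{gsdf}. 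Collecting over $i$, the system is affine in each $\bm x^i$ and in each dual block separately, with products only between $\bm x^{-i}$ and dual variables through $\bm Y^i(\bm x^{-i})$, hence multi-linear, and its component-wise projection onto the $(\bm x^1,\dots,\bm x^N)$-coordinates is by construction the equilibrium set; since every step is a strong-duality equivalence, Condition 1 $\Leftrightarrow$ Condition 2. The main obstacle is exactly the strong duality and dual attainment for the worst-case-expectation moment problems over $\mathcal F$ (requiring the relative-interior/non-degeneracy hypothesis on $\mathcal F$); secondary care is needed for the minimax interchange (compactness of $\mathcal F$, boundedness of the $\zeta$-range) and for the bookkeeping that pairs each multiplier in \eqref{gsdf} with the right primal constraint across the three nested dualizations, in particular tracking the risk level through $1/\varepsilon_i$ in $\rho_i$ and through $\sigma_i=(1-\varepsilon_i)/\varepsilon_i$ in the second block.
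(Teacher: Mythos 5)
Your proposal is correct and follows essentially the same route as the paper's proof: the Rockafellar--Uryasev representation of CVaR, a Sion-type minimax interchange of $\sup_Q$ and $\min_\zeta$, duality for the moment problem over $\mathcal{F}$ yielding $\alpha_i,\bm\beta^i,\gamma_i$, robust-counterpart LP dualization of the semi-infinite support constraints yielding $\bm\lambda^i,\bm\kappa^i,\bm\xi^i$ and $\bm\delta^i,\bm\nu^i,\bm\theta^i$, and a final strong-duality step over the deviation $\bm u^i\in S_{a_i}$ producing $\bm\tau^i,\bm f^i,\bm\phi^i,\bm g^i,\rho_i$ and the certificate that $\bm x^i$ attains the optimum. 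Your extra bookkeeping (the $\max\{\tfrac{1}{\varepsilon_i}L-\sigma_i\zeta,\ \zeta\}$ identity and the explicit Slater/attainment caveats for the moment dual) only makes explicit what the paper assumes implicitly, so no substantive gap remains.
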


\textbf{Proof:}
By the Formulation of the Distributionally robust games Condition 1 is equivalent to
\begin{equation} 
 \label{sss}
 x^i \in \underset{u^i\in S_{a_i}}{\operatorname{argmin}}\,\sup_{Q \in \mathcal{F}} \; Q\text{-CVaR}_{\varepsilon_i} [-\pi_i(\bm{\tilde{P};x^{-i},u^i})]  \qquad  \forall i \in \{1,2,....,N\} 
\end{equation}

From Rockafellar and Uryasev \cite{rockafellar2000optimization} and \cite{rockafellar2002conditional} we know that :

\begin{equation} 
Q\text{-CVaR}_{\varepsilon_i} [-\pi_i(\bm{\tilde{P};x^{-i},u^i})] = \min_{\zeta_i \in \mathbb{R}} \; \zeta_i + \frac{1}{\varepsilon_i} \mathbb{E}_Q [ -\pi_i(\bm{\tilde{P};x^{-i},u^i}) - \zeta_i]^+ 
\end{equation}
where $[x]^+ = \max\{x,0\}$.\\

Therefore equation ~\eqref{sss} is equivalent to:
\begin{equation} 
 x^i \in \underset{u^i\in S_{a_i}}{\operatorname{argmin}}\,\sup_{Q \in \mathcal{F}} \; \min_{\zeta_i \in \mathbb{R}} \; \zeta_i + \frac{1}{\varepsilon_i} \mathbb{E}_Q [ -\pi_i(\bm{\tilde{P};x^{-i},u^i}) - \zeta_i]^+  \qquad  \forall i \in \{1,2,....,N\} 
\end{equation}

From Saddlepoint theorem ( Sion's minimax theorem \cite{sion1958general} )  we could exchange the order of supremum and infimum(minimum) resulting: 

\begin{equation}
 \label{asaa}
 x^i \in \underset{u^i\in S_{a_i}}{\operatorname{argmin}}\, \min_{\zeta_i \in \mathbb{R}} \; \zeta_i + \frac{1}{\varepsilon_i}\,\sup_{Q \in \mathcal{F}} \mathbb{E}_Q [ -\pi_i(\bm{\tilde{P};x^{-i},u^i}) - \zeta_i]^+  \qquad  \forall i \in \{1,2,....,N\} 
\end{equation}\\

From the moment problem theory, $ \, \sup_{Q \in \mathcal{F}} \mathbb{E}_Q [ -\pi_i(\bm{\tilde{P};x^{-i},u^i}) - \zeta_i]^+ $ can be cast as the following problem: \\

\begin{equation}
\begin{array}{l@{\quad}l@{\qquad}l}
\displaystyle \text{maximize} & \displaystyle \int_{\mathcal{U}} [ -\pi_i(\bm{P;x^{-i},u^i}) - \zeta_i]^+  \mathrm{d}\mu ( vec (\bm{P})) \\
\displaystyle \text{subject to} & \displaystyle \mu \in \mathcal{M}_+ \mathbb{R}^{N \prod_{i=1}^N a_1} \\
& \displaystyle \int_{\mathcal{U}} \mathrm{d}\mu (vec(\bm{P})) = 1 \\
& \displaystyle \int_{\mathcal{U}} vec(\bm{P}) \mathrm{d}\mu (vec(\bm{P})) = \bm{m} \\
& \displaystyle \int_{\mathcal{U}} \left \lVert vec(\bm{P}) - \bm{m} \right \rVert_1 \mathrm{d}\mu (vec(\bm{P})) \leq s,
\end{array}
\end{equation}

where $\mathcal{M}_+ \mathbb{R}^{N \prod_{i=1}^N a_i}$ is the set of non-negative measures supported on $\mathbb{R}^{N \prod_{i=1}^N a_i}$.\\

There is a duality theory for moment problems(see \cite{wiesemann2014distributionally} and \cite{natarajan2009constructing}) which implies that the following dual problem attains the same optimal value:

\begin{equation}
\begin{array}{l@{\quad}l@{\qquad}l}
\displaystyle \text{minimize} & \displaystyle \alpha_i + \bm{m}^\top \bm{\beta^i} + s \gamma_i \\
\displaystyle \text{subject to} & \displaystyle \alpha_i \in \mathbb{R}, \;\; \bm{\beta^i} \in \mathbb{R}^{N \prod_{i=1}^N a_i}, \;\; \gamma_i \in \mathbb{R}_+ \\
& \displaystyle \alpha_i + vec (\bm{\tilde{P}})^\top \bm{\beta^i} + \left \lVert vec (\bm{\tilde{P}}) - \bm{m} \right \rVert_1 \gamma_i \geq [ -\pi_i(\bm{\tilde{P};x^{-i},u^i}) - \zeta_i]^+ & \displaystyle \forall \bm{P} \in \mathcal{U}.
\end{array}
\end{equation}\\

By replacing the definition of $[\cdot]^+$:

\begin{equation}
\label{main}
\begin{array}{l@{\quad}l@{\qquad}l}
\displaystyle \text{minimize} & \displaystyle \alpha_i + \bm{m}^\top \bm{\beta^i} + s\gamma_i \\
\displaystyle \text{subject to} & \displaystyle \alpha_i \in \mathbb{R}, \;\; \bm{\beta^i} \in \mathbb{R}^{N \prod_{i=1}^N a_1}, \;\; \gamma_i \in \mathbb{R}_+ \\
& \displaystyle \alpha_i + vec (\bm{\tilde{P}})^\top \bm{\beta^i} + \left \lVert vec (\bm{\tilde{P}}) - \bm{m} \right \rVert_1 \gamma_i \geq  -\pi_i(\bm{\tilde{P};x^{-i},u^i}) - \zeta_i & \displaystyle \forall \bm{P} \in \mathcal{U} \\
& \displaystyle \alpha_i + vec (\bm{\tilde{P}})^\top \bm{\beta^i} + \left \lVert vec (\bm{\tilde{P}}) - \bm{m} \right \rVert_1 \gamma_i  \geq 0 & \displaystyle \forall \bm{P} \in \mathcal{U}.
\end{array}
\end{equation}

Substituting this dual formulation of $ \, \sup_{Q \in \mathcal{F}} \mathbb{E}_Q [ -\pi_i(\bm{\tilde{P};x^{-i},u^i}) - \zeta_i]^+ $  into ~\eqref{asaa}, we obtain the following which we call Main Problem. The projection of the solution set of this problem will be the set of the equilibria that we desire.\\
\begin{equation}
\label{nana}
\begin{array}{l@{\quad}l@{\qquad}l}
\displaystyle \underset{u^i,\zeta_i,\alpha_i,\beta^i,\gamma_i}{\operatorname{min}} & \displaystyle \zeta_i + \frac{1}{\varepsilon_i}(\alpha_i + \bm{m}^\top \bm{\beta^i} + s \gamma_i) \\
\displaystyle \text{subject to} & \displaystyle \alpha_i \in \mathbb{R}, \;\; \bm{\beta^i} \in \mathbb{R}^{N \prod_{i=1}^N a_1}, \;\; \gamma_i \in \mathbb{R}_+, \zeta_i \in \mathbb{R} \\
& \displaystyle u^i \in S_{a_i}\\
& \displaystyle \alpha_i + vec (\bm{\tilde{P}})^\top \bm{\beta^i} + \left \lVert vec (\bm{\tilde{P}}) - \bm{m} \right \rVert_1 \gamma_i \geq -\pi_i(\bm{\tilde{P};x^{-i},u^i}) - \zeta_i & \displaystyle \forall \bm{P} \in \mathcal{U} \\
& \displaystyle \alpha_i + vec (\bm{\tilde{P}})^\top \bm{\beta^i} + \left \lVert vec (\bm{\tilde{P}}) - \bm{m} \right \rVert_1 \gamma_i  \geq 0 & \displaystyle \forall \bm{P} \in \mathcal{U}.
\end{array}
\end{equation}\\

This is now a `classical robust optimisation problem' and we use standard duality techniques to simplify the semi-infinite constraints.\\

We know that:
$$f(p) \geq k ,\,\, \forall p \in U \Leftrightarrow \min_{ p \in U} f(p) \geq k $$

Therefore, using this the two robust constraints of the linear program ~\eqref{nana} become:\\
\begin{equation}
\label{dasi}
\min_{\bm{P} \in U} [\alpha_i + vec (\bm{P})^\top \bm{\beta^i} + \left \lVert vec (\bm{P}) - \bm{m} \right \rVert_1 \gamma_i + \pi_i(\bm{P;x^{-i},u^i})] \geq -\zeta_i 
\end{equation}
and
\begin{equation}
\label{das}
\min_{\bm{P} \in U} [\alpha_i + vec (\bm{P})^\top \bm{\beta^i} + \left \lVert vec (\bm{P}) - \bm{m} \right \rVert_1 \gamma_i]  \geq 0 
\end{equation}\\

The left hand sides of the constraints ~\eqref{dasi} and ~\eqref{das} are equivalent to the following problems ~\eqref{mlm1} and ~\eqref{mlm2} respectively. \\
\begin{equation}
\label{mlm1}
\begin{array}{l@{\quad}l@{\qquad}l}
\displaystyle \underset{vec(\bm{P})}{\operatorname{min}} & \displaystyle \alpha_i + vec (\bm{P})^\top \bm{\beta^i} + \left \lVert vec (\bm{P}) - \bm{m} \right \rVert_1 \gamma_i + \pi_i(\bm{P;x^{-i},u^i})\\
\displaystyle \text{subject to} & \displaystyle \bm{W} \cdot vec(\bm{P})\leq \bm{h}.
\end{array}
\end{equation}
and
\begin{equation}
\label{mlm2}
\begin{array}{l@{\quad}l@{\qquad}l}
\displaystyle \underset{vec(\bm{P})}{\operatorname{min}} & \displaystyle \alpha_i + vec (\bm{P})^\top \bm{\beta^i} + \left \lVert vec (\bm{P}) - \bm{m} \right \rVert_1 \gamma_i \\
\displaystyle \text{subject to} & \displaystyle \bm{W} \cdot vec(\bm{P})\leq \bm{h}.
\end{array}
\end{equation}\\

In turn these programs are equivalent to:

\begin{equation}
\label{kti1}
\begin{array}{l@{\quad}l@{\qquad}l}
\displaystyle \underset{vec(\bm{P}),\eta}{\operatorname{min}} & \displaystyle \alpha_i + vec (\bm{P})^\top \bm{\beta^i} + \gamma_i \sum\limits_{j=1}^{N\prod_{i=1}^N a_1} \eta_j +  vec(\bm{P})^\top \bm{Y^i(x^{-i})u^i}\\
\displaystyle \text{subject to} & \displaystyle \bm{W} \cdot vec(\bm{P})\leq \bm{h}\\
& \displaystyle \eta_j \geq {vec(\bm{P})}_j - \bm{m}_j & \displaystyle \forall j= 1,2,...N\prod_{i=1}^N a_i\\
& \displaystyle \eta_j \geq \bm{m}_j - {vec(\bm{P})}_j & \displaystyle \forall j= 1,2,...N\prod_{i=1}^N a_i.
\end{array}
\end{equation}

and 

\begin{equation}
\label{kti2}
\begin{array}{l@{\quad}l@{\qquad}l}
\displaystyle \underset{vec(\bm{P}),\eta}{\operatorname{min}} & \displaystyle \alpha_i + vec (\bm{P})^\top \bm{\beta^i} + \gamma_i \sum\limits_{j=1}^{N\prod_{i=1}^N a_i} \eta_j \\
\displaystyle \text{subject to} & \displaystyle \bm{W} \cdot vec(\bm{P})\leq \bm{h}\\
& \displaystyle \eta_j \geq {vec(\bm{P})}_j - \bm{m}_j & \displaystyle \forall j= 1,2,...N\prod_{i=1}^N a_i\\
& \displaystyle \eta_j \geq \bm{m}_j - {vec(\bm{P})}_j & \displaystyle \forall j= 1,2,...N\prod_{i=1}^N a_i.
\end{array}
\end{equation}

where $\eta_j= |{vec(\bm{P})}_j - \bm{m}_j|,\,\, \forall j= 1,2,...N\prod_{i=1}^N a_i$ and $\bm{Y^i(x^{-i})}$ is as defined in ~\eqref{ooo}.\\

The dual problems of ~\eqref{kti1} and ~\eqref{kti2} are respectively the following:\\

\begin{equation}
\label{dual1}
\begin{array}{l@{\quad}l@{\qquad}l}
\displaystyle \underset{\delta^i,\beta^i,\theta^i}{\operatorname{max}} & \displaystyle \alpha_i - \bm{m^\top \delta^i} +\bm{m}^\top \bm{\nu^i} + \bm{h^\top \theta^i}\\
& \displaystyle -\bm{\delta^i} +\bm{\nu^i} + \bm{W}^\top \bm{\theta^i}- \bm{\beta^i}-\bm{Y^i(x^{-i})u^i}= 0\\
& \displaystyle \bm{\delta^i} +\bm{\nu^i} - \gamma_i\bm{e} \leq \bm{0}\\
& \displaystyle \bm{\delta^i} \geq \bm{0}, \;\; \bm{\nu^i} \geq \bm{0}, \;\; \bm{\theta^i} \leq \bm{0}.
\end{array}
\end{equation}\\

\begin{equation}
\label{dual2}
\begin{array}{l@{\quad}l@{\qquad}l}
\displaystyle \underset{\lambda^i,\kappa^i,\xi^i}{\operatorname{max}} & \displaystyle \alpha_i - \bm{m^\top \lambda^i} +\bm{m}^\top \bm{\kappa^i} + \bm{h^\top \xi^i}\\
& \displaystyle -\bm{\lambda^i} +\bm{\kappa^i} + \bm{W}^\top \bm{\xi^i}- \bm{\beta^i}= 0\\
& \displaystyle \bm{\lambda^i} +\bm{\kappa^i} - \gamma_i\bm{e} \leq \bm{0}\\
& \displaystyle \bm{\lambda^i} \geq \bm{0}, \;\; \bm{\kappa^i} \geq \bm{0}, \;\; \bm{\xi^i} \leq \bm{0}.
\end{array}
\end{equation}\\

We know that:
$$ \exists p \in U: f(p) \geq K \Leftrightarrow \max_{p\in U} f(p) \geq K$$

Subsequently, we substitute the last two problems ~\eqref{dual1} and ~\eqref{dual2} in the Main Problem ~\eqref{nana}. \\

Therefore for each player $i \in\{1,2,...N\},\,\, \exists \,\,\, \alpha_i,\gamma_i, \zeta_i \in \mathbb{R}, \bm{\beta^i,\lambda^i, \kappa^i,\delta^i, \nu^i} \in \mathbb{R}^{N \prod_{i=1}^N a_i}$ and $\bm{\xi^i, \theta^i} \in \mathbb{R}^m $ such that $ (\bm{x^i,\beta^i,\lambda^i, \kappa^i,\delta^i, \nu^i\xi^i, \theta^i,}\alpha_i, \gamma_i, \zeta_i)$ is a minimizer of:\\

\begin{equation}
\begin{array}{l@{\quad}l@{\qquad}l}
\displaystyle \underset{u^i,\alpha_i,\beta^i,\gamma_i, \zeta_i, \lambda^i, \kappa^i,\xi^i,\delta^i,\nu^i,\theta^i}{\operatorname{min}} & \displaystyle \zeta_i +  \frac{1}{\varepsilon_i}\alpha_i + \frac{1}{\varepsilon_i}\bm{m}^\top \bm{\beta^i} + \frac{1}{\varepsilon_i}s\gamma_i\\
& \displaystyle \bm{e}^\top \bm{u^i}=1\\
& \displaystyle \alpha_i - \bm{m^\top \lambda^i} +\bm{m}^\top \bm{\kappa^i} + \bm{h^\top \xi^i} \geq 0\\
& \displaystyle -\bm{\lambda^i} +\bm{\kappa^i} + \bm{W}^\top \bm{\xi^i}- \bm{\beta^i}= 0\\
& \displaystyle \bm{\lambda^i} +\bm{\kappa^i} - \gamma_i\bm{e} \leq \bm{0}\\
& \displaystyle \alpha_i - \bm{m^\top \delta^i} +\bm{m}^\top \bm{\nu^i} + \bm{h^\top \theta^i} + \zeta_i \geq 0\\
& \displaystyle -\bm{\delta^i} +\bm{\nu^i} + \bm{W}^\top \bm{\theta^i}- \bm{\beta^i}-\bm{Y^i(x^{-i})u^i}= 0\\
& \displaystyle \bm{\delta^i} +\bm{\nu^i} - \gamma_i\bm{e} \leq \bm{0}\\
& \displaystyle \bm{\lambda^i} \geq \bm{0}, \;\; \bm{\kappa^i} \geq \bm{0}, \;\; \bm{\xi^i} \leq \bm{0}\\
& \displaystyle \bm{\delta^i} \geq \bm{0}, \;\; \bm{\nu^i} \geq \bm{0}, \;\; \bm{\theta^i} \leq \bm{0}\\
& \displaystyle \bm{u^i} \geq \bm{0}, \;\; \gamma_i \geq 0.
\end{array}
\end{equation}\\

whose dual is:\\

\begin{equation}
\begin{array}{l@{\quad}l@{\qquad}l}
\displaystyle \underset{\tau^i,\rho_i,f^i,\phi^i,g^i,}{\operatorname{max}} & \displaystyle \rho_i\\
& \displaystyle-\bm{e}^\top \bm{g^i}-\bm{e}^\top \bm{\phi^i}\leq \frac{1}{\varepsilon_i}s \\
& \displaystyle-\bm{\tau^i - f^i} = \frac{1}{\varepsilon_i} \bm{m} \\
& \displaystyle -\bm{\tau^i + \phi^i} \leq \sigma_i \bm{m} \\
& \displaystyle\bm{\tau^i + \phi^i} \leq - \sigma_i \bm{m} \\
& \displaystyle\bm{W\tau^i} \geq - \sigma_i \bm{h} \\
& \displaystyle -\bm{f^i +g^i} \leq \bm{m} \\
& \displaystyle \bm{f^i +g^i} \leq - \bm{m} \\
& \displaystyle \bm{Wf^i} \geq - \bm{h} \\ 
& \displaystyle \rho_i\bm{e}^\top \leq \bm{f^\top Y^i(x^{-i})} \\ 
& \displaystyle \bm{\phi^i} \leq \bm{0}, \;\;  \bm{g^i} \leq 0.
\end{array}
\end{equation}\\

 $\sigma_i$ is a fixed number $\forall i \in \{1,2,....N\}$
  \begin{equation} 
  \sigma_i = \frac{1-\varepsilon_i}{\varepsilon_i}
   \end{equation}
   and parameter $\varepsilon_i $ denotes the risk level of player i.\\
   
Condition 2 follows from strong linear programming duality. \\
The reverse direction (Condition 2 $\Longrightarrow$ Condition 1) is also holds as all steps of our proof are based on the equivalence of the two parts.\\

\section{Analysing the ambiguity set}
\label{ambset}

In all distributionally robust games that we develop in this thesis the ambiguity set have the following form:

\begin{equation}
 \label{mkm}
\mathcal{F} = \{ Q : Q[\bm{W} \cdot vec(\bm{\tilde{P}})\leq \bm{h}] = 1 , \;\; \mathbb{E}_Q [vec \bm{\tilde{P}}] = \bm{m}, \;\; \mathbb{E}_Q [\left \lVert vec (\bm{\tilde{P}}) - \bm{m} \right \rVert_1] \leq s \}
\end{equation}

This, combined with different risk levels $\varepsilon_i$ for player $ i \in \{1,2,..N\}$ allows several variations of each distributionally robust game. By changing the values of ambiguity set's uncertain parameters $\bm{W,h, m}$ and $s$ and assuming each time different risk attitudes for the players the set of distributionally robust optimization equilibria which constitute the solution of our problem can change dramatically.\\
At this point, we present the role of each uncertain parameter of the ambiguity set ~\eqref{mkm}. The properties of this set were also mentioned in the beginning of this chapter.\\
Matrix $\bm{W} \in \mathbb{R}^{(m \times N \prod_{i=1}^N a_i)}$  and vector $\bm{h} \in \mathbb{R}^m $ are the two variables which represent the uncertainty polyhedral set in which the uncertain values of the payoff matrix should belong.\\
The maximum distance of all possible $vec(\bm{\tilde{P}})$ from the average vector $\bm{m}$ is denoted by scalar s. \\
Finally, $\bm{m} \in \mathbb{R}^{N \prod_{i=1}^N a_i}$ is the vector that denotes the expected value of $vec(\bm{\tilde{P}})$ for each distribution that belongs in the ambiguity set. \\
\underline{Important assumption:} Vector $\bm{m}$ must belong to the bounded uncertainty polyhedral set of the payoff matrix $\bm{\tilde{P}}$. Otherwise the ambiguity set $\mathcal{F}$ will be empty. \\

\subsubsection{Special Cases of Distributionally Robust Games}

Under certain conditions (special cases), the set of equilibria of distributionally robust finite game with ambiguity set like ~\eqref{mkm} is equivalent to that of a related finite game with complete payoff information (Nash Game) and with the same number of players and the same action spaces. Thus, when our game falls within these special cases one can simply solve the corresponding Nash Game.\\

The special cases of such games are studied in the next three lemmas. \\

\begin{lem}
\label{lem1}
The set of equilibria of a distributionally robust game in which all players are risk neutral ($\varepsilon_i = 1$, $\forall i \in \{1,2,..N\}$) is equivalent to the set of equilibria of a Nash Game with fixed payoff matrix $\bm{\Psi}$ where $vec(\bm{\Psi}) = \bm{m}$. ($\bm{m}$ is the average vector of the ambiguity set ~\eqref{mkm}. ) \\
 \end{lem}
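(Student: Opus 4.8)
The plan is to reduce the distributionally robust best-response problem to a classical Nash best-response problem by showing that when $\varepsilon_i = 1$ for every player, the worst-case CVaR objective collapses to an expectation that, in turn, depends on the distribution $Q$ only through its mean $\bm{m}$. First I would invoke the identity already established in~\eqref{elnet3}, namely $Q\text{-CVaR}_1[-\pi_i(\bm{\tilde{P};x^{-i},u^i})] = \mathbb{E}_Q[-\pi_i(\bm{\tilde{P};x^{-i},u^i})]$, which holds by the defining property of CVaR at level $1$. Substituting this into Definition~\eqref{bestreD}, the best response of player $i$ becomes
\begin{equation*}
\underset{u^i\in S_{a_i}}{\operatorname{argmin}}\,\sup_{Q \in \mathcal{F}} \; \mathbb{E}_Q[-\pi_i(\bm{\tilde{P};x^{-i},u^i})].
\end{equation*}

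Next I would use the linearity of $\pi_i$ in the payoff matrix together with the linearity of the expectation operator — exactly the observation recorded in Remark~\eqref{isa}, equation~\eqref{makalass} — to write $\mathbb{E}_Q[-\pi_i(\bm{\tilde{P};x^{-i},u^i})] = -\pi_i(\mathbb{E}_Q[\bm{\tilde{P}}];\bm{x^{-i},u^i})$. Now the key point: every $Q \in \mathcal{F}$ in the ambiguity set~\eqref{mkm} satisfies $\mathbb{E}_Q[vec(\bm{\tilde{P}})] = \bm{m}$, i.e. $\mathbb{E}_Q[\bm{\tilde{P}}] = \bm{\Psi}$ with $vec(\bm{\Psi}) = \bm{m}$, independently of which $Q$ we pick (and $\mathcal{F}$ is nonempty by the assumption that $\bm{m}$ lies in the polyhedral support set). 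Hence the quantity inside the supremum does not depend on $Q$ at all, so the $\sup_{Q\in\mathcal{F}}$ is vacuous and the best-response set reduces to $\underset{u^i\in S_{a_i}}{\operatorname{argmin}}\,[-\pi_i(\bm{\Psi;x^{-i},u^i})] = \underset{u^i\in S_{a_i}}{\operatorname{argmax}}\,\pi_i(\bm{\Psi;x^{-i},u^i})$, using $\operatorname{argmax} f = \operatorname{argmin}(-f)$. This is precisely the Nash best-response map~\eqref{br} for the complete-information game with fixed payoff matrix $\bm{\Psi}$.

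Finally I would conclude by comparing equilibrium conditions: by Definition~\eqref{DROE}, $(\bm{x^1},\dots,\bm{x^N})$ is a distributionally robust equilibrium iff $\bm{x^i}$ lies in the above best-response set for every $i$, which we have just shown equals the Nash best-response set; by~\eqref{lllll} this is exactly the condition for $(\bm{x^1},\dots,\bm{x^N})$ to be a Nash equilibrium of the game with payoff matrix $\bm{\Psi}$. Since the two equilibrium conditions coincide player-by-player, the equilibrium sets are equal. I do not anticipate a serious obstacle here — this lemma is essentially the $\varepsilon_i=1$, $\mathcal{F}=\{Q:\mathbb{E}_Q[\bm{\tilde P}]=\bm\Psi\}$ specialization of Theorem~\eqref{allequilibria}(1), the only mild subtlety being to note explicitly that imposing the extra moment constraints $\mathbb{E}_Q[\|vec(\bm{\tilde P})-\bm m\|_1]\le s$ and the support constraint does not change the argument, because the objective already ceased to depend on $Q$ once the mean was fixed, so shrinking $\mathcal{F}$ is harmless as long as it stays nonempty.
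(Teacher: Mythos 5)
Your proposal is correct and follows essentially the same route as the paper: it applies the $\varepsilon_i=1$ reduction of worst-case CVaR to expectation (as in~\eqref{elnet3}--\eqref{elnet4}), uses linearity to pass to $\pi_i(\mathbb{E}_Q[\bm{\tilde P}];\cdot)$, and then invokes the mean constraint $\mathbb{E}_Q[vec(\bm{\tilde P})]=\bm m$ to make the supremum over $Q$ vacuous, matching the Nash best-response map. Your explicit remark that the additional support and dispersion constraints in~\eqref{mkm} are harmless (given nonemptiness) is a minor clarification the paper leaves implicit, but the argument is the same.
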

 
 \begin{proof}
 
As we have already mentioned at ~\eqref{elnet4} when $\varepsilon_i = 1$:
\begin{equation}
\underset{u^i\in S_{a_i}}{\operatorname{argmin}}\,\sup_{Q \in \mathcal{F}} \; Q\text{-CVaR}_{\varepsilon_i} [-\pi_i(\bm{\tilde{P};x^{-i},u^i})] =  \underset{u^i\in S_{a_i}} {\operatorname{argmax}} \,\inf_{Q \in \mathcal{F}} \; [\pi_i(\bm{\mathbb{E}_Q [\tilde{P}];x^{-i},u^i})] 
\end{equation}\\
The second constraint of the ambiguity set $\mathcal{F}$ is  $\mathbb{E}_Q [vec \bm{\tilde{P}}] = \bm{m}$. Therefore if we denote with $\bm{\Psi}$ the matrix for which $vec(\bm{\Psi}) = m$ then we get that $\bm{\Psi}=\mathbb{E}_Q [ \bm{\tilde{P}}]$ and that:
\begin{equation}
\begin{aligned}
 \underset{u^i\in S_{a_i}} {\operatorname{argmax}} \,\inf_{Q \in \mathcal{F}} \; [\pi_i(\bm{\mathbb{E}_Q [\tilde{P}];x^{-i},u^i})]  
& = \underset{u^i\in S_{a_i}} {\operatorname{argmax}} \,\inf_{Q \in \mathcal{F}} \; [\pi_i(\bm{\Psi;x^{-i},u^i})] \\
& = \underset{u^i\in S_{a_i}} {\operatorname{argmax}}  [\pi_i(\bm{\Psi;x^{-i},u^i})] 
\end{aligned}
\end{equation}
which is equivalent to the formulation of best response in the Nash Games (see equation ~\eqref{br}).\\

 \end{proof}
\begin{lem}
\label{lem2}
The set of equilibria of a distributionally robust game in which the parameter $s$ of the ambiguity set ~\eqref{mkm} is equal to zero(s=0) is equivalent to the set of equilibria of a Nash Game with fixed payoff matrix $\bm{M}$ where $vec(\bm{M})=\bm{m}$. ($\bm{m}$ is the average vector of the ambiguity set ~\eqref{mkm}. )
\end{lem}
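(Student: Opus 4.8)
\textbf{Proof proposal for Lemma \ref{lem2}.}

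The plan is to show that when $s = 0$, the ambiguity set $\mathcal{F}$ of \eqref{mkm} collapses to a singleton, after which the result follows from the Bayesian-game case of Theorem \ref{allequilibria} (equivalently, from Remark \ref{isa}). The key observation is that the third constraint $\mathbb{E}_Q[\lVert vec(\bm{\tilde{P}}) - \bm{m}\rVert_1] \leq s$ becomes $\mathbb{E}_Q[\lVert vec(\bm{\tilde{P}}) - \bm{m}\rVert_1] \leq 0$. Since $\lVert \cdot \rVert_1 \geq 0$ pointwise, the non-negative random variable $\lVert vec(\bm{\tilde{P}}) - \bm{m}\rVert_1$ has zero expectation, hence it is zero $Q$-almost surely. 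This forces $vec(\bm{\tilde{P}}) = \bm{m}$ $Q$-almost surely, i.e.\ $Q$ is the Dirac measure $\delta_{\bm{M}}$ concentrated at the matrix $\bm{M}$ with $vec(\bm{M}) = \bm{m}$. One should also note that this $Q$ is feasible for the other two constraints, since $\bm{m}$ lies in the polyhedral uncertainty set (the standing assumption that $\bm{m}$ belongs to $\mathcal{U}$, stated just before this subsection) and trivially $\mathbb{E}_Q[vec\,\bm{\tilde{P}}] = \bm{m}$; thus $\mathcal{F} = \{\delta_{\bm{M}}\}$ is nonempty.

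With $\mathcal{F} = \{Q\}$ a singleton, the best-response map in Definition \ref{bestreD} reduces to
\[
\underset{u^i\in S_{a_i}}{\operatorname{argmin}}\; Q\text{-CVaR}_{\varepsilon_i}[-\pi_i(\bm{\tilde{P};x^{-i},u^i})],
\]
and since $Q = \delta_{\bm{M}}$ is degenerate, the loss $-\pi_i(\bm{\tilde{P};x^{-i},u^i})$ is $Q$-a.s.\ the constant $-\pi_i(\bm{M;x^{-i},u^i})$. For any constant (degenerate) loss $L$, $Q\text{-CVaR}_{\varepsilon_i}(L) = L$ for every $\varepsilon_i \in (0,1)$: from the Rockafellar--Uryasev formula $\min_{\zeta}\,\zeta + \tfrac{1}{\varepsilon_i}\mathbb{E}_Q[L-\zeta]^+$, choosing $\zeta = L$ gives value $L$, and this is optimal because the objective is bounded below by $\zeta + \tfrac{1}{\varepsilon_i}(L - \zeta) = L + (\tfrac{1}{\varepsilon_i}-1)(L-\zeta) \cdot(\ldots)$ — more directly, the function is convex in $\zeta$ with minimum at $\zeta = L$. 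Hence the best response becomes $\underset{u^i\in S_{a_i}}{\operatorname{argmin}}\,[-\pi_i(\bm{M;x^{-i},u^i})] = \underset{u^i\in S_{a_i}}{\operatorname{argmax}}\,\pi_i(\bm{M;x^{-i},u^i})$, which is exactly the Nash best-response condition \eqref{br} for the complete-information game with fixed payoff matrix $\bm{M}$. Applying this for every player $i$, Definition \ref{DROE} coincides with the Nash-equilibrium condition \eqref{lllll}, so the two equilibrium sets are equal.

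The only mildly delicate point is the measure-theoretic step "expectation zero implies almost-sure zero'' for the non-negative variable $\lVert vec(\bm{\tilde{P}}) - \bm{m}\rVert_1$, together with checking that this does not presuppose $Q$ is discrete — it holds for arbitrary $Q \in \mathcal{M}_+$. Everything else is routine: the CVaR-of-a-constant identity and the reduction to \eqref{br}/\eqref{lllll} are immediate. Note also that, unlike Lemma \ref{lem1}, here we do \emph{not} need $\varepsilon_i = 1$; the collapse of the ambiguity set makes the risk level irrelevant, which is worth remarking on after the proof.
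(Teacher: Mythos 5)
Your proposal is correct and follows essentially the same route as the paper: you use $s=0$ together with non-negativity of the $1$-norm to force $vec(\bm{\tilde{P}})=\bm{m}$ $Q$-almost surely, collapse the ambiguity set to the point mass at $\bm{M}$, and then reduce the worst-case CVaR best response to $\operatorname{argmax}_{u^i}\pi_i(\bm{M};\bm{x^{-i},u^i})$, i.e.\ the Nash condition \eqref{br}/\eqref{lllll}. The only differences are cosmetic: you make explicit the CVaR-of-a-constant identity and the non-emptiness of $\mathcal{F}$, which the paper leaves implicit, and your observation that the argument needs no restriction on $\varepsilon_i$ matches the paper's statement.
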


\begin{proof}
The third constraint of the ambiguity set is:
$$\mathbb{E}_Q [\left \lVert vec (\bm{\tilde{P}}) - \bm{m} \right \rVert_1] \leq s $$

For $s=0$($s\longrightarrow0$) become
\begin{equation}
\mathbb{E}_Q [\left \lVert vec (\bm{\tilde{P}}) - \bm{m} \right \rVert_1] \leq 0 
\end{equation}
Then because all the values inside the expectation operator $\mathbb{E}_Q $ are positive we have that 
\begin{equation}
\mathbb{E}_Q [\left \lVert vec (\bm{\tilde{P}}) - \bm{m} \right \rVert_1] = 0 
\end{equation}
and that 
\begin{equation}
Q [\left \lVert vec (\bm{\tilde{P}}) - \bm{m} \right \rVert_1 = 0] =1 
\end{equation}
which is equivalent to 
\begin{equation}
Q [ vec (\bm{\tilde{P}_i}) - \bm{m_i} = 0] =1 ,\,\,\, \forall i \in \{1,2,... \mathbb{R}^{N \prod_{i=1}^N a_1} \}.\\
\end{equation}
Therefore if $s\longrightarrow0$ the third constraint of the ambiguity set is equivalent to:
\begin{equation}
 Q [ vec (\bm{\tilde{P}}) - \bm{m} = 0] =1
\end{equation}
which means that $vec (\bm{\tilde{P}}) = \bm{m}$ for all distributions of the ambiguity set.\\

so: 
\begin{equation}
 \begin{aligned}
\mathcal{F} 
& = \{ Q : Q[\bm{W} \cdot vec(\bm{\tilde{P}})\leq \bm{h}] = 1 , \;\; \mathbb{E}_Q [vec \bm{\tilde{P}}] = \bm{m}, \;\;  Q[vec (\bm{\tilde{P}}) = \bm{m}] = 1 \}\\
& =  \{ Q : Q[vec (\bm{\tilde{P}}) = \bm{m}] = 1 \}
\end{aligned}
\end{equation}

Using the definition of best response we can find now the equivalence between our game and a Nash Game.\\ 

\begin{equation}
\begin{aligned}
\underset{u^i\in S_{a_i}}{\operatorname{argmin}}\,\sup_{Q \in \mathcal{F}} \; Q\text{-CVaR}_{\varepsilon_i} [-\pi_i(\bm{\tilde{P};x^{-i},u^i})]
& = \underset{u^i\in S_{a_i}}{\operatorname{argmin}}\,Q\text{-CVaR}_{\varepsilon_i} [-\pi_i(\bm{M;x^{-i},u^i})] \\
& = \underset{u^i\in S_{a_i}}{\operatorname{argmin}}\, [-\pi_i(\bm{M;x^{-i},u^i})] \\
& = \underset{u^i\in S_{a_i}} {\operatorname{argmax}}  [\pi_i(\bm{M;x^{-i},u^i})] 
\end{aligned}
\end{equation}
where $vec(\bm{M})=\bm{m}$.\\

\end{proof}

\begin{lem}
\label{lem3}
The set of equilibria of a distributionally robust game that had a support single point is equivalent to the set of equilibria of a Nash Game with fixed payoff matrix the one that corresponds to this single point. Single point is named the unique payoff matrix which created from specific values of the matrix $\bm{W}$ and vector $\bm{h}$ of the ambiguity set. The values of matrix $\bm{W}$ and vector $\bm{h}$ are selected in order to make the uncertainty set $U= \{\bm{P}: \bm{W} \cdot vec(\bm{P})\leq \bm{h}\}$ singleton.\\
\end{lem}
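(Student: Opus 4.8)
The plan is to show that making the uncertainty set $U=\{\bm{P}:\bm{W}\cdot vec(\bm{P})\le \bm{h}\}$ a singleton forces the whole ambiguity set $\mathcal{F}$ of~\eqref{mkm} to collapse to a single Dirac measure, after which the worst-case CVaR objective in Definition~\ref{bestreD} reduces to an ordinary deterministic payoff and the best-response map becomes exactly that of the Nash game with payoff matrix equal to the single support point. This mirrors the Bayesian (singleton ambiguity set) part of the generalization theorem, together with the reductions already used in Lemmas~\ref{lem1} and~\ref{lem2}.

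First I would pin down $\mathcal{F}$. Write $U=\{\bm{P_0}\}$ for the unique matrix cut out by $\bm{W},\bm{h}$. Any $Q\in\mathcal{F}$ satisfies $Q[\bm{\tilde P}\in U]=1$, and since $U$ has exactly one element this forces $Q=\delta_{\bm{P_0}}$, the Dirac measure at $\bm{P_0}$. The mean constraint $\mathbb{E}_Q[vec(\bm{\tilde P})]=\bm{m}$ then reads $vec(\bm{P_0})=\bm{m}$; this is exactly the ``important assumption'' of Section~\ref{ambset} that $\bm{m}$ lie in the support of $\bm{\tilde P}$ (here the support is the single point $\bm{P_0}$), so it is automatically consistent, and the dispersion constraint $\mathbb{E}_Q[\lVert vec(\bm{\tilde P})-\bm{m}\rVert_1]\le s$ holds trivially because its left-hand side equals $0$. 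Hence $\mathcal{F}=\{\delta_{\bm{P_0}}\}$.

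Second, I would collapse the CVaR. With $\mathcal{F}$ a singleton the $\sup_{Q\in\mathcal{F}}$ disappears, and under $Q=\delta_{\bm{P_0}}$ the loss $-\pi_i(\bm{\tilde P};\bm{x^{-i}},u^i)$ is almost surely the constant $c:=-\pi_i(\bm{P_0};\bm{x^{-i}},u^i)$. Using the Rockafellar--Uryasev formula recalled in Subsection~\ref{CVARDEFIni}, for a constant loss $c$ one has $Q\text{-CVaR}_{\varepsilon_i}[c]=\min_{\zeta\in\mathbb{R}}\,\zeta+\tfrac{1}{\varepsilon_i}[c-\zeta]^+=c$, the minimum being attained at $\zeta=c$ for every admissible risk level $\varepsilon_i$. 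Therefore Definition~\ref{bestreD} gives the best response as $\underset{u^i\in S_{a_i}}{\operatorname{argmin}}\,[-\pi_i(\bm{P_0};\bm{x^{-i}},u^i)]=\underset{u^i\in S_{a_i}}{\operatorname{argmax}}\,\pi_i(\bm{P_0};\bm{x^{-i}},u^i)$, which is precisely the Nash best response~\eqref{br} with fixed payoff matrix $\bm{P_0}$. Since this identity of best-response maps holds for every player $i$ and every $\bm{x^{-i}}\in S_{-i}$, the equilibrium condition of Definition~\ref{DROE} coincides with~\eqref{lllll}, so the two equilibrium sets are equal.

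I do not expect a serious obstacle, since both reductions reuse arguments already present in the earlier lemmas and in the generalization theorem. The only point deserving a careful line is the observation that the CVaR of an almost-surely constant loss equals that constant \emph{regardless of} the risk level $\varepsilon_i$ --- this is what makes the risk attitude irrelevant when the payoff is in fact deterministic --- together with the brief remark that choosing $\bm{m}\neq vec(\bm{P_0})$ would make $\mathcal{F}$ empty, so the standing hypothesis on $\bm{m}$ harmlessly forces $\bm{m}=vec(\bm{P_0})$.
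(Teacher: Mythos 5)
Your proposal is correct and follows essentially the same route as the paper: collapse the ambiguity set to the single Dirac measure at the support point (forcing $\bm{m}=vec(\bm{P_0})$), observe that the worst-case CVaR of the resulting deterministic loss is just that loss for any $\varepsilon_i$, and conclude the best-response maps, hence the equilibrium sets, coincide with those of the Nash game with payoff matrix $\bm{P_0}$. Your explicit use of the Rockafellar--Uryasev formula to justify the CVaR-of-a-constant step merely makes precise what the paper asserts directly.
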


\begin{proof}
The uncertainty set $U$ is a singleton. Therefore the first constraint of the ambiguity set 
\begin{equation}
Q[\bm{W} \cdot vec(\bm{\tilde{P}})\leq \bm{h}] = 1 
\end{equation}

is equivalent to:
\begin{equation}
Q[\bm{\tilde{P}} = \bm{C} ] = 1 
\end{equation}
where $\bm{C}$ denotes the support single point, the only matrix of set $U$.\\

Thus the ambiguity set becomes:
\begin{equation}
\mathcal{F} = \{ Q : Q[\bm{\tilde{P}} = \bm{C} ] = 1  , \;\; \mathbb{E}_Q [vec \bm{\tilde{P}}] = \bm{m}, \;\;  Q[vec (\bm{\tilde{P}}) = \bm{m}] = 1 \}\\
\end{equation}
where $vec(\bm{C})= m$.\\

Using this, the desired result follows:\\

\begin{equation}
\begin{aligned}
\underset{u^i\in S_{a_i}}{\operatorname{argmin}}\,\sup_{Q \in \mathcal{F}} \; Q\text{-CVaR}_{\varepsilon_i} [-\pi_i(\bm{\tilde{P};x^{-i},u^i})]
& = \underset{u^i\in S_{a_i}}{\operatorname{argmin}}\,Q\text{-CVaR}_{\varepsilon_i} [-\pi_i(\bm{C;x^{-i},u^i})] \\
& = \underset{u^i\in S_{a_i}}{\operatorname{argmin}}\, [-\pi_i(\bm{C;x^{-i},u^i})] \\
& = \underset{u^i\in S_{a_i}} {\operatorname{argmax}}  [\pi_i(\bm{C;x^{-i},u^i})] 
\end{aligned}
\end{equation}

\end{proof}

The main reason of using the ambiguity set ~\eqref{adas} in the developing of our theorem is because is restrictive enough to imply tractable optimization problems and at the same time is expressive enough to cover a large variety of different ambiguity sets. By choosing carefully the values of $\bm{W,h,m}$ and $s$ of the ambiguity set, together with the possibility of several risk levels $\varepsilon_i$ we are able to solve many variations of one specific distributionally robust game.\\

The practical applicability of these three special cases of distributionally robust games is showed in the next chapter.\\ 

\section{Two concrete examples of finite Ditsributionally Robust Games}
\label{examplesdistribgames}
Having presented our distributionally robust games model we will now illustrate our approach with two concrete examples.

\subsection{Distributionally Robust Free Rider Game}

In subsection ~\eqref{finiterobustgames}, we have explained the robust approach of the classical two player Free Rider Game that Bertsimas and Aghassi first proposed in 2006. In particular, in \cite{aghassi2006robust} authors showed that the robust free rider game with $\tilde{c} \in (1/4,5/8)$ has exactly three equilibria $(1,0),(0,1),(3/8,3/8)$.\\
To develop similar results with them, but for the Distributionally Robust approach that we introduced we change the formulation of the problem and we create the \emph{Distributionally Robust Free rider Game}.\\

\emph{Problem Description:}\\

The \textbf{Distributionally Robust Free rider }is a two player game in which each player has two possible actions, to contribute or not contribute in the common good. If a player decide to contribute then he bears a cost of amount $\tilde{c}$ and both players enjoy a payoff of 1. If none of the players contributes then no one loses or gains money. The players choose strategies simultaneously.\\

The representation of this game is similar with Table ~\ref{rfrider} of section ~\eqref{k323}.\\

\begin{table}[H]
\centering
\caption{Distributionally Robust Free Rider Game}
\begin{tikzpicture}[element/.style={minimum width=1.75cm,minimum height=0.85cm}]
\matrix (m) [matrix of nodes,nodes={element},column sep=-\pgflinewidth, row sep=-\pgflinewidth,]{
         & Contrib  & NoCon  \\
Contrib & |[draw]|($1-\tilde{c},1-\tilde{c}$) & |[draw]|($1-\tilde{c},1$) \\
NoCon  & |[draw]|($1,\,\,\,\,\,\,\,\,\,1-\tilde{c}$) & |[draw]|(0,0) \\
};

\node[above=0.25cm] at ($(m-1-2)!0.5!(m-1-3)$){\textbf{Player 2}};
\node[rotate=90] at ($(m-2-1)!0.5!(m-3-1)+(-1.25,0)$){\textbf{Player 1}};
\end{tikzpicture}
\end{table}

In the classical Free rider game the parameter $\tilde{c}$ is fixed ($\tilde{c}=\check{c}$) and in the robust approach $\tilde{c} \in [\check{c} - \Delta,\check{c} + \Delta]$ where $\Delta$ is fixed strictly positive number. \\
In our new, distributionally robust approach, players have partial information about the probability distribution of the uncertain variable $\tilde{c}$ (about the probability distribution Q of the payoff matrix $\bm{\tilde{P}}$). In particular, the players do not know the exact distribution of the payoff matrix. They only be aware of a common known ambiguity set $\mathcal{F}$ of all possible probability distributions $Q$ that satisfy some specific properties. Subsequently, all players adopt a worst case CVaR approach to the uncertainty which is computed over all probability distributions within the set  $\mathcal{F}$. 
The introduction of the CVaR in the formulation of the game allows the two players to have different risk attitudes. Finally, the risk levels of the players are assumed to be common knowledge and none of the two players has private information.\\

For example, we may consider the distributionally robust free rider game in which the ambiguity set is given by:

\begin{equation}
  \label{frreriderequili}
{\mathcal{F}}_1 = \{ Q : Q[\tilde{c} \in [ \check{c}-\Delta , \check{c}+\Delta ] ] = 1 , \;\; \mathbb{E}_Q [vec (\tilde{\bm{P}}_{(c)})] = \bm{m}, \;\; \mathbb{E}_Q [\left \lVert vec (\tilde{\bm{P}}_{(c)}) - \bm{m} \right \rVert_1] \leq s \}
\end{equation}\\
Where: $\Delta > 0,$  $s \geq 0$,  $\check{c}$ is the mid point of the interval of $\tilde{c}$ and \\

\begin{equation}
 \tilde{\bm{P}}_{(c)} = \begin{pmatrix}
(1-\tilde{c},1-\tilde{c}) & (1-\tilde{c},1)  \\
(1, 1-\tilde{c}) & (0,0) \end{pmatrix}
\end{equation}\\

\emph{Important assumption: } Vector $\bm{m}$ of the second constraint of the ambiguity set $\mathcal{F}$ must belong to the bounded uncertainty polyhedral set of the payoff matrix $\bm{\tilde{P}}$. Otherwise the ambiguity set $\mathcal{F}$ will be empty. For the distributionally Robust free rider game this corresponds to: $\bm{m}\in [ \check{c}-\Delta , \check{c}+\Delta ]$

\subsection{Distributionally Robust Inspection Game}
\emph{Problem Description:}\\
As we have seen in subsection ~\eqref{importansce} at Table ~\ref{robinspectiongame} the Robust Inspection Game its a two player game in which the row player is the employee (possible actions:Shirk or Work) and the column player is the employer(with possible actions Inspect or not Inspect). 

\begin{table}[H]
\caption{Distributionally Robust Inspection Game}
\centering
\begin{tabular}{ |c|c|c| }
    \hline
      & Inspect & NotInspect \\ \hline
    Shirk & ($0 ,-\tilde{h}$) & (w, -w) \\ \hline
    Work & ($w-\tilde{g},\tilde{v}-w-\tilde{h}$) & ($w-\tilde{g},\tilde{v}-w$) \\
    \hline
  \end{tabular}
\label{robinspectiongame}
  \end{table}

The two players choose their actions simultaneously and then they receive the corresponding to the combination of their strategies payoffs. When the employee works he has cost $\tilde{g}$ and his employer has profit equal to $\tilde{v}$. Each inspection costs to the employer $\tilde{h}$ but if he inspects and find the employee shirking then he does not pay him his wage w. In all other cases employee's wage is paid. All values except the payment w of the employee are uncertain. 

In distributionally robust inspection game, players have partial information about the probability distributions of the uncertain variables $\tilde{g},\tilde{v}$ and $\tilde{h}$ (about the probability distribution Q of the payoff matrix $\bm{\tilde{P}}$). In particular, the players do not know the exact distribution of the payoff matrix. They only be aware of a common known ambiguity set $\mathcal{F}$ of all possible probability distributions $Q$ that satisfy some specific properties. Subsequently, all players adopt a worst case CVaR approach to the uncertainty which is computed over all probability distributions within the set  $\mathcal{F}$. 
The introduction of the CVaR in the formulation of the game allows the two players to have different risk attitudes. Finally, the risk levels of the players are assumed to be common knowledge and none of the two players has private information.\\

In particular,the following ambiguity set $\mathcal{F}$ may denote the set of all probability distributions that are consistent with the known distributional properties of Q.\\

\begin{equation}
  \label{Inspectionerequili}
{\mathcal{F}}_2 = \{ Q : Q[(\tilde{g},\tilde{v},\tilde{h}) \in [ \underline{g} , \overline{g} ]\times [ \underline{v} , \overline{v} ]\times [ \underline{h} , \overline{h} ]] = 1 , \;\; \mathbb{E}_Q [vec (\tilde{\bm{P}}_{(g,v,h)})] = \bm{m}, \;\; \mathbb{E}_Q [\left \lVert vec (\tilde{\bm{P}}_{(g,v,h)}) - \bm{m} \right \rVert_1] \leq s \}
\end{equation}\\

Where: $s \geq 0$, and\\

\begin{equation}
 \tilde{\bm{P}}_{(g,v,h)} = \begin{pmatrix}
(0 ,-\tilde{h}) & (w, -w)  \\
(w-\tilde{g},\tilde{v}-w-\tilde{h})) & (w-\tilde{g},\tilde{v}-w) \end{pmatrix}
\end{equation}\\

Vector $\bm{m}$ of the second constraint of the ambiguity set must belong to the bounded polyhedral uncertainty set of payoff matrix $\bm\tilde{P}$. For the Distributionally Robust Inspection Game that is, $\bm{m} \in  [ \underline{g} , \overline{g} ]\times [ \underline{v} , \overline{v} ]\times [ \underline{h} , \overline{h} ]$. Otherwise the ambiguity set will be empty.

\chapter{Numerical Evaluation}
In this chapter, we experimentally evaluate the new model of games described in Chapter 4 and
investigate its practical applicability. Our main goal is to show how the number of equilibria and players' payoffs change under several assumptions about the ambiguity set and the players' risk levels. The games with witch we deal in this chapter are the previously developed concrete examples of Distributionally Robust Free Rider Game and Distributionally Robust Inspection Game.\\

The method that we use to approximately compute the distributionally robust optimization equilibria  and the players' payoffs at each equilibrium of any distributionally robust game is developed as follows:
\begin{enumerate}
\item Check if the ambiguity set of the distributionally robust game can be expressed like the general form of equation ~\eqref{adas}. ( The ambiguity sets of Distributionally Robust Free Rider Game and Distributionally Robust Inspection Game have this property.)
\item Estimate the multi-linear system of equalities and inequalities whose dimension-reducing component-wise projection of the feasible solution set is equivalent with the set of equilibria of the distributionally robust game (see theorem ~\eqref{alldistribequilibria}) 
\item Find the feasible solutions of the multi-linear system and for each solution keep the components that correspond to the strategies of the players (projection of the solution). Additionally, compute the players' payoffs at each equilibrium. These are achieved using the YALMIP modelling language \cite{lofberg2004yalmip}, in Matlab 2014b.\footnote{After the conversion of the game to the multi-linear system of equations and inequalities choosing modelling in YALMIP is the most preferable method as there is no need for extra pre-calculations that are prone to human error.}\\
\end{enumerate}  

All numerical evaluations of this chapter were conducted on a 2.27GHz, Intel Core i5 CPU 430 machine with 4GB of RAM.\\

More specifically, in this chapter using tables and figures we present the results (equilibria and players' payoffs at equilibria) after the implementation of our YALMIP code. The procedure that we must follow before run the code in order to transform the ambiguity set of each game in the general form ~\eqref{adas} is available in Appendix. \\

\section{Special Cases}
\label{specialcases}

In this section we evaluate the number of equilibria and players' payoffs for the\emph{ special cases} of the two aforementioned games. Special cases are like in section ~\eqref{ambset} the following distinct versions of the distributionally robust games:
\begin{itemize}
\item All players of the game are risk neutral. That is, their risk levels are equal to one.($\varepsilon_1=\varepsilon_2=1$) 
\item The maximum distance s of the third constraint of the ambiguity set is equal to zero($s=0$).
\item Existence of support single point in the ambiguity set.
\end{itemize}

As we mentioned before we can create several versions of any distributionally robust game by changing the uncertain variables of the ambiguity set and the players' risk levels.\\
More specifically in the experiments of this section we assume the following:  
\begin{itemize}
\label{itemiz}
\item For the distributionally robust free rider problem the uncertainty set has mid point $\check{c}= 7/16$ and $\Delta=3/16$. 
\item For the distributionally robust inspection game the uncertain parameters can take the following values: $ \tilde{g}\in [8,12], \tilde{v} \in [16,24], \tilde{h} \in [4,6]$ and $ w=15$. 
\end{itemize}

In both games we assume that the initial value of the average vector is $\bm{m}=\bm{m_1}$ where $\bm{m_1}$ denotes the vector that corresponds to the nominal version of the game. Moreover, for every special case of each distributionally robust game we make the same experiments using $\bm{m}=\bm{m_2}$ as average vector. The values $\bm{m_1}$ and $\bm{m_2}$ for each game are available in Appendix. \\
Note that average vector $\bm{m}$ of the ambiguity set $\mathcal{F}$ must belong to the bounded uncertainty polyhedral set of the payoff matrix $\bm{\tilde{P}}$. Otherwise the ambiguity set $\mathcal{F}$ will be empty.\\ 

Let us now focus on each special case of the two distributionally robust games separately.

\subsection{Special Cases of Distributionally Robust Free Rider Game}

In this subsection we present the results of the three aforementioned special cases for the distributionally robust free rider game. The equilibria of each special case and players' payoffs at each equilibrium,  are illustrated in Tables ~\ref{exp1robust}, ~\ref{exper2robust} and ~\ref{exp3robust}. \\ 
All tables have the same form. The first(second,third,...) payoff of player i corresponds to the first(second,third,...) equilibria of the equilibria column of the table.\\

\begin{table}[H]
\caption{Distributionally Robust Free Rider Game: Results when players are risk neutral. That is, their risk levels are equal to one ($\varepsilon_1=\varepsilon_2= 1$). The values of maximum distance s and average vector $\bm{m}$ of the ambiguity set are varied}
\centering
   \begin{tabular}{| l | l | l | l | l |} 
   \hline 
  ($\bm{s}$) &($\bm{m}$) & Equilibria & Payoff Player 1 & Payoff Player 2\\ 
   \hline 
    0 & $\bm{m_1}$ & (0,1), (1,0), (9/16,9/16)& 1, 0.5625, 0.5625 &  0.5625, 1, 0.5625 \\ 
   \hline 
     2& $\bm{m_1}$ & (0,1), (1,0), (9/16,9/16)& 1, 0.5625, 0.5625 &  0.5625, 1, 0.5625 \\ 
    \hline 
    5& $\bm{m_1}$ & (0,1), (1,0), (9/16,9/16) & 1, 0.5625, 0.5625 &  0.5625, 1, 0.5625 \\ 
    \hline 
     10& $\bm{m_1}$ & (0,1), (1,0), (9/16,9/16)& 1, 0.5625, 0.5625 &  0.5625, 1, 0.5625 \\ 
    \hline 
     0 & $\bm{m_2}$ & (0,1), (1,0), (1/2,1/2)&  1, 0.5, 0.5&  0.5, 1, 0.5 \\ 
   \hline 
  2& $\bm{m_2}$ & (0,1), (1,0), (1/2,1/2) & 1, 0.5, 0.5&  0.5, 1, 0.5 \\
    \hline 
    5& $\bm{m_2}$ & (0,1), (1,0), (1/2,1/2)& 1, 0.5, 0.5&  0.5, 1, 0.5 \\ 
    \hline 
     10& $\bm{m_2}$ & (0,1), (1,0), (1/2,1/2)& 1, 0.5, 0.5&  0.5, 1, 0.5 \\ 
    \hline 
   \end{tabular}
\label{exp1robust}
\end{table}

  \begin{table}[H]
\caption{Dstributionally Fobust Free Rider Game: Results when the maximum distance of the third constraint of the ambiguity set is equal to zero ($s=0$). The players are allowed to have any risk attitude(any value for $\varepsilon_i$) and the value of the average vector $\bm{m}$ must  belong in the uncertainty set.}
\centering
   \begin{tabular}{| l | l | l | l | l |} 
   \hline 
   Risk Levels &($\bm{m}$) & Equilibria & Payoff Player 1 & Payoff Player 2\\ 
   \hline 
   $\varepsilon_1=\varepsilon_2= 1/2$ &$\bm{m_1}$ & (0,1), (1,0), (9/16,9/16)& 1, 0.5625, 0.5625 &  0.5625, 1, 0.5625 \\ 
   \hline 
  $\varepsilon_1=\varepsilon_2= 1/100$  & $\bm{m_1}$ & (0,1), (1,0), (9/16,9/16)&1, 0.5625, 0.5625 &  0.5625, 1, 0.5625 \\ 
    \hline 
   $\varepsilon_1=\varepsilon_2= 10/100$ &  $\bm{m_1}$ & (0,1), (1,0), (9/16,9/16)& 1, 0.5625, 0.5625 &  0.5625, 1, 0.5625 \\ 
    \hline 
    $\varepsilon_1= 13/100, \varepsilon_2= 54/100$ &  $\bm{m_1}$ & (0,1), (1,0), (9/16,9/16)& 1, 0.5625, 0.5625 &  0.5625, 1, 0.5625 \\ 
    \hline 
    $\varepsilon_1=\varepsilon_2= 1/2$ &   $\bm{m_2}$ & (0,1), (1,0), (1/2,1/2)& 1, 0.5, 0.5&  0.5, 1, 0.5 \\
   \hline 
  $\varepsilon_1=\varepsilon_2= 1/100$  &$\bm{m_2}$ & (0,1), (1,0), (1/2,1/2)&  1, 0.5, 0.5&  0.5, 1, 0.5 \\
    \hline 
      $\varepsilon_1=\varepsilon_2= 10/100$ &  $\bm{m_2}$ & (0,1), (1,0), (1/2,1/2)& 1, 0.5, 0.5&  0.5, 1, 0.5 \\
    \hline 
    $\varepsilon_1= 13/100, \varepsilon_2= 54/100$ &  $\bm{m_2}$ & (0,1), (1,0), (1/2,1/2)& 1, 0.5, 0.5&  0.5, 1, 0.5 \\
    \hline 
   \end{tabular}
\label{exper2robust}
\end{table}

\begin{table}[H]
\caption{Distributionally Robust Free Rider Game: Results for any $\varepsilon_i$ and any value of maximum distance s, but with support single point(c=1/2)}
\centering
   \begin{tabular}{| l | l | l | l | l | l |} 
   \hline 
   Risk Levels & ($\bm{s}$) & ($\bm{m}$) & Equilibria & Payoff Player 1 & Payoff Player 2\\ 
    \hline 
    $\varepsilon_1=\varepsilon_2= 1/2$ & 3  & $\bm{m_2}$ & (0,1), (1,0), (1/2,1/2)& 1, 0.5, 0.5&  0.5, 1, 0.5  \\ 
   \hline 
  $\varepsilon_1=\varepsilon_2= 1/100$  & 47& $\bm{m_2}$ & (0,1), (1,0), (1/2,1/2)& 1, 0.5, 0.5&  0.5, 1, 0.5  \\
    \hline 
      $\varepsilon_1=\varepsilon_2= 10/100$ & 1& $\bm{m_2}$ & (0,1), (1,0), (1/2,1/2) & 1, 0.5, 0.5&  0.5, 1, 0.5 \\ 
    \hline 
    $\varepsilon_1= 13/100, \varepsilon_2= 54/100$ & 23.4& $\bm{m_2}$ & (0,1), (1,0), (1/2,1/2) & 1, 0.5, 0.5&  0.5, 1, 0.5 \\ 
    \hline 
   \end{tabular}
\label{exp3robust}
\end{table}

\emph{\underline{Discussion of the Results:}}

As we expect, when the distributionally robust finite game is included in one of the three special cases(see section ~\eqref{ambset}) then this is equivalent with a complete information game with fixed payoff matrix the one that corresponds to the average vector $\bm{m}$ of the ambiguity set. For example when $\bm{m}=\bm{m_1}$ the distributionally robust free rider game for the special cases  has 3 equilibria $(x_1^1, x_1^2):(1,0),(0,1)$ and $(9/16,9/16)$. These equilibria are exactly the same with the equilibria of the Nash Game with fixed payoff matrix: \\
\begin{equation}
\bm{\check{P}}=\begin{pmatrix}
(9/16,9/16) & (9/16,1)\\
(1,9/16) & (0,0)\end{pmatrix}, \quad \text{where} \,\ vec(\bm{\check{P}})= \bm{m_1}
\end{equation}
Note that the payoffs of the two players at equilibria are also equivalent with the payoffs of the corresponding Nash Game.

\subsection{Special Cases of Distributionally Robust Inspection Game}
\label{speccasinsgame}

In this subsection we present the results of the three aforementioned special cases for the distributionally robust inspection game. The equilibria of each special case and players' payoffs at each equilibrium,  are illustrated in Tables ~\ref{exp1inspection}, ~\ref{exper2inspection} and ~\ref{exp3inspection}. \\ 

\begin{table}[H]
\caption{ Distributionally Robust Inspection Game : Results when players are risk neutral. That is, their risk levels are equal to one ($\varepsilon_1=\varepsilon_2= 1$). The values of maximum distance s and average vector $\bm{m}$ of the ambiguity set are varied}
\centering
   \begin{tabular}{| l | l | l | l | l |} 
   \hline 
    ($\bm{s}$) & ($\bm{m}$) & Equilibria & Payoff Player 1 & Payoff Player 2\\ 
   \hline 
   0 & $\bm{m_1}$ & (1/3,2/3) & 5 & -1.666\\ 
   \hline 
    2& $\bm{m_1}$ & (1/3,2/3) & 5 & -1.666\\
    \hline 
     5& $\bm{m_1}$ &(1/3,2/3) & 5 & -1.666\\ 
    \hline 
     10& $\bm{m_1}$ & (1/3,2/3) & 5 & -1.666\\ 
    \hline 
   0 & $\bm{m_2}$ & (1/3,3/5) & 6 & -3.666\\ 
   \hline 
   2& $\bm{m_2}$ & (1/3,3/5)& 6 & -3.666 \\
    \hline 
    5& $\bm{m_2}$ & (1/3,3/5)& 6 & -3.666\\ 
    \hline 
     10& $\bm{m_2}$ & (1/3,3/5)& 6 & -3.666 \\ 
    \hline 
   \end{tabular}
\label{exp1inspection}
\end{table}

  \begin{table}[H]
\caption{Distributionally Robust Inspection Game:  Results when the maximum distance of the third constraint of the ambiguity set is equal to zero ($s=0$). The players are allowed to have any risk attitude(any value for $\varepsilon_i$) and the value of the average vector $\bm{m}$ must  belong in the uncertainty set.}
\centering
   \begin{tabular}{| l | l | l | l | l |} 
   \hline 
   Risk Levels & ($\bm{m}$) & Equilibria & Payoff Player 1 & Payoff Player 2\\ 
   \hline 
   $\varepsilon_1=\varepsilon_2= 1/2$ & $\bm{m_1}$ & (1/3,2/3)& 5 & -1.666\\  
   \hline 
  $\varepsilon_1=\varepsilon_2= 1/100$  & $\bm{m_1}$ & (1/3,2/3) & 5 & -1.666\\ 
    \hline 
   $\varepsilon_1=\varepsilon_2= 10/100$ & $\bm{m_1}$ & (1/3,2/3) & 5 & -1.666\\ 
    \hline 
    $\varepsilon_1= 13/100, \varepsilon_2= 54/100$ 0& $\bm{m_1}$ & (1/3,2/3) & 5 & -1.666\\ 
    \hline 
    $\varepsilon_1=\varepsilon_2= 1/2$   & $\bm{m_2}$ & (1/3,3/5)& 6 & -3.666 \\ 
   \hline 
  $\varepsilon_1=\varepsilon_2= 1/100$  & $\bm{m_2}$ & (1/3,3/5)& 6 & -3.666 \\
    \hline 
      $\varepsilon_1=\varepsilon_2= 10/100$ & $\bm{m_2}$ & (1/3,3/5)& 6 & -3.666 \\ 
    \hline 
    $\varepsilon_1= 13/100, \varepsilon_2= 54/100$ & $\bm{m_2}$ & (1/3,3/5)& 6 & -3.666 \\ 
    \hline 
   \end{tabular}
\label{exper2inspection}
\end{table}

 \begin{table}[H]
\caption{Distributionally Robust Inspection Game: Results for any $\bm{\varepsilon_i}$ and any value of $\bm{s}$, but with support single point (h=5,g=9,v=17,w=15)}
\centering
   \begin{tabular}{| l | l | l | l | l | l |} 
   \hline 
   Risk Levels &($\bm{s}$) & ($\bm{m}$) & Equilibria & Payoff Player 1 & Payoff Player 2\\ 
    \hline 
    $\varepsilon_1=\varepsilon_2= 1/2$ & 3  & $\bm{m_2}$ & (1/3,3/5)& 6 & -3.666 \\ 
   \hline 
  $\varepsilon_1=\varepsilon_2= 1/100$  & 47& $\bm{m_2}$ & (1/3,3/5)& 6 & -3.666\\
    \hline 
      $\varepsilon_1=\varepsilon_2= 10/100$ & 1& $\bm{m_2}$ &(1/3,3/5)& 6 & -3.666 \\ 
    \hline 
    $\varepsilon_1= 13/100, \varepsilon_2= 54/100$ & 23.4& $\bm{m_2}$ & (1/3,3/5)& 6 & -3.666 \\ 
    \hline 
   \end{tabular}
\label{exp3inspection}
\end{table}

\emph{\underline{Discussion of the Results:}}

We notice that in all tables the set of equilibria that we obtain are always equivalent with the set of equilibria of the Nash Game with fixed payoff matrix the one that corresponds to the average vector $\bm{m}$ of the ambiguity set. 
For example when $\bm{m}=\bm{m_2}$ the distributionally inspection game has only one unique equilibrium (1/3,3/5). This equilibrium is exactly the same with the Nash Game with payoff matrix $\bm{\check{P}}$ where $vec(\bm{\check{P}})=\bm{m_2}$ :\\

\begin{equation}
\bm{\check{P}}=\begin{pmatrix}
(0,-h) & (w,-w)\\
(w-g,v-w-h) & (w-g,v-w)\end{pmatrix}
=\begin{pmatrix}
(0,-5) & (15,-15)\\
(15-9,17-15-5) & (15-9,17-15)\end{pmatrix}
\end{equation}\\

The payoffs of the two players at the equilibrium are also equivalent with the payoffs of the corresponding Nash Game.

\section{Fixed ambiguity set - Several Risk Levels}

In this section we perform the following experiment for both, Distributionally Robust Inspection Game and Distributionally Robust Free Rider Game.\\

\textbf{The Experiment:}\\
\emph{What would happen to the number of equilibria and to the payments of the two players when the ambiguity set is kept fixed while the values of players' risk levels are varied.}\\

\subsection{For Distributionally Robust Inspection Game}

As we have mentioned in section ~\eqref{examplesdistribgames} the ambiguity set of the Distributionally Inspection Game is the following:

\begin{equation}
  \label{Inspectionerequili}
{\mathcal{F}}_2 = \{ Q : Q[(\tilde{g},\tilde{v},\tilde{h}) \in [ \underline{g} , \overline{g} ]\times [ \underline{v} , \overline{v} ]\times [ \underline{h} , \overline{h} ]] = 1 , \;\; \mathbb{E}_Q [vec (\tilde{\bm{P}}_{(g,v,h)})] = \bm{m}, \;\; \mathbb{E}_Q [\left \lVert vec (\tilde{\bm{P}}_{(g,v,h)}) - \bm{m} \right \rVert_1] \leq s \}
\end{equation}
Where: $s \geq 0$, and
\begin{equation}
 \tilde{\bm{P}}_{(g,v,h)} = \begin{pmatrix}
(0 ,-\tilde{h}) & (w, -w)  \\
(w-\tilde{g},\tilde{v}-w-\tilde{h})) & (w-\tilde{g},\tilde{v}-w) \end{pmatrix}
\end{equation}

In the special cases of this game, (see previous experiments) we have showed that this game has unique equilibrium. More specifically, when the average vector of the ambiguity set is the nominal($\bm{m}=\bm{m_1}$) we showed that this equilibrium is equal to (1/3,2/3)(see subsection ~\eqref{speccasinsgame}).\\
In this experiment we assume that the average vector $\bm{m}$ is the one that corresponds to the nominal version of the game and without loss of generality that the maximum distance s of the third constraint of the ambiguity set is $s=4$. In addition, we assume that $ \tilde{g}\in [8,12], \tilde{v} \in [16,24], \tilde{h} \in [4,6]$ and $ w=15$. Therefore, since all variables of the ambiguity set are kept fixed \footnote{The matrix $\bm{W}$ and vector $\bm{h}$ of the first constraint of the ambiguity set are also fixed because the uncertain parameters of the payoff matrix  ($\tilde{g},\tilde{v}, \tilde{h}$) belong in a specific fixed uncertainty set}, we can say that the ambiguity set is kept fixed in this experiment . The only variables that are allowed to change are the risk levels $\varepsilon_1$ and $\varepsilon_2$ of the two players.\\

The following tables and figures illustrate the number of equilibria and the players' payoffs at these equilibria for the aforementioned fixed ambiguity set while the players' risk levels change.

More specifically Table ~\ref{tablkiuho} shows the equilibria of the previously described game when player 1  is risk neutral ($\varepsilon_1=1$) and player 2  has several risk attitudes. The players' payoffs at equilibria for each combination of the risk levels are given in Figure ~\ref{fig:test1}.

\begin{table}[H]
\caption{Distributionally Robust Inspection Game: The equilibria for different values of Risk levels when the vector $\bm{m}$ of the ambiguity set take the nominal value and the maximum distance is $s=4$. Player 1 is risk neutral. His risk level is kept fixed ($\varepsilon_1=1$) while player 2 has several levels of risk aversion.}
\centering
   \begin{tabular}{| l | p{5cm}  | } 
   \hline 
   Risk Levels & Equilibria  \\ 
   \hline 
    $\varepsilon_1=1, \varepsilon_2= 1$ & (1/3,2/3)  \\ 
   \hline 
   $\varepsilon_1=1, \varepsilon_2= 0.75$ & (0.333, 0.66) \\
    \hline 
    $\varepsilon_1=1, \varepsilon_2= 0.5$ & (0.333, 0.66)\\ 
    \hline 
    $\varepsilon_1=1, \varepsilon_2= 0.25$ & (0.333,0.66)(0.8179,0) (0.9342,0.7069)\\ 
    \hline 
      $\varepsilon_1=1, \varepsilon_2= 0.01$  &(1,0),(0,0.66), (1,0.66),(1,0.1941) (0.333,0.66), (0.9654,0.1387),(1,0.59)\\ 
    \hline 
   \end{tabular}
\label{tablkiuho}
\end{table}

\begin{figure}[H]
\centering
\begin{subfigure}{.4\textwidth}
  \centering
  \includegraphics[width=1\linewidth]{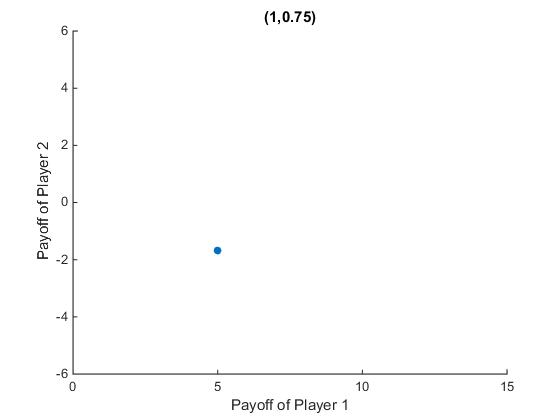}
  \caption{}
  \label{fig:sub1}
\end{subfigure}%
\begin{subfigure}{.4\textwidth}
  \centering
  \includegraphics[width=1\linewidth]{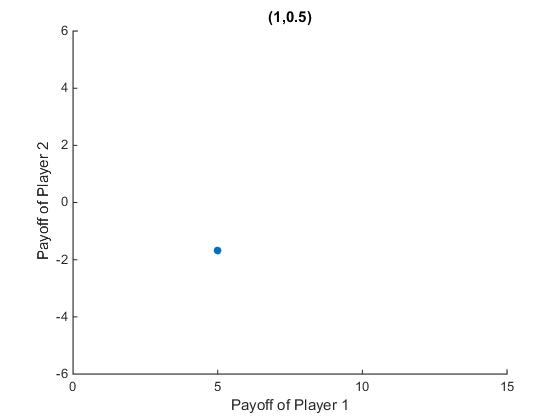}
  \caption{}
  \label{fig:sub2}
\end{subfigure}
\begin{subfigure}{.4\textwidth}
  \centering
 \includegraphics[width=1\linewidth]{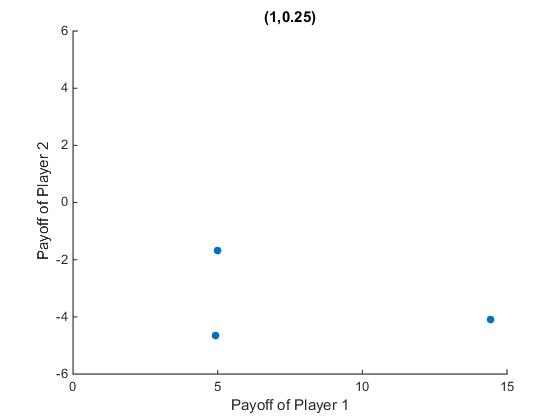}
  \caption{ }
  \label{fig:sub2}
\end{subfigure}
\begin{subfigure}{.4\textwidth}
  \centering
  \includegraphics[width=1\linewidth]{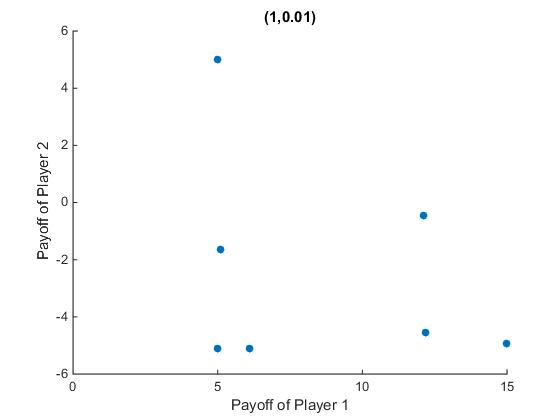}
  \caption{ }
  \label{fig:sub2}
\end{subfigure}
\caption{Graph representation of the payoffs of the two players at equilibria for different risk levels. Risk level of player 1 is kept fixed ($\varepsilon_1=1$) while player 2 has several levels of risk aversion. The title in each sub-figure denotes the risk levels of the two players: (Players 1's risk level, Players 2's risk level). }
\label{fig:test1}
\end{figure}

Subsequently,  Table ~\ref{risk2fixed} shows the equilibria of the previously described game when player 2  is risk neutral ($\varepsilon_2=1$) and player 1  has several risk attitudes. The players' payoffs at equilibria for each combination of the risk levels are given in Figure ~\ref{fig:test2}.

\begin{table}[H]
\caption{Distributionally Robust Inspection Game: The equilibria for different values of Risk levels when the vector $\bm{m}$ of the ambiguity set take the nominal value and the maximum distance is $s=4$. Player 2 is risk neutral. His risk level kept fixed ($\varepsilon_2=1$) while player 1 has several levels of risk aversion.}
\centering
   \begin{tabular}{| l | p{5cm} | } 
   \hline 
   Risk Levels & Equilibria  \\
   \hline 
    $\varepsilon_1=1, \varepsilon_2= 1$ & (1/3,2/3)  \\ 
   \hline 
   $\varepsilon_1=0.75, \varepsilon_2= 1$ & (0.333,0.666),(0.35,0.665), (0.2583,0.96)\\
    \hline 
    $\varepsilon_1=0.5, \varepsilon_2= 1$ & (0.333,0.666),(0.5379,0), (0.3842,0.66) \\ 
    \hline 
    $\varepsilon_1=0.25, \varepsilon_2= 1$ & (0.4427,0),(0.333,0.666), (0,0.3467) \\ 
    \hline 
      $\varepsilon_1=0.01, \varepsilon_2= 1$  & (0,0),(1,1), (0.333,0.666),(0.33,0), (0.335,1) \\ 
    \hline 
   \end{tabular}
\label{risk2fixed}
\end{table}

\begin{figure}[H]
\centering
\begin{subfigure}{.4\textwidth}
  \centering
 \includegraphics[width=1\linewidth]{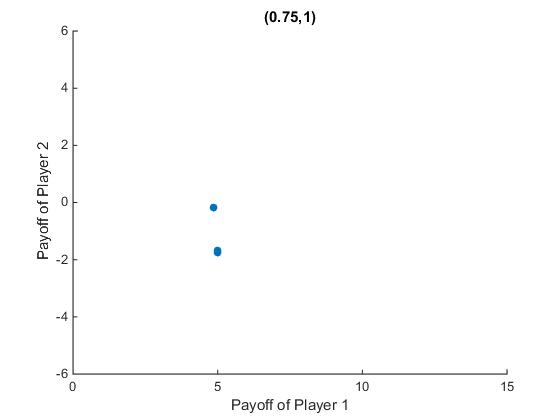}
  \caption{}
  \label{fig:sub1}
\end{subfigure}%
\begin{subfigure}{.4\textwidth}
  \centering
  \includegraphics[width=1\linewidth]{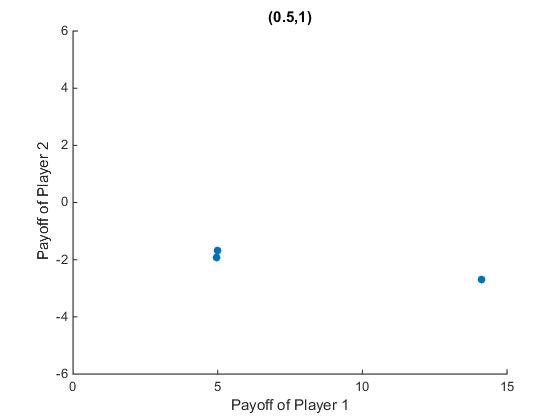}
  \caption{}
  \label{fig:sub2}
\end{subfigure}
\begin{subfigure}{.4\textwidth}
  \centering
 \includegraphics[width=1\linewidth]{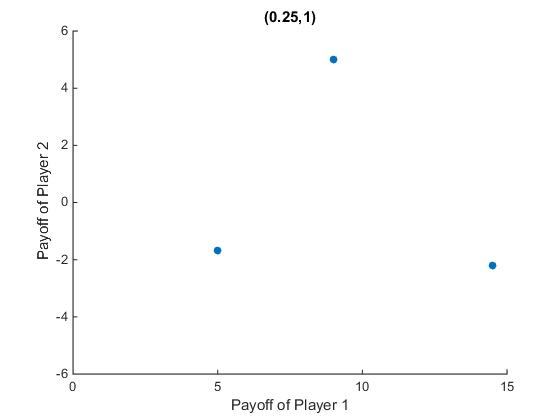}
  \caption{ }
  \label{fig:sub2}
\end{subfigure}
\begin{subfigure}{.4\textwidth}
  \centering
  \includegraphics[width=1\linewidth]{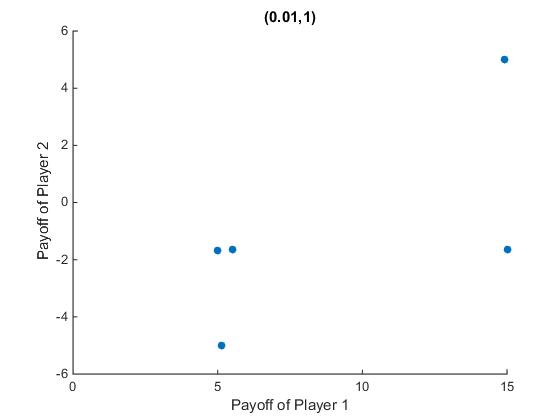}
  \caption{ }
  \label{fig:sub2}
\end{subfigure}
\caption{Graph representation of the payoffs of the two players at equilibria for different risk levels. Risk level of player 2 is kept fixed ($\varepsilon_2=1$) while player 1 has several levels of risk aversion. The title in each sub-figure denotes the risk levels of the two players: (Players 1's risk level, Players 2's risk level)}
\label{fig:test2}
\end{figure}

\textbf{Discussion of the Results:}

In standard optimization problems we know that as the decision maker becomes more risk averse his payoff always decreases. From the previous Figures ~\ref{fig:test1} and ~\ref{fig:test2} we can conclude that in game theory situation this is not always the case. We can not have a general rule,  since now the nature of the problem is more complicated. \\
In all games that we develop in this thesis we assume that players are rational, so they can predict the outcome of the game and choose the strategies that form an equilibrium. For this reason, a difference at risk attitude of a player does not change only his decision but also the decisions of his opponents.\footnote{in the distributionally robust games, risk attitude is assumed to be common knowledge}.
Hence, with increasing of risk aversion of one player the players' payoffs at equilibria may both increase or decrease depending the game. For example, in Figure ~\ref{fig:test1} at Subfigure (d) where the risk levels are (1,0.01) we can observe that for some equilibria the players have large payments and for some others very low.\\

To verify that in the Distributionally Robust Inspection Game we can not have a general rule about what happen in the payoffs of the two players when they choose to play strategies that form equilibria we also present Table ~\ref{005001} and Figure ~\ref{fig:test} witch illustrate the payoffs of the two players at equilibria when both of them are risk averse. More specifically when their risk levels are $\varepsilon_1=\varepsilon_2=0.05$ and $\varepsilon_1=\varepsilon_2=0.01$\\

\begin{table}[H]
\caption{Distributionally Robust Inspection Game: Equilibria when the players' risk levels are $\varepsilon_1=\varepsilon_2=0.05$ and $\varepsilon_1=\varepsilon_2=0.01$}
\centering
   \begin{tabular}{| l | p{5cm} | } 
   \hline 
   Risk Levels & Equilibria  \\
   \hline 
    $\varepsilon_1=\varepsilon_2= 0.05$ & (1,0.66),(1,1)(0.95,0), (0.43,1),(0.333,0.666)  \\ 
    \hline 
      $\varepsilon_1=\varepsilon_2=0.01$  & (1,0),(0,0),(0.332,0), (0.5303,1),(1,0.78) \\ 
    \hline 
   \end{tabular}
\label{005001}
\end{table}

\begin{figure}[H]
\centering
\begin{subfigure}{.5\textwidth}
  \centering
  \includegraphics[width=1\linewidth]{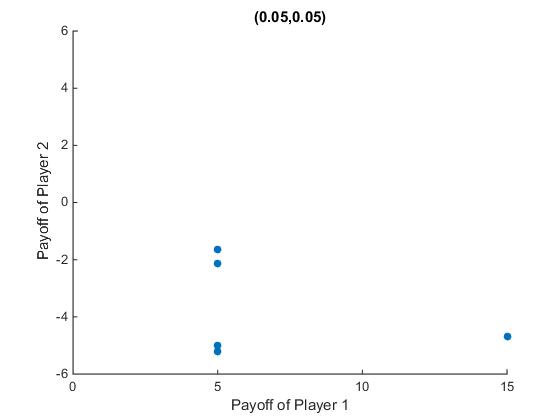}
  \caption{}
  \label{fig:sub1}
\end{subfigure}%
\begin{subfigure}{.5\textwidth}
  \centering
  \includegraphics[width=1\linewidth]{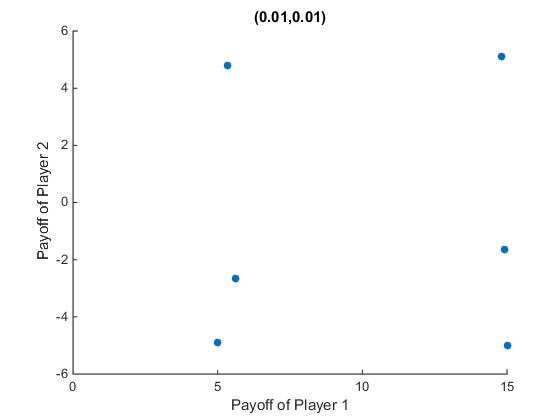}
  \caption{}
  \label{fig:sub2}
\end{subfigure}
\caption{Graph representation of the players' payoffs at equilibria when their risk levels are $\varepsilon_1=\varepsilon_2=0.05$ and $\varepsilon_1=\varepsilon_2=0.01$.The title in each sub-figure denotes the risk levels of the two players: (Players 1's risk level,Players 2's risk level)}
\label{fig:test}
\end{figure}

\subsection{For Distributionally Robust Free Rider Game}

The ambiguity set of the distributionally robust free rider game that we use in this experiment has the following form:

\begin{equation}
  \label{lokl}
{\mathcal{F}}_1 = \{ Q : Q[\tilde{c} \in [1/4,5/8] ] = 1 , \;\; \mathbb{E}_Q [vec (\tilde{\bm{P}}_{(c)})] = \bm{m_1}, \;\; \mathbb{E}_Q [\left \lVert vec (\tilde{\bm{P}}_{(c)}) - \bm{m} \right \rVert_1] \leq 2 \}
\end{equation}
Where: 
\begin{equation}
 \tilde{\bm{P}}_{(c)} = \begin{pmatrix}
(1-\tilde{c},1-\tilde{c}) & (1-\tilde{c},1)  \\
(1, 1-\tilde{c}) & (0,0) \end{pmatrix}
\end{equation}

In this experiment we assume that the uncertain parameter $\tilde{c}$ belongs to $[1/4,5/8]$, that the average vector $\bm{m}$ is the one that corresponds to the nominal version of the game and without loss of generality that the maximum distance s of the third constraint of the ambiguity set is always $s=2$. 
Therefore we can say that the ambiguity set is kept fixed. The only variables that are allowed to change are the risk levels $\varepsilon_1$ and $\varepsilon_2$ of the two players\\
In previous experiments (see Section ~\eqref{specialcases}) we showed that in special cases this distributionally robust game has three equilibria. More specifically, when the average vector of the ambiguity set is the nominal($\bm{m}=\bm{m_1}$) we showed that these equilibria are (0,1),(1,0) and (9/16,9/16). \\
Table ~\ref{sdsaf} and Figure ~\ref{gimmijl} illustrate the number of equilibria and the players' payoffs at these equilibria for the aforementioned fixed ambiguity set while the players' risk levels change.\\

\begin{table}[H]
\caption{Distributionally Robust Free Rider Game: Results for different values of Risk levels when the vector $\bm{m}$ of the ambiguity set take the nominal value and the maximum distance is $s=2$.}
\centering
   \begin{tabular}{| l | p{5cm}  | l | l |} 
   \hline 
   Risk Levels & Equilibria & Payoff Player 1 & Payoff Player 2  \\ 
   \hline 
    $\varepsilon_1=1, \varepsilon_2= 1$ & (0,1),(1,0),(9/16,9/16)& 1, 0.5625, 0.5625 &  0.5625, 1, 0.5625 \\ 
   \hline 
   $\varepsilon_1=1, \varepsilon_2= 0.75$ & (0,1),(1,0),(9/16,9/16)& 1, 0.5625, 0.5625 &  0.5625, 1, 0.5625 \\ 
    \hline 
    $\varepsilon_1=1, \varepsilon_2= 0.5$ & (0,1),(1,0),(9/16,9/16)& 1, 0.5625, 0.5625 &  0.5625, 1, 0.5625 \\  
    \hline 
    $\varepsilon_1=1, \varepsilon_2= 0.25$ & (0,1),(1,0),(9/16,9/16)& 1, 0.5625, 0.5625 &  0.5625, 1, 0.5625 \\  
    \hline 
      $\varepsilon_1=1, \varepsilon_2= 0.01$ & (0,1),(1,0),(9/16,9/16)& 1, 0.5625, 0.5625 &  0.5625, 1, 0.5625 \\  
   \hline 
   $\varepsilon_1=0.75, \varepsilon_2= 1$ & (0,1),(1,0),(9/16,9/16)& 1, 0.5625, 0.5625 &  0.5625, 1, 0.5625 \\ 
    \hline 
    $\varepsilon_1=0.5, \varepsilon_2= 1$ & (0,1),(1,0),(9/16,9/16)& 1, 0.5625, 0.5625 &  0.5625, 1, 0.5625 \\ 
    \hline 
    $\varepsilon_1=0.25, \varepsilon_2= 1$ & (0,1),(1,0),(9/16,9/16)& 1, 0.5625, 0.5625 &  0.5625, 1, 0.5625 \\  
    \hline 
      $\varepsilon_1=0.01, \varepsilon_2= 1$  & (0,1),(1,0),(9/16,9/16)& 1, 0.5625, 0.5625 &  0.5625, 1, 0.5625 \\ 
      \hline
      $\varepsilon_1=0.05, \varepsilon_2= 0.05$ & (0,1),(1,0),(9/16,9/16)& 1, 0.5625, 0.5625 &  0.5625, 1, 0.5625 \\  
    \hline 
      $\varepsilon_1=0.01, \varepsilon_2= 0.01$  & (0,1),(1,0),(9/16,9/16)& 1, 0.5625, 0.5625 &  0.5625, 1, 0.5625 \\
    \hline 
   \end{tabular}
\label{sdsaf}
\end{table}

\begin{figure}[H]
  \centering
   \includegraphics[scale = 0.7]{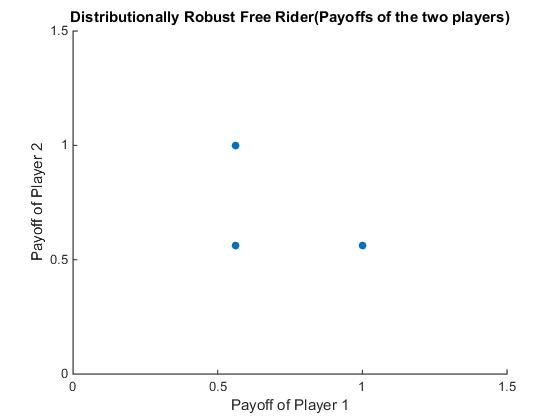}
  \caption{ Distributionally robust Free Rider Game: Graph representation of players' payoffs at equilibria for any combination of risk attitudes.}
  \label{gimmijl}
 \end{figure}

\textbf{Discussion of the Results:}\\
We notice that when the ambiguity set of the distributionally robust free rider game is like the one of equation ~\eqref{lokl} the risk levels of the two players do not influence the results of the game. For all possible combinations of the risk attitudes the equilibria of the game are always the same as well as the payoffs of the two players.\\
This results confirm our previous discussion, that in game theory situation is very difficult to develop a rule about the players' payoffs when their risk attitudes change.

\chapter{Conclusions and Future Direction of Research}
\section{Conclusions}


In this thesis we managed to combine the areas of Game Theory and Distributionally Robust Optimization and proposed a novel model of incomplete information games without private information in which the players use a distributionally robust optimization approach to cope with payoff uncertainty. 

Part of this thesis was devoted in presenting three famous models of finite games. These are the Complete Information Games (Nash Games), the Bayesian Games and the Robust Games without private information. The basic theory of robust games was analysed in detail since our distributionally robust approach presents many similarities with robust games' framework. We proposed a small extension of the special class of such games and we presented three methods which can used for approximately computing sample robust optimization equilibria.

Following that, we introduced and analysed the new model of Distributionally Robust Games. In this model players have only partial information about the probability distribution of the uncertain payoff matrix. This information is expressed through a commonly known ambiguity set of all distributions that are consistent with the known distributional properties. Similar to the robust games framework, players in distributionally robust games adopt a worst case approach. Only now the worst case is computed over all probability distributions within the ambiguity set. More specifically, players use a worst case CVaR (Conditional Value at Risk) approach. This allows players to have several risk attitudes which make our model even more coveted since in real life applications players rarely are risk neutral.

We then showed that under specific assumptions about the ambiguity set and the values of risk levels, distributionally robust game constitutes a true generalization of the three aforementioned finite games (Nash games, Bayesian Games and Robust Games). This means that any finite game of these three categories can be expressed as a distributionally robust game.
Subsequently, we proved that the set of equilibria of an arbitrary distributionally robust game with specified ambiguity set and without private information can be computed as the component-wise projection of the solution set of a multi-linear system of equations and inequalities. For special cases of such games we also showed equivalence to complete information finite games (Nash Games) with the same number of players and same action spaces.

Finally to concretize the idea of a distributionallly robust game we presented Distributionally Robust Free Rider Game and Distributionally Robust Inspection Game. Through these two concrete examples we experimentally evaluated the new model of games and we studied how the number of equilibria and the players' payments change with small changes of the unknown parameters.

\newpage
\section{Future Direction of Research}
In this section we present some possible extensions of this work.

\begin{itemize}
\item\textbf{ Existence of equilibria in Distributionally Robust Games}\\
For any class of finite games that we presented in this thesis (Nash Games, Bayesian Games, Robust Games) a theorem that show the existence of equilibria has already proved. Due to limited period of time we did not have the chance to prove the corresponding theorem for our model. Hence, a theorem that show the existence of equilibria in Distributionally Robust Finite Games seems a great extension of this work.
\item \textbf{Distributionally Robust Games with private information}\\
In this thesis we focus only on the developing of Distributionally Robust Games without private information. However, our work can be generalized to the case of distributionally robust games involving potentially private information. This will require an extension of our computation method for sampling equilibria to such situations that involve private information. A potential proof of the existence of equilibria will also need extension to this context.
\item \textbf{ More general classes of ambiguity sets}\\
Interesting results might arise if we try to make the same work for more general classes of ambiguity sets.  Possible sets can be the ambiguity sets that Wiesemann, Kuhn and Sim have considered in their pioneer paper ``Distributionally Robust Convex Optimization" \cite{wiesemann2014distributionally} where they proposed a unifying framework for modelling and solving distributionally robust optimization problems. 
\end{itemize}

\newpage
\addcontentsline{toc}{chapter}{\textbf{Bibliography}}
\bibliographystyle{plain}
\bibliography{bibliofile}


\lstset{language=Matlab,%
    breaklines=true,%
    morekeywords={matlab2tikz},
    keywordstyle=\color{blue},%
    morekeywords=[2]{1}, keywordstyle=[2]{\color{black}},
    identifierstyle=\color{black},%
    stringstyle=\color{mylilas},
    commentstyle=\color{mygreen},%
    showstringspaces=false,
    numbers=left,%
    numberstyle={\tiny \color{black}},
    numbersep=9pt, 
    emph=[1]{for,end,break},emphstyle=[1]\color{red}, 
}




\newpage
\addcontentsline{toc}{chapter}{\textbf{Appendix}}
\chapter*{Appendix}
\section*{Transformation of the ambiguity sets}

Before running our YALMIP code to obtain the desired results the ambiguity set of a Distributionally Robust Game must be in specific form. In this part of the thesis we present the work that we must do in order to create this form for the Distributionally Robust Free Rider Game and for the Distributionally Robust Inspection Game of the experiments of Chapter 5. \\

In section ~\eqref{examplesdistribgames} we presented the two concrete example of distributionally robust games that we use during this thesis. The ambiguity sets for these games are:

For the distributionally robust free rider game:
\begin{equation*}
{\mathcal{F}}_1 = \{ Q : Q[\tilde{c} \in [ \check{c}-\Delta , \check{c}+\Delta ] ] = 1 , \;\; \mathbb{E}_Q [vec (\tilde{\bm{P}}_{(c)})] = \bm{m}, \;\; \mathbb{E}_Q [\left \lVert vec (\tilde{\bm{P}}_{(c)}) - \bm{m} \right \rVert_1] \leq s \}
\end{equation*}

and for the distributionally robust indpection game:
\begin{equation*}
{\mathcal{F}}_2 = \{ Q : Q[(\tilde{g},\tilde{v},\tilde{h}) \in [ \underline{g} , \overline{g} ]\times [ \underline{v} , \overline{v} ]\times [ \underline{h} , \overline{h} ]] = 1 , \;\; \mathbb{E}_Q [vec (\tilde{\bm{P}}_{(g,v,h)})] = \bm{m}, \;\; \mathbb{E}_Q [\left \lVert vec (\tilde{\bm{P}}_{(g,v,h)}) - \bm{m} \right \rVert_1] \leq s \}
\end{equation*}

In order to be able to solve this problems (using YALMIP) the ambiguity sets must be in the following general form:

\begin{equation*}
\mathcal{F} = \{ Q : Q[\bm{\tilde{P}} \in \mathcal{U} ] = 1 , \;\; \mathbb{E}_Q [vec \bm{\tilde{P}}] = \bm{m}, \;\; \mathbb{E}_Q [\left \lVert vec (\bm{\tilde{P}}) - \bm{m} \right \rVert_1] \leq s \}
\end{equation*}
where 
 $\mathcal{U} = \{ \bm{P} : \bm{W} \cdot vec(\bm{P})\leq \bm{h}\}$ is bounded and polyhedral set.\\
 
 For this reason for each one of the distributionally robust games we must define:
 \begin{itemize}
 \item Matrix $\bm{W} \in \mathbb{R}^{(m \times N \prod_{i=1}^N a_i)}$  and vector $\bm{h} \in \mathbb{R}^m $ which are the two parameters that represent the uncertainty polyhedral set in which the uncertain values of the payoff matrix should belong.
 \item The maximum distance of all possible $vec(\bm{\tilde{P}})$ from  the mean value $\bm{m}$ which is denoted by scalar s. This could be any non-negative number.
 \item  $\bm{m} \in \mathbb{R}^{N \prod_{i=1}^N a_i}$ which is the vector that denotes the expected value of $vec(\bm{\tilde{P}})$ for each distribution that belongs in the ambiguity set. Vector $\bm{m}$ must belong to the bounded uncertainty polyhedral set of the payoff matrix $\bm{\tilde{P}}$. Otherwise the ambiguity set $\mathcal{F}$ will be empty
 \end{itemize}

 In the experiments of Chapter 5 in the case of \textbf{Distributionally Robust Free Rider Game} in order to create the ambiguity set of the general form we must use the following parameters:\\
 
\begin{equation*}
\label{kaliteri}
W = \begin{pmatrix}
1 & 0 & 0 & 0 & 0 & 0 & 0 & 0  \\
-1 & 0 & 0 & 0 & 0 & 0 & 0 & 0  \\
1 & -1 & 0 & 0 & 0 & 0 & 0 & 0  \\
-1 & 1 & 0 & 0 & 0 & 0 & 0 & 0  \\
1 & 0 & -1 & 0 & 0 & 0 & 0 & 0  \\
-1 & 0 & 1 & 0 & 0 & 0 & 0 & 0  \\
1 & 0 & 0 & 0 & 0 & -1 & 0 & 0  \\
-1 & 0 & 0 & 0 & 0 & 1 & 0 & 0  \\
0 & 0 & 0 & 1 & 0 & 0 & 0 & 0  \\
0 & 0 & 0 & -1 & 0 & 0 & 0 & 0  \\
0 & 0 & 0 & 0 & 1 & 0 & 0 & 0  \\
0 & 0 & 0 & 0 & -1 & 0 & 0 & 0  \\
0 & 0 & 0 & 0 & 0 & 0 & 1 & 0  \\
0 & 0 & 0 & 0 & 0 & 0 & -1 & 0  \\
0 & 0 & 0 & 0 & 0 & 0 & 1 & 0  \\
0 & 0 & 0 & 0 & 0 & 0 & -1 & 0 \end{pmatrix}, \quad
h = \begin{pmatrix}
6/8 \\
-3/8 \\
0 \\
0 \\
0 \\
0 \\
0 \\
0 \\
1 \\
-1 \\
1 \\
-1 \\
0 \\
0 \\
0 \\
0  \end{pmatrix} , \quad
m_1 = \begin{pmatrix}
9/16 \\
9/16 \\
9/16 \\
1 \\
1 \\
9/16 \\
0 \\
0  \end{pmatrix}, \quad
m_2 = \begin{pmatrix}
1/2 \\
1/2 \\
1/2 \\
1 \\
1 \\
1/2 \\
0 \\
0  \end{pmatrix}    
\end{equation*}
The values for matrix $\bm{W}$ and vector $\bm{h}$ are the values that correspond to  $\tilde{c} \in [1/4,5/8]$.\\
$\bm{m_1}$ and $\bm{m_2}$ are the two values of the average value $\bm{m}$ of the second constraint of the ambiguity set that we used in our experiments. The value of $\bm{m_1}$ denote the nominal version of $vec(\bm{\tilde{P}})$. That is, it takes the value of $vec(\bm{\tilde{P}})$ when the uncertain parameter $\tilde{c}$ is equal to the mid point of the interval [1/4+5/8] ($\tilde{c}=\check{c}=\dfrac{1/4+5/8}{2}= 7/16 $). $\bm{m_2}$ is the vector that is created if we assume that the uncertain parameter $\tilde{c}$ is equal to 1/2. In order to avoid emptiness of the ambiguity set, both  $\bm{m_1}$ and $\bm{m_2}$ must belong in the uncertainty set that matrix $\bm{W}$ and vector $\bm{h}$ create. \\

The special case of support single point it is the only case for witch we can use different values for $\bm{W}$ and $\bm{h}$. This happens because now we must change the first constraint of the ambiguity set in order to have only one possible payoff matrix $\bm{P}$ in the uncertainty set which it is also for certain the matrix that corresponds to the average vector $\bm{m}$ of the second constraint.\\
In our experiment we assume that single support point is the matrix when $c=1/2$. That is $Q[\tilde{c}=1/2]=1$ which in matrix form is equivalent with $Q[\tilde{\bm{P}}=\bm{C}]=1$ where C is the matrix with parameter c=1/2. In order to run our YALMIP code the first constraint of the ambiguity set must be expressed in terms of the uncertainty set $\mathcal{U} = \{ \bm{P} : \bm{W} \cdot vec(\bm{P})\leq \bm{h}\}$. For this reason the matrix $\bm{W}$ and vector $\bm{h}$ must take the following values:
\begin{equation*}
W = \begin{pmatrix}
1 & 0 & 0 & 0 & 0 & 0 & 0 & 0  \\
-1 & 0 & 0 & 0 & 0 & 0 & 0 & 0  \\
1 & -1 & 0 & 0 & 0 & 0 & 0 & 0  \\
-1 & 1 & 0 & 0 & 0 & 0 & 0 & 0  \\
1 & 0 & -1 & 0 & 0 & 0 & 0 & 0  \\
-1 & 0 & 1 & 0 & 0 & 0 & 0 & 0  \\
1 & 0 & 0 & 0 & 0 & -1 & 0 & 0  \\
-1 & 0 & 0 & 0 & 0 & 1 & 0 & 0  \\
0 & 0 & 0 & 1 & 0 & 0 & 0 & 0  \\
0 & 0 & 0 & -1 & 0 & 0 & 0 & 0  \\
0 & 0 & 0 & 0 & 1 & 0 & 0 & 0  \\
0 & 0 & 0 & 0 & -1 & 0 & 0 & 0  \\
0 & 0 & 0 & 0 & 0 & 0 & 1 & 0  \\
0 & 0 & 0 & 0 & 0 & 0 & -1 & 0  \\
0 & 0 & 0 & 0 & 0 & 0 & 1 & 0  \\
0 & 0 & 0 & 0 & 0 & 0 & -1 & 0 \end{pmatrix}, \quad
h = \begin{pmatrix}
1/2 \\
-1/2 \\
0 \\
0 \\
0 \\
0 \\
0 \\
0 \\
1 \\
-1 \\
1 \\
-1 \\
0 \\
0 \\
0 \\
0  \end{pmatrix}
\end{equation*}

Notice that matrix $\bm{W}$ remain the same but the vector $\bm{h}$ change.\\

 For the case of \textbf{Distributionally Robust Inspection Game} we must execute similar work.\\

More specifically in the experiments of Chapter 5 , the uncertain parameters of the payoff matrix are: $ \tilde{g} \in [8,12], \tilde{v} \in [16,24], \tilde{h} \in [4,6]$ and $w=15$.

Using same justification as in Distributionally Robust Free Rider Game the ambiguity set of this game can be transformed to the general form ambiguity set only if matrix $\bm{W}$ and vectors $\bm{h}$ and $\bm{m}$ of the general form take the following values:\\
\begin{equation*}
W = \begin{pmatrix}
1 & 0 & 0 & 0 & 0 & 0 & 0 & 0  \\
-1 & 0 & 0 & 0 & 0 & 0 & 0 & 0  \\
0 & 1 & 0 & 0 & 0 & 0 & 0 & 0  \\
0 & -1 & 0 & 0 & 0 & 0 & 0 & 0  \\
0 & 0 & 1 & 0 & 0 & 0 & 0 & 0  \\
0 & 0 & -1 & 0 & 0 & 0 & 0 & 0  \\
0 & 0 & 0 & 1 & 0 & 0 & 0 & 0  \\
0 & 0 & 0 & -1 & 0 & 0 & 0 & 0  \\
0 & 0 & 0 & 0 & 1 & 0 & 0 & 0  \\
0 & 0 & 0 & 0 & -1 & 0 & 0 & 0  \\
0 & 0 & 0 & 0 & 0 & 1 & 0 & 0  \\
0 & 0 & 0 & 0 & 0 & -1 & 0 & 0  \\
0 & 0 & 0 & 0 & 0 & 0 & 1 & 0  \\
0 & 0 & 0 & 0 & 0 & 0 & -1 & 0  \\
0 & 0 & 0 & 0 & 0 & 0 & 0 & 1  \\
0 & 0 & 0 & 0 & 0 & 0 & 0 & -1 \end{pmatrix}, \quad
h = \begin{pmatrix}
0 \\
0 \\
-4 \\
6\\
15 \\
-15\\
-15 \\
15 \\
7 \\
-3 \\
5 \\
5 \\
7 \\
-3 \\
9 \\
-1  \end{pmatrix} , \quad
m_1 = \begin{pmatrix}
0 \\
-5 \\
15 \\
-15 \\
5 \\
0 \\
5 \\
5  \end{pmatrix} , \quad
m_2 = \begin{pmatrix}
0 \\
-5 \\
15 \\
-15 \\
6 \\
-3 \\
6 \\
2  \end{pmatrix}  
\end{equation*}

The value of $\bm{m_1}$ denote the nominal version of the $vec(\bm{P})$. That is, it takes the value of $vec(\bm{P})$ when the uncertain parameters $\tilde{g},\tilde{v}$ and $\tilde{h}$ are equal to the mid points of their intervals ($\tilde{g}=10, \tilde{v}=20, \tilde{h}=5$ and $w=15$). Moreover, $\bm{m_2}$ is the vector that is created if we assume that the parameters of the distributionally robust inspection game are equal to $g=9, v=17 h=5$ and$ w=15$. In order to avoid emptiness of the ambiguity set, both  $\bm{m_1}$ and $\bm{m_2}$ must belong in the uncertainty set that matrix $\bm{W}$ and vector $\bm{h}$ create.\\

Again for the special case of support single point (in our experiment when h=5,g=9,v=17 and w=15), the uncertainty set must be singleton. Therefore, matrix $\bm{W}$ and vector $\bm{h}$ must change and take the following values:  \\

\begin{equation*}
W = \begin{pmatrix}
1 & 0 & 0 & 0 & 0 & 0 & 0 & 0  \\
-1 & 0 & 0 & 0 & 0 & 0 & 0 & 0  \\
0 & 1 & 0 & 0 & 0 & 0 & 0 & 0  \\
0 & -1 & 0 & 0 & 0 & 0 & 0 & 0  \\
0 & 0 & 1 & 0 & 0 & 0 & 0 & 0  \\
0 & 0 & -1 & 0 & 0 & 0 & 0 & 0  \\
0 & 0 & 0 & 1 & 0 & 0 & 0 & 0  \\
0 & 0 & 0 & -1 & 0 & 0 & 0 & 0  \\
0 & 0 & 0 & 0 & 1 & 0 & 0 & 0  \\
0 & 0 & 0 & 0 & -1 & 0 & 0 & 0  \\
0 & 0 & 0 & 0 & 0 & 1 & 0 & 0  \\
0 & 0 & 0 & 0 & 0 & -1 & 0 & 0  \\
0 & 0 & 0 & 0 & 0 & 0 & 1 & 0  \\
0 & 0 & 0 & 0 & 0 & 0 & -1 & 0  \\
0 & 0 & 0 & 0 & 0 & 0 & 0 & 1  \\
0 & 0 & 0 & 0 & 0 & 0 & 0 & -1 \end{pmatrix}, \quad
h = \begin{pmatrix}
0 \\
0 \\
-5 \\
5\\
15 \\
-15\\
-15\\
15 \\
6\\
-6 \\
-3 \\
3 \\
6 \\
-6 \\
2 \\
-2  \end{pmatrix}
\end{equation*}

\end{document}